\providecommand{\tabularnewline}{\\}
\theoremstyle{definition}
\newtheorem{example}{\protect\examplename}[section]
\theoremstyle{definition}
\newtheorem{defn}{\protect\definitionname}[section]
\theoremstyle{plain}
\newtheorem{assumption}{\protect\assumptionname}
\theoremstyle{plain}
\newtheorem{thm}{\protect\theoremname}[section]
\theoremstyle{plain}
\newtheorem{prop}{\protect\propositionname}[section]
\theoremstyle{plain}
\newtheorem{cor}{\protect\corollaryname}[section]
\theoremstyle{plain}
\newtheorem{lem}{\protect\lemmaname}[section]
\theoremstyle{remark}
\newtheorem{claim}{\protect\claimname}[section]
\newcommand{\gmistake}{\widehat{g}_{\alpha}}
\numberwithin{assumption}{section}
\numberwithin{figure}{section}
\numberwithin{table}{section}
\providecommand{\assumptionname}{Assumption}
\providecommand{\claimname}{Claim}
\providecommand{\corollaryname}{Corollary}
\providecommand{\definitionname}{Definition}
\providecommand{\examplename}{Example}
\providecommand{\lemmaname}{Lemma}
\providecommand{\propositionname}{Proposition}
\providecommand{\theoremname}{Theorem}
\newcommand{\E}{\ensuremath{\mathbb{E}}}
\renewcommand{\Pr}{\operatorname{Pr}}
\begin{document}
\title{Empirical Welfare Maximization with Constraints\thanks{{\footnotesize{}I am grateful to Alberto Abadie, Isaiah Andrews, Amy
Finkelstein and Anna Mikusheva for their guidance and support. I thank
Ben Deaner, Chishio Furukawa, Jasmin Fliegner, Jon Gruber, Yunan Ji,
Joseph Kibe, Sylvia Klosin, Kevin Kainan Li, \'{A}ureo de Paula, James Poterba, David Preinerstorfer,  Emma Rackstraw,
Jonathan Roth, Dan Sichel, and Cory Smith for helpful discussions.
The paper also benefited from the comments of conference and seminar audiences. 
 I acknowledge generous support from the Jerry
A. Hausman Graduate Dissertation Fellowship and  from the Economic and Social Research Council (new investigator grant UKRI607) . %
}}}
\author{Liyang Sun\thanks{UCL and CEMFI. Email: \protect\href{mailto:liyang.sun@ucl.ac.uk}{liyang.sun@ucl.ac.uk}}}
\date{November, 2025\vspace{-0.3in}}
\maketitle
\begin{abstract}
Empirical Welfare Maximization (EWM) is a framework that can be used to select welfare program eligibility policies based on data.  This paper extends EWM by allowing for uncertainty in estimating the budget needed to implement the selected policy, in addition to its welfare. Due to the additional estimation error, I show there exist no rules that achieve the highest welfare possible while satisfying a budget constraint uniformly over a wide range of DGPs. This differs from the setting without a budget constraint where uniformity is achievable.  I propose an alternative trade-off rule and illustrate it with Medicaid expansion, a setting with imperfect take-up and varying program costs.
\end{abstract}
\begin{onehalfspace}
\noindent \textit{Keywords:} empirical welfare maximization, heterogeneous
treatment effects and costs, cost-benefit analysis, local asymptotics
\end{onehalfspace}

JEL classification codes: C14, C44, C52

\noindent \newpage{}

\section{Introduction}

When a welfare program induces varying benefits across individuals,
and when resources are scarce, policymakers naturally want to prioritize
eligibility to individuals who will benefit the most. Based on experimental
data, cost-benefit analysis can inform policymakers on which subpopulations
to prioritize, but these subpopulations might not align with any available
eligibility policy such as an income threshold. \citet{kitagawa_who_2018}
propose a statistical rule, Empirical Welfare Maximization (EWM),
that can directly select an eligibility policy from a set of available
policies based on the experimental data. For example, if available
policies take the form of income thresholds, EWM considers the problem
of maximizing the expected benefits in the population
\[
\max_{t\leq\overline{t}}\ \E[\text{benefit}\cdot\mathbf{1}\{\text{income}\leq t\}]
\]
 and approximates the optimal threshold based on benefits estimated
from experimental data. Recent work has demonstrated that the EWM approach performs well across a broad range of data distributions. As the sample size grows, the average of benefits obtained under the eligibility policy selected by EWM converges to the highest attainable level, a property I refer to as as\emph{ uniform asymptotic welfare efficiency}.

I follow this line of work in focusing on settings where the eligibility policy for welfare programs must be determined ex ante and cannot be easily adjusted during implementation, such as Food Stamp and Medicaid in the United States.\footnote{There are also settings in which eligibility is implemented sequentially until the budget is exhausted, such as anti-poverty programs in Chile as analyzed by \cite{carneiro2019tackling}. My analysis does not extend to those cases, as it does not account for the possibility that the benefits and costs may vary depending on the order in which individuals enroll.} In practice policymakers often face budget constraints, but only have
imperfect information about whether a given eligibility policy
satisfies the budget constraint. First, there may be imperfect take-up:
eligible individuals might not participate in the welfare program,
resulting in zero cost to the government, e.g. \citet{finkelstein_take-up_2019}.
Second, costs incurred by eligible individuals who participate in
the welfare program may vary considerably, largely driven by individuals'
different needs but also many other factors, e.g. \citet{finkelstein_adjusting_2017}.
Both considerations are hard to predict ex ante, implying that the
potential cost of providing eligibility to any given individual is
unknown at the time of designing the eligibility policy. Unobservability
of the potential cost requires estimation based on experimental data,
contributing to uncertainty in the budget estimate of a given eligibility
policy.%

For this empirically relevant setting where the budget needed to implement an eligibility policy involves an unknown cost, this paper introduces a new property of statistical rules, namely \emph{asymptotic feasibility}. Policies are \emph{feasible} if they  satisfy the budget constraint in the target population; otherwise, they are referred to as \emph{infeasible}.  A statistical rule is \emph{asymptotically feasible} if given a large enough experimental sample, the statistical rule is very likely to select feasible eligibility policies. While budget overruns are generally tolerated in countercyclical programs like unemployment insurance and Medicaid in the United States, in other settings, such as subsidized health and education programs in developing countries, a potential budget overrun due to an underestimated budget can be highly undesirable for policymakers, as securing additional funding may be difficult.\footnote{For example, international and national funding for malaria control has fallen short of what is estimated to be needed in recent years. Motivated by the fact that Kenya can afford to distribute bed net subsidies to only 50\% of its target population in 2007, \cite{bhattacharya_inferring_2012} analyzed the constrained optimal allocation to increase take-up.} In this context, ensuring \emph{asymptotic feasibility} is particularly desirable, especially for policymakers who are highly risk-averse to budget overruns. %

This paper answers three questions in the current setting with unknown cost: is there any statistical rule that achieves uniformly good performance for a wide range of data distributions in terms of both welfare efficiency and feasibility, whether the obvious extension of the existing EWM statistical rule remains attractive, and what are some alternative statistical rules.

Firstly as a novel theoretical contribution, I quantify a class of reasonable data distributions that is particularly challenging for statistical rules.  Specifically, no statistical rule can be uniformly welfare-efficient and feasible simultaneously over this class of data distributions. A notable example within this class is the expansion of welfare programs, such as tax credits to incentivize labor force participation, which can ``pay for themselves,'' as demonstrated in \cite{hendren_unified_2020}, because they have zero net cost to the government. This impossibility result provides theoretical characterization of constrained settings that need to be ruled out for any statistical rule to achieve uniformity. %

Second, I show the direct extension of the existing EWM statistical rule is not appealing in the setting with unknown cost. The reason is that this \emph{sample-analog rule} ignores the estimation error in the estimated budget needed to implement a given policy, which has non-negligible consequences even when the sample size is large. For data distributions mentioned earlier where the budget constraint is exactly binding, the welfare loss does not vanish with sample size. %
The probability of selecting infeasible policies also does not vanish with sample size. Intuitively, this latter issue can be mitigated by using a slightly downward-biased version of the budget constraint, effectively making the budget estimate more conservative.\footnote{Similar intuition arises in optimal prediction under asymmetric loss \citep{christoffersen1997optimal}, where positive prediction errors are more costly than negative ones, making a downward-biased predictor optimal.} I formalize this idea by setting the degree of conservativeness proportional to the standard error of the budget estimate, effectively selecting only policies with an upper confidence bound below the budget constraint. This modification to the \emph{sample-analog rule} ensures that feasible policies are selected with high probability. %

So far, the discussion assumes that even minor budget violations are unacceptable. However, in practice, exceeding the budget constraint may be desirable if the welfare gains outweigh the borrowing costs.\footnote{What happens when a welfare program goes over budget is highly context-specific, and there are various ways to model it. In this paper, I focus specifically on the scenario where the government borrows money to cover the excess cost while other models-such as those involving rationing as in \cite{kitagawa_who_2018}  and  poorer services are certainly possible. } To account for this situation, I introduce a new objective function that  maximizes welfare gains but imposes some penalty once spending exceeds the budget, thereby giving the new objective function the intuitive interpretation of a \emph{trade-off} function.  I propose the \emph{trade-off rule}, which optimizes the sample-analog version of the trade-off function. I show the trade-off rule is uniformly asymptotically welfare efficient.

To illustrate the trade-off rule, I apply it to data from the Oregon Health Insurance Experiment
(OHIE)  to select 
a more flexible Medicaid expansion policy than the current one. Medicaid is a government-sponsored
health insurance program intended for the low-income population in
the United States. The current Medicaid expansion policy determines eligibility solely based on household income.  I examine whether health can be improved by allowing the income threshold to vary by the number of children in the household, setting the budget constraint equal to the cost of the current policy. Imposing a reasonable penalty for exceeding the budget in this context, that any overrun needs to be repaid in full, the trade-off rule selects an eligibility policy that expands eligibility
for many households above the current threshold, especially those with
children. This occurs because, based on the OHIE data,
 the additional
health benefit from extending eligibility to these households outweigh the  penalty of having to repay the overrun in full. 

The rest of the paper proceeds as follows. Section 1.1 discusses related
work in more detail. Section 2 presents  theoretical results. Sections 3 and 4 discuss properties of two statistical rules, illustrated by an empirical example of designing a more flexible Medicaid expansion policy for the low-income population in Oregon. 
Section 5 conducts a simulation study and Section 6 concludes.
Proofs, supporting lemmas, additional results and computational details can be found in the Appendix.

\subsection{Literature review}

This paper is related to the traditional literature on cost-benefit
analysis, e.g. \citet{dhailiwal_comparative_2013}, and to the recent
literature on EWM, e.g. \citet{kitagawa_who_2018}, \citet{rai_statistical_2019},
\citet{athey_policy_2021} and \citet{mbakop_model_2021}. More broadly,
this paper contributes to a growing literature on statistical rules
in econometrics, including \citet{manski_statistical_2004}, \citet{dehejia_program_2005},
\citet{hirano_asymptotics_2009}, \citet{stoye_minimax_2009}, \citet{chamberlain_bayesian_2011},
\citet{bhattacharya_inferring_2012}, \citet{demirer_semi-parametric_2019},
\citet{yata_optimal_2021} and \citet{kitagawa_treatment_2022}, among
others.

The traditional cost-benefit analysis compares the cost and benefit
of a given welfare program. The effect of program eligibility is first
estimated based on a randomized control trial (RCT), and then converted
to a monetary benefit for calculating the cost-benefit ratio. For
example, \citet{gelber_effects_2016} and \citet{heller_thinking_2017}
compare the efficiency of various crime prevention programs based
on their cost-benefit ratios. However, the cost-benefit ratio is only
informative for whether this welfare program should be implemented
with the fixed eligibility policy as implemented in the RCT. 

The literature on statistical rules in econometrics has also developed
a definition for optimality of statistical rules. \citet{manski_statistical_2004}
considers the regret, defined to be loss in expected welfare
achieved by the statistical rule relative to the welfare achieved
by the theoretically optimal eligibility policy. In the absence
of any constraint, under the theoretically optimal eligibility policy,
anyone with positive benefit from the welfare program would be assigned
with eligibility. The minimax regret rule minimizes the upper bound on the regret
that results from not knowing the data distribution. %
\citet{stoye_minimax_2009}
shows that with continuous covariates and no functional form restrictions
on the set of policies, minimax regret does not converge to zero with
the sample size because the theoretically optimal policy can be
too difficult to approximate by a statistical rule. \citet{kitagawa_who_2018}
avoid this issue by imposing functional form restrictions. They propose
the EWM rule, which starts with functional form restrictions on the
class of available policies, and then selects the policy with the
highest estimated benefit (empirical welfare) based on an RCT sample.
They prove the optimality of EWM in the sense that its regret converges
to zero at the minimax rate. Importantly, the regret is defined to
be loss in expected welfare relative to the maximum achievable welfare
in the constrained class, which avoids the negative results of \citet{stoye_minimax_2009}.
\citet{athey_policy_2021} propose doubly-robust estimation of the
average benefit, which leads to an optimal rule even with quasi-experimental
data. \citet{mbakop_model_2021} propose a Penalized Welfare Maximization
 rule which relaxes restrictions of the policy class. 

The existing EWM literature has not addressed budget constraints with
an unknown cost. \citet{kitagawa_who_2018} consider a capacity constraint,
which they enforce using random rationing. Random rationing is not
ideal as it uses the limited resource less efficiently than accounting
for the cost of providing the welfare program to an individual. When
there is no restriction on the functional form of the eligibility
policy, \citet{bhattacharya_inferring_2012} demonstrate that given
a capacity constraint, the optimal eligibility policy is based
on a threshold on the benefit of the welfare program to an individual.
When the cost of providing the welfare program to an individual is
heterogeneous, however, budget constraints can be more complicated
than capacity constraints, and require estimation. \citet{carneiro_optimal_2020}
considers the optimal choice of covariate collection in order to maximize
the precision of the estimation for the average treatment effect.
While they also consider a constrained decision problem, the budget
constraint can be verified directly. They also allow for a more complicate
trade-off between additional covariates and additional observations.
\citet{sun_treatment_2021} propose a framework for estimating the
optimal rule under a budget constraint when there is no functional
form restriction. The main contribution of this paper is to characterize theoretical properties of statistical rules when allowing both functional form restrictions
and budget constraints with an unknown cost. 

As explained in a prior version of this paper \citep{sun2021empiricalwelfaremaximizationconstraints}, the current setting of EWM with constraint  shares the same mathematical structure as another important setting:  fairness constraints across sensitive subgroups. \cite{viviano2024fair} cast welfare of sensitive subgroups as multiple objective functions for policymakers, and solve the constrained optimal policy via the Pareto frontier. \cite{kock2024regularizing} consider a penalized objective function that penalizes violations to the constraint.  The trade-off rule proposed in this paper takes a similar form by penalizing budget overrun, but is tailored to linear objectives and constraints.
\section{\label{sec:Motivations-and-setup}Theoretical results}

\subsection{Motivation and setup}

I begin by setting up a general constrained optimization problem,
which depends on the following attributes of an individual:
\begin{equation*}
A=(\tau,C,X)\in\mathcal{A}\subseteq\mathbb{R}^{2+p}.\label{eq:population attributes}
\end{equation*}
Here $\tau$ is benefit from the treatment for the individual, $C$ is the cost
to the policymaker of providing the individual with the treatment, and $X\in\mathcal{X}\subset\mathbb{R}^{p}$
denotes their $p$-dimensional characteristics. The individual belongs
to a population that can be characterized by the joint distribution
$P$ on the attributes $A$. The unknown distribution $P\in\mathcal{P}$
is from a class of distributions $\mathcal{P}$.

A policy $g(X)\in\{0,1\}$ determines the treatment status for an
individual with observed characteristics $X$, where 1 is treatment
and 0 is no treatment. Let $\mathcal{G}$ denote the class of policies
policymakers can choose from. The  optimization problem is to find
a policy with maximal benefit while subject to a constraint on its
cost:\footnote{Following \citet{kitagawa_who_2018}, I implicitly assume the maximizer
exists in $\mathcal{G}$ with the notation in (\ref{eq:population primal}). } 
\begin{align}
\max_{g\in\mathcal{G}}\E_{P}[\tau\cdot g(X)]\text{ subject to }\E_{P}[C\cdot g(X)] & \leq k.\label{eq:population primal}
\end{align}
 If the policymaker does not have a fixed budget but still wants to
account for cost, the scalar $\tau$ can be the difference in benefit
and cost. I impose a harsh constraint at known $k$ to model a fixed budget. 

The benefit-cost attributes $(\tau,C)$ of any given individual
may be unobserved in practice. 
The focus of this paper is the setting
where policymakers can construct their estimates $(\tau_{i}^{\ast},C_{i}^{\ast})$
in a random sample of sample size $n$ along with the characteristics
$X_{i}$ from an experiment or quasi-experiment that satisfy Assumption
\ref{cond:known ps Donsker}, as discussed later.

Applying the Law of Iterated Expectation, the constrained optimization
problem (\ref{eq:population primal}) can be written as
\begin{equation*}
\max_{g\in\mathcal{G}}\E_{P}[\E_{P}[\tau\mid X]\cdot g(X)]\text{ subject to }\E_{P}[\E_{P}[C\mid X]\cdot g(X)]\leq k.
\end{equation*}
When the eligibility policy can be based on any characteristics
whatsoever, the class of available policies is unrestricted i.e. $\mathcal{G}=2^{\mathcal{X}}$.
In this unrestricted class, when the cost is non-negative, the above
expression makes clear that the optimal eligibility policy is based
on thresholding by the benefit-cost ratio $\E_{P}[\tau\mid X]/\E_{P}[C\mid X]$
where the numerator and the denominator are respectively the average
effects conditional on the observed characteristics (CATE) and the
conditional average resource required. Appendix \ref{sec:Primitive-assumptions-and}
provides a formal statement. 

Given a random sample, to approximate the optimal eligibility policy
$g_{P}^{\ast}$, one  can estimate the benefit-cost ratio based on
the estimated CATE and the estimated conditional average resource
required. The resulting statistical rule selects eligibility policies
that are thresholds based on the estimated benefit-cost ratio. The
challenge is that the selected eligibility policy can be hard to
implement when the estimated benefit-cost ratio is a complicated function
of $X$. Restrictions on the policy class $\mathcal{G}$ address
this issue. A common restriction is to consider thresholds based directly
on $X$, e.g. assigning eligibility when an individual's income is
below a certain value. 

Restrictions on the policy class $\mathcal{G}$ mean that there
might not be closed-form solutions to the population problem (\ref{eq:population primal}).
In particular, the constrained optimal eligibility policy $g_{P}^{\ast}$
might not be an explicit function of the CATE and the conditional
average resource required. Therefore it might be difficult to directly
approximate $g_{P}^{\ast}$ based on the estimated CATE and the estimated
conditional average resource required. However, this is not an obstacle
to deriving guarantees for the statistical  rules. As I demonstrate later,
the derivation does not require the knowledge of the functional form
of the constrained optimal eligibility policy $g_{P}^{\ast}$. 

I next specialize the constrained optimization problem to selecting
eligibility policy for welfare programs with the example of Medicaid
expansion. In the example of implementing welfare programs, policies
take the form of eligibility policies. I restrict attention to non-randomized
policies as in the leading example of welfare programs, deterministic
policies such as income thresholds are more relevant.  Theoretically oriented readers may
proceed directly to Section \ref{subsec:Implications-of-Data-dependent}.
\begin{example}
\textbf{Welfare program eligibility policy under budget constraint\label{exa:Welfare-program-eligibility} }
\end{example}
Suppose the government wants to implement some welfare program. The
treatment in this example is eligibility for such welfare program.
Due to a limited budget, the government cannot make eligibility universal
and can only provide eligibility to a subpopulation. To use the budget
efficiently, policymakers consider the constrained optimization problem
(\ref{eq:population primal}). In this example, the policy $g(X)$
assigns an individual to eligibility based on their observed characteristics
$X$, and is usually referred to as an eligibility policy. I denote
$\tau$ to be the benefit experienced by an individual after receiving
eligibility for the welfare program. Specifically, let $(Y_{1},Y_{0})$
denote the potential outcomes that would have been observed if an
individual were assigned with and without eligibility, respectively.
The benefit from eligibility policy is therefore defined as $\tau\coloneqq Y_{1}-Y_{0}$.
Note that maximizing benefit is equivalent to maximizing the outcomes
(welfare) under the utilitarian social welfare function: $\E_{P}[Y_{1}\cdot g(X)+Y_{0}(1-g(X))]$.
I denote $C$ to be the potential cost from providing an individual
with eligibility for the welfare program. Both $\tau$ and $C$
are unobserved at the time of assignment and will need to be estimated. 

Policymakers might be interested in multiple outcomes for an in-kind
transfer program. \citet{hendren_unified_2020} capture benefits by
the willingness to pay (WTP). Assuming eligible individuals make optimal
choices across multiple outcomes, the envelope theorem allows policymakers
to focus on benefit in terms of one particular outcome. 

\paragraph{Medicaid Expansion }

Medicaid is a government-sponsored health insurance program intended
for the low-income population in the United States. Up till 2011,
many states provided Medicaid eligibility to able-bodied adults with
income up to 100\% of the federal poverty level. The 2011 Affordable
Care Act (ACA) provided resources for states to expand Medicaid eligibility
for all adults with income up to 138\% of the federal poverty level
starting in 2014. 

Suppose policymakers want to maximize the health benefit of Medicaid
by adopting a more flexible expansion policy. Specifically, they relax the uniform income threshold of
138\% and allow the income thresholds to vary with the number of
children in the household. Once the income thresholds are set, they must be codified in state legislation and publicly announced.\footnote{The legislated eligibility policy is publicly available on federal websites such as \href{https://www.macpac.gov/medicaid-101/eligibility/}{MACPAC}.} Therefore, in this example, it is reasonable to assume that the eligibility policy must be determined ex ante. 

The policy class in this example includes income thresholds that
can vary with the number of children in the household:
\begin{equation}\label{eq:thresholds}
\mathcal{G}=\left\{ g(x)=\begin{cases}
\mathbf{1}\{\text{income}\leq\beta_{0}\} & ,\ \text{numchild}=0\\
\vdots\\
\mathbf{1}\{\text{income}\leq\beta_{j}\} & ,\ \text{numchild}=j
\end{cases}\right\} 
\end{equation}
for characteristics $x=(\text{income},\ \text{numchild})$ and $\ \beta_{j}\geq0$.

Convincing policymakers to adopt a more flexible expansion policy as in~\eqref{eq:thresholds} may still require specifying a clear budget target to ensure that the new policy does not exceed the expenditure level of the current expansion with the uniform income threshold of 138\%. Given that Medicaid is a countercyclical program and has some flexibility to accommodate budget overruns, later in Section~\ref{sec:Trade-off}, I develop a new rule to allow explicit trade-off between welfare gains from additional
spending and a penalty for budget overruns.

Correspondingly, the constrained optimization problem (\ref{eq:population primal})
sets $\tau$ to be the health benefit from Medicaid, $C$ to be
the cost to Medicaid, and the appropriate threshold $k$ to be the average cost to Medicaid under the current expansion policy with the uniform income threshold of 138\%. The
characteristics $X$ include both income and number of children in
the household.

\subsection{\label{subsec:Implications-of-Data-dependent}Desirable properties
for statistical rules}

To simplify the notation, I define the \emph{welfare} function and
the \emph{budget} function:
\begin{equation*}
W(g;P)=\E_{P}[\tau\cdot g(X)],\ B(g;P)=\E_{P}[C\cdot g(X)].\label{eq:population fcn}
\end{equation*}
and the constrained optimal policy $g_{P}^{\ast}$ is therefore the solution to \begin{align*}
\max_{g\in\mathcal{G}}W(g;P)\text{ subject to }B(g;P) & \leq k. 
\end{align*}
As explained in Example \ref{exa:Welfare-program-eligibility} from
Section \ref{sec:Motivations-and-setup}, under a utilitarian social
welfare function, maximizing the benefit with respect to eligibility
 policy is equivalent to maximizing the welfare, which is why I
refer to $W(g;P)$ as the welfare function. The welfare function and
the budget function are both deterministic functions from $\mathcal{G}\rightarrow\mathbb{R}$.
The index by the distribution $P$ highlights that welfare and budget
of policy $g$ vary with $P$, and in particular, whether a policy
$g$ satisfies the budget constraint depends on which distribution
$P$ is of interest.

When the benefit-cost attributes $(\tau,C)$ are unobserved and
the distribution $P$ is unknown, both the welfare function and the
budget function are unknown functions. Denote by $\widehat{g}$ a
statistical rule that selects an eligibility policy after observing
some experimental data of sample size $n$ distributed according to
$P^{n}$. This section provides formal definitions for two desirable
properties of $\widehat{g}$.
\begin{defn}
A statistical rule $\widehat{g}$ is \emph{pointwise asymptotically
welfare-efficient} under the data distribution $P$ if for any $\epsilon>0$
\begin{equation*}
\lim_{n\rightarrow\infty}\Pr_{P^{n}}\left\{ W(\widehat{g};P)-W(g_{P}^{\ast};P)<-\epsilon\right\} =0,\label{eq:uniform conv}
\end{equation*}
and \emph{uniformly asymptotically welfare-efficient} over the class
of distributions $\mathcal{P}$ if for any $\epsilon>0$
\[
\lim_{n\rightarrow\infty}\sup_{P\in\mathcal{P}}\Pr_{P^{n}}\left\{ W_{}(\widehat{g};P)-W(g_{P}^{\ast};P)<-\epsilon\right\} =0.
\]
 A statistical rule is\emph{ pointwise asymptotically feasible }under
the data distribution $P$ if 
\[
\lim_{n\rightarrow\infty}\Pr_{P^{n}}\{B(\widehat{g};P)>k\}=0,
\]
and \emph{uniformly asymptotically} \emph{feasible} over the class
of distributions $\mathcal{P}$ if
\begin{equation*}
\lim_{n\rightarrow\infty}\sup_{P\in\mathcal{P}}\Pr_{P^{n}}\{B(\widehat{g};P)>k\}=0.\label{eq: uniform mistake}
\end{equation*}
$\ensuremath{\blacksquare}$
\end{defn}
The above two properties build on the existing EWM literature. For
the first property, the current EWM literature evaluates statistical
rules by whether they attain at least $W(g_{P}^{\ast};P)$ in expectation over repeated sample draws as $n\rightarrow\infty$.
Instead, I focus on convergence in probability,
where the probability of $\widehat{g}$ selecting eligibility policies
that achieve strictly lower welfare than $g_{P}^{\ast}$ approaches
zero as $n\rightarrow\infty$. %
In the setting of the existing EWM literature, the constrained optimal
 policy $g_{P}^{\ast}$ is also the unconstrained optimal policy
$\mathcal{G}$. Therefore it is impossible for any statistical rule
$\widehat{g}$ to select a policy that achieves higher value than
$g_{P}^{\ast}$. In my setting, however, the constrained optimal policy
$g_{P}^{\ast}$ is not necessarily the unconstrained optimal policy.
Thus, I allow the statistical rule $\widehat{g}$ to select a policy
that achieves higher welfare than $g_{P}^{\ast}$ for all data distributions,
albeit at the cost of violating the budget constraint. 

The second property is new to the EWM literature. It imposes that
given a large enough sample size, the statistical rule $\widehat{g}$
is unlikely to select infeasible eligibility policies that violate
the budget constraint, so that it is ``asymptotically feasible''. %
Asymptotic feasibility of statistical rules is specific to the current
setting where the budget constraint involves unknown cost. Exactly satisfying a fixed budget constraint without the smallest violation is the most conservative way to articulate policy makers' preferences.

While both are desirable properties, the next section shows a negative
result that it is impossible for a statistical rule to satisfy both
properties when the data distribution is unknown and belongs to a
sufficiently rich class of distributions $\mathcal{P}$.

\subsection{\label{subsec:Mistakes-from-any}An impossibility result}

Uniformity is a desirable property for a statistical rule, ensuring that its performance guarantees hold uniformly over a class of DGPs, thereby providing robustness to uncertainty about the true DGP. However, in this section, I prove  an  impossibility result that no statistical rule can be both uniformly asymptotically welfare-efficient and uniformly asymptotically feasible in a sufficiently rich class of distributions $\mathcal{P}$ described in the following Assumptions \ref{cond:conv constraint}-\ref{assu:gain when exactly binding}. The intuition is that there exists a sequence $P_{h_n}$ that converges to $P_0\in \mathcal{P}$, but along the sequence, their corresponding constrained optimum $g^{\ast}_{h_n}$ does not converge to the constrained optimum $g^{\ast}_{P_0}$ under $P_0$.  Assumptions \ref{assu:exactly binding optimal} and \ref{assu:gain when exactly binding} characterize such point of discontinuity $P_0$.

\begin{assumption}
\label{cond:conv constraint}\emph{Contiguity. }There exists a distribution
$P_{0}\in\mathcal{P}$ under which the set $\mathcal{G}_{0}=\left\{ g:B(g;P_{0})=k\right\} $ of eligibility
policies satisfying the constraint exactly is non-empty.
Furthermore, the class of distributions $\mathcal{P}$ includes a
sequence of data distributions $\{P_{h_{n}}\}$ contiguous to $P_{0}$,
under which for all $g\in\mathcal{G}_{0}$, there exists some $c>0$
such that
\[
\sqrt{n}\cdot\left(B(g;P_{h_{n}})-k\right)>c.
\]
\end{assumption}
Assumption \ref{cond:conv constraint}
assumes there exists a sequence of data distributions $\{P_{h_{n}}\}$ that differ only marginally relative to $P_{0}$. Moreover, the budget function $B(g;P_{h_{n}})$, evaluated at polices that meet the budget constraint exactly under $P_{0}$, approaches $B(g;P_{0})$ from above. In Appendix \ref{subsec:Primitive-for-contiguity}, I give
more primitive assumptions under which Assumption \ref{cond:conv constraint}
is guaranteed to hold, requiring the sequence to be differentiable in quadratic mean at $P_{0}$, and along the sequence, the budget function $B(g;P_{h_n})$ is twice continuously differentiable at $P_{0}$ with positive derivatives. 
These primitive assumptions are relatively weak, and have also been considered in the literature to construct the local parametrization around $P_{0}$, e.g. \cite{hirano_asymptotics_2009}. 
\begin{assumption}
\label{assu:exactly binding optimal}\emph{Binding constraint.} Under
the data distribution $P_{0}$, the constraint is satisfied exactly
at the constrained optimum i.e. $B(g_{P_{0}}^{\ast};P_{0})=k$. 
\end{assumption}
\begin{assumption}
\label{assu:gain when exactly binding}\emph{Separation.} Under the
data distribution $P_{0}$, $\exists\epsilon>0$ such that for any
feasible policy $g$, whenever 
\[
\left|B(g;P_{0})-B(g_{P_{0}}^{\ast};P_{0})\right|>0,
\]
we have 
\[
W(g_{P_{0}}^{\ast};P_{0})-W(g;P_{0})>\epsilon.
\]
Equivalently, \emph{$W(g_{P_{0}}^{\ast};P_{0})$ is separated from
that of other feasible policies with different $B(g;P_{0})$.} 
\end{assumption}

Consider a one-dimensional policy class $\mathcal{G}$, e.g. income
thresholds. Figure \ref{fig:Discontinuous-limit-distribution} illustrates
a distribution $P_{0}$ that satisfies both Assumptions \ref{assu:exactly binding optimal}
and \ref{assu:gain when exactly binding}, while both the welfare
function $W_{}(g;P_{0})$ and the budget function $B(g;P_{0})$ are
still continuous in $g$, satisfying Assumption~\ref{cond:conv constraint}.  Importantly,  the constrained optimal
 policy $g_{P_{0}}^{\ast}$ satisfies the budget constraint exactly,
i.e. $B(g_{P_{0}}^{\ast};P_{0})=k$, but is separated from the rest
of feasible eligibility policies such that there exists a neighborhood
around $g_{P_{0}}^{\ast}$ where feasible policies can achieve welfare
gain without any effect on the budget. 

\begin{thm}
\emph{\label{thm: no uniform estimator}} Suppose Assumption \ref{cond:conv constraint}
holds for the class of data distributions $\mathcal{P}$. For $P_{0}\in\mathcal{P}$
considered in Assumption \ref{cond:conv constraint}, suppose it also
satisfies Assumptions \ref{assu:exactly binding optimal} and \ref{assu:gain when exactly binding}.
Then no statistical rule can be both uniformly asymptotically welfare-efficient
and uniformly asymptotically feasible. In particular, if a statistical
rule $\widehat{g}$ is pointwise asymptotically welfare-efficient
and pointwise asymptotically feasible under $P_{0}$, then it is not
uniformly asymptotically feasible.
\end{thm}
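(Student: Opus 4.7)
The plan is to derive a contradiction from assuming that some statistical rule $\widehat{g}$ is simultaneously pointwise asymptotically optimal and pointwise asymptotically feasible under $P_0$ while also being uniformly asymptotically feasible. Since uniform asymptotic optimality implies pointwise asymptotic optimality under $P_0$ (and similarly for feasibility), establishing this contradiction yields both parts of the theorem. The strategy is a Le Cam–type argument along the contiguous sequence $\{P_{h_n}\}$ from Assumption \ref{cond:conv constraint}: first show that under $P_0^n$ the rule $\widehat{g}$ concentrates on the exactly-binding set $\mathcal{G}_0$; then transfer this event to $P_{h_n}^n$ by contiguity; finally exploit the $1/\sqrt{n}$ separation in Assumption \ref{cond:conv constraint} to show any $g\in\mathcal{G}_0$ strictly violates the budget under $P_{h_n}$.

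The first step is a localization lemma: I claim $Pr_{P_0^n}\{\widehat{g}\in\mathcal{G}_0\}\to 1$. Fix any $\epsilon'\in(0,\epsilon)$ with $\epsilon$ from Assumption \ref{assu:gain when exactly binding}. Pointwise asymptotic feasibility gives $Pr_{P_0^n}\{B(\widehat{g};P_0)\leq k\}\to 1$, and pointwise asymptotic optimality gives $Pr_{P_0^n}\{W(\widehat{g};P_0)\geq W(g_{P_0}^{\ast};P_0)-\epsilon'\}\to 1$. A union bound on the complements shows the intersection event also has probability tending to one. On that intersection, $\widehat{g}$ is feasible and, since feasible criteria cannot exceed $W(g_{P_0}^{\ast};P_0)$, it satisfies $|W(\widehat{g};P_0)-W(g_{P_0}^{\ast};P_0)|<\epsilon$. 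Assumption \ref{assu:gain when exactly binding} then forces $B(\widehat{g};P_0)=B(g_{P_0}^{\ast};P_0)$, and Assumption \ref{assu:exactly binding optimal} pins this common value at $k$, so $\widehat{g}\in\mathcal{G}_0$.

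The second step transfers this to the contiguous alternative. By Assumption \ref{cond:conv constraint}, $P_{h_n}^n$ is contiguous to $P_0^n$, so Le Cam's first lemma implies that any sequence of events whose $P_0^n$-probability tends to one also has $P_{h_n}^n$-probability tending to one; applied to $\{\widehat{g}\in\mathcal{G}_0\}$ this gives $Pr_{P_{h_n}^n}\{\widehat{g}\in\mathcal{G}_0\}\to 1$. The third step is immediate: by Assumption \ref{cond:conv constraint}, every $g\in\mathcal{G}_0$ satisfies $B(g;P_{h_n})>k+C/\sqrt{n}>k$, so $\{\widehat{g}\in\mathcal{G}_0\}\subseteq\{B(\widehat{g};P_{h_n})>k\}$, and therefore $Pr_{P_{h_n}^n}\{B(\widehat{g};P_{h_n})>k\}\to 1$. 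Because $\sup_{P\in\mathcal{P}}Pr_{P^n}\{B(\widehat{g};P)>k\}$ is bounded below by this quantity and $P_{h_n}\in\mathcal{P}$, uniform asymptotic feasibility (\ref{eq: uniform mistake}) fails, the desired contradiction.

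The main obstacle I anticipate is the bookkeeping in the localization step: one must take the intersection of the feasibility and near-optimality events before applying Separation, and one must handle non-uniqueness of $g_{P_0}^{\ast}$ carefully since Assumption \ref{assu:gain when exactly binding} only pins down the budget value, not the criterion itself, but this suffices because membership in $\mathcal{G}_0$ is defined purely through the budget. The contiguity step is a textbook application of Le Cam's first lemma once Assumption \ref{cond:conv constraint} is in hand, and the concluding strict-infeasibility bound is immediate from the $\sqrt{n}$-separation.
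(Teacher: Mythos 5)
Your proposal is correct and follows essentially the same route as the paper's own proof: localize $\widehat{g}$ to the exactly-binding set $\mathcal{G}_{0}$ under $P_{0}^{n}$ via the intersection of the feasibility and near-optimality events together with Assumptions \ref{assu:exactly binding optimal} and \ref{assu:gain when exactly binding}, transfer that event to $P_{h_{n}}^{n}$ by contiguity, and invoke the $\sqrt{n}$-separation in Assumption \ref{cond:conv constraint} to force a budget violation. Your added remark that feasibility pins down the sign of $W(\widehat{g};P_{0})-W(g_{P_{0}}^{\ast};P_{0})$, so the one-sided optimality bound yields the absolute-value condition in the Separation assumption, is a small clarification the paper leaves implicit.
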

Figure \ref{fig:Discontinuous-limit-distribution} provides some intuition
for Theorem \ref{thm: no uniform estimator}.  Note that if a statistical rule
$\widehat{g}$ is pointwise asymptotically welfare-efficient and pointwise
asymptotically feasible under $P_{0}$, then it has to select eligibility
policies close to $g_{P_{0}}^{\ast}$ with high probability over repeated
sample draws distributed according to $P_{0}^{n}$ as $n\rightarrow\infty$. 

Under Assumption \ref{cond:conv constraint}, the class of distributions
$\mathcal{P}$ is sufficiently rich so that along a sequence of data
distributions $\{P_{h_{n}}\}$ that is contiguous to $P_{0}$ as $n\rightarrow\infty$,
the budget functions $B(g;P_{h_{n}})$ converge to $B(g;P_{0})$ while
$B(g_{P_{0}}^{\ast};P_{h_{n}})>k$, i.e. $g_{P_{0}}^{\ast}$ is not
feasible under $P_{h_{n}}$. Figure \ref{fig:Discontinuous-limit-distribution}
showcases $P_{1}$ as one distribution from this sequence. The contiguity
between $\{P_{h_{n}}\}$ and $P_{0}$ implies that the statistical
rule $\widehat{g}$ must select policies close to $g_{P_{0}}^{\ast}$
with high probability under $P_{h_{n}}^{n}$ as well. However, the
 policy $g_{P_{0}}^{\ast}$ is infeasible under $P_{h_{n}}$, and
therefore the statistical rule $\widehat{g}$ cannot be asymptotically
feasible under $P_{h_{n}}$. 

\begin{figure}[H]
\begin{centering}
\caption{\label{fig:Discontinuous-limit-distribution}Illustration for Assumptions
\ref{cond:conv constraint}-\ref{assu:gain when exactly binding}
underlying Theorem \ref{thm: no uniform estimator}}
 \includegraphics[scale=0.3]{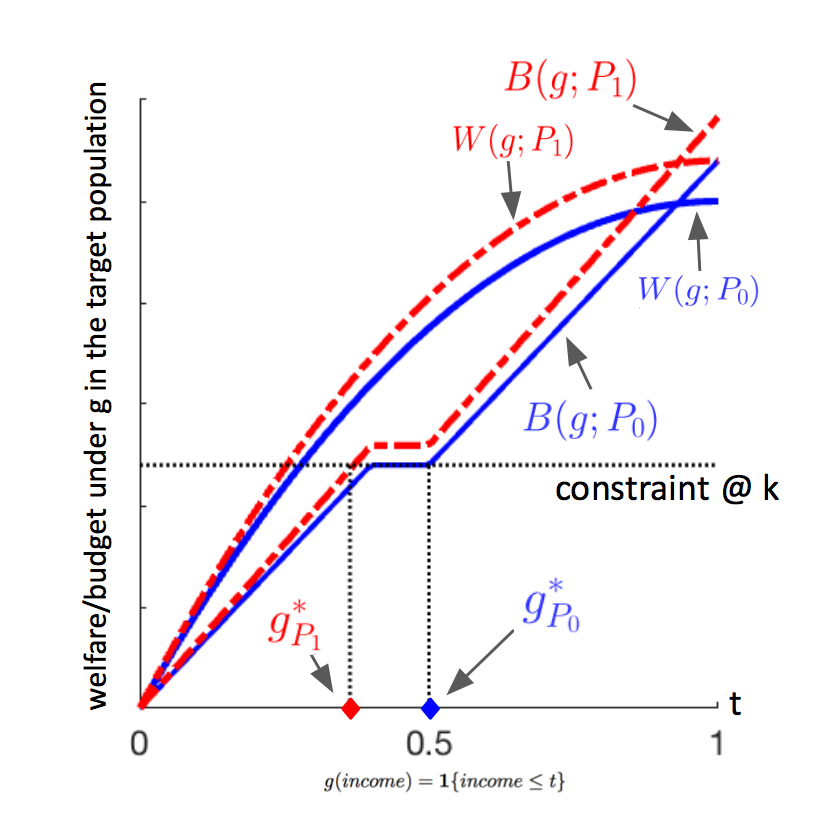}
\par\end{centering}
\emph{Notes}: This figure plots welfare functions $W_{}(g;P)$ and
budget functions $B(g;P)$ for populations distributed according to
$P_{0}$ (blue solid lines) or $P_{1}$ (red dashed lines), where
$P_{1}$ is a distribution from the sequence of distributions $\{P_{h_{n}}\}$
that is contiguous to $P_{0}$ under Assumption \ref{cond:conv constraint}.
The distribution $P_{0}$ satisfies Assumptions \ref{assu:exactly binding optimal}
and \ref{assu:gain when exactly binding}. The $x$-axis indexes a
one-dimensional policy class $\mathcal{G}=\{g:g(x)=\mathbf{1}\{x\leq t\}\}$
for a one-dimensional continuous characteristic $X_{i}$ with support
on $[0,1]$, e.g. eligibility policies based on income thresholds.
The black dotted line marks the budget threshold $k$. The bold blue
dot marks $g_{P_{0}}^{\ast}$, the constrained optimal eligibility
 policy under $P_{0}$. The bold red dot marks $g_{P_{1}}^{\ast}$,
the constrained optimal eligibility policy under $P_{1}$. 
\end{figure}

Figure~\ref{fig:Discontinuous-limit-distribution} also highlights the importance of Assumptions~\ref{assu:exactly binding optimal} and~\ref{assu:gain when exactly binding} in driving the impossibility result. Importantly, under $P_0$, the budget constraint is binding at the constrained optimal threshold (Assumption~\ref{assu:exactly binding optimal}), and increasing the eligibility threshold here  has a strictly positive impact on welfare but zero
impact on budget (Assumption~\ref{assu:gain when exactly binding}). If, instead, increasing the threshold strictly raises both welfare and budget, then Assumption~\ref{assu:gain when exactly binding} is violated. In Section~\ref{subsec:mistake-uniformity}, I demonstrate that relaxing either Assumption~\ref{assu:exactly binding optimal} or~\ref{assu:gain when exactly binding} gives rise to statistical rules that achieve uniformity within certain subclasses of DGPs.
Nonetheless, the impossibility result remains relevant for some real-world policy settings where such assumptions may approximately hold. While the nominal cost of welfare program eligibility may be positive, the Marginal Value of Public Funds (MVPF) framework emphasizes that the relevant cost is the net cost, which incorporates fiscal externalities, such as increased tax revenue if the program increases individuals' incomes. Specifically,   \citet{hendren_unified_2020} estimate fourteen welfare programs (out of 133) to have negative or zero net cost to the government, which implies these programs ``pay for themselves'', aligning with Assumptions \ref{assu:exactly binding optimal} and \ref{assu:gain when exactly binding}.  As a result, the impossibility result suggests that statistical rules designed for such cases may be highly sensitive to small changes in the underlying distributions even in large samples.\footnote{As estimated by \citet{hendren_unified_2020}, expanding Medicaid, as in  Example \ref{exa:Welfare-program-eligibility}, entails a strictly positive net cost.   The impossibility result discussed in this section therefore does not apply to this example. Instead, I use this example to illustrate a new trade-off rule proposed later in Section~\ref{sec:Trade-off}.}

\section{\label{subsec:Mistakes-from-harsh}Sample-analog
rule }

The previous negative result implies that no statistical rule can
be both uniformly asymptotically welfare-efficient and uniformly asymptotically
feasible. Thus, policymakers might want to consider statistical rules
that satisfy one of these two properties. A direct extension to the existing approach in the EWM literature,
the sample-analog rule, might be a natural candidate. In this section, I show the direct extension is neither uniformly asymptotically welfare-efficient nor uniformly asymptotically
feasible.%

I first describe the EWM approach \citep{kitagawa_who_2018} and its
direct extension. 
Since $(\tau,C)$ involves potential outcomes, they are often unobserved
and require estimation based on RCT that introduces estimation errors
in addition to sampling errors. Section \ref{subsec:Estimates-for-benefit} describes how to construct individuals' benefit and cost estimates $(\tau_{i}^{\ast},C_{i}^{\ast})$. To highlight the drawback of the direct
extension to EWM,  I first consider settings where we observe
an experimental data of sample size $n$ where $(\tau,C)$ is directly
observable, i.e. $(\tau_{i}^{\ast},C_{i}^{\ast})=(\tau_{i},C_{i})$.
The goal of the simplification is to highlight that the non-uniformity
I show below can arise from sampling errors alone. 

One can estimate the welfare function and the budget function using
their sample-analog versions:
\begin{equation}
\widehat{W}_{n}(g)\coloneqq\frac{1}{n}\sum_{i}\tau_{i}^{\ast}\cdot g(X_{i}),\ \widehat{B}_{n}(g)\coloneqq\frac{1}{n}\sum_{i}C_{i}^{\ast}\cdot g(X_{i}).\label{eq:sample analogs}
\end{equation}

A direct extension to the existing approach in the EWM literature
is a statistical rule that solves the sample version of the population
constrained optimization problem (\ref{eq:population primal}):
\begin{equation}
\widehat{g}_{\text{sample}}\in\arg\max_{\widehat{B}_{n}(g)\leq k}\ \widehat{W}_{n}(g).\label{eq:EWM}
\end{equation}
The subscript \textquotedblleft sample'' emphasizes how this approach
verifies whether a policy satisfies the constraint by comparing
the \emph{sample analog} $\widehat{B}_{n}(g)$ with $k$ directly,
i.e. imposes a sample-analog constraint. If no policy satisfies
the constraint, then I set $\widehat{g}_{\text{sample}}$ to not assign
any eligibility, i.e. $\widehat{g}_{\text{sample}}(x)=0$ for all
$x\in\mathcal{X}$. 

\subsection{Welfare inefficiency of the sample-analog rule}
A key insight from \citet{kitagawa_who_2018} is that without a constraint, the sample-analog rule is uniformly asymptotically welfare-efficient.
Unfortunately this intuition does not extend to the current setting where the constraint involves an unknown cost. 
There are common data distributions under which the amount of welfare loss does not
vanish even as the sample size gets larger.
Consider a one-dimensional
 policy class $\mathcal{G}=\{g:g(x)=\mathbf{1}\{x\leq t\}\}$, which
is based on thresholds of a one-dimensional continuous characteristic
$X$. Suppose the policymakers know benefit is positive for everyone
so that welfare function is strictly increasing, and only need to
estimate whether a given threshold satisfies a capacity constraint
due to imperfect take-up. Furthermore, suppose the experiment sample
observes take-up $C_{i}$ so that the only uncertainty arises from
sampling errors. Proposition \ref{prop: EWM no ptwise mistake} shows settings 
 with some zero take-ups satisfy Assumptions \ref{assu:exactly binding optimal} and \ref{assu:gain when exactly binding}, and the sample-analog rule is neither pointwise asymptotically welfare efficient nor pointwise asymptotically feasible.%
\begin{prop}
\label{prop: EWM no ptwise mistake} \emph{One-dimensional
threshold and imperfect take-up.} Consider a one-dimensional policy class $\mathcal{G}=\{g:g(x)=\mathbf{1}\{x\leq t\}\}$,
which includes thresholds for a one-dimensional continuous characteristic
$X$. Consider the special case where
under distribution $P$, the benefit $\tau>0$ almost surely and
the cost $C\in\{0,1\}$ is binary. Suppose $C=0$ for $X\in[\underline{t},\overline{t}]$, but $\Pr_{P}\{C=1\mid X\}\in(0,1)$ otherwise.  Suppose further the budget constraint is at $k = \E_{P} [C \cdot \mathbf{1}\{X \leq \underline{t}\}].$ Then $P$ satisfies Assumptions \ref{assu:exactly binding optimal} and \ref{assu:gain when exactly binding}. %
As $n\rightarrow\infty$, the sample-analog
rule $\widehat{g}_{\text{sample}}$ violates the budget constraint with probability approaching 50\%, and there exists an  $\epsilon>0$ such that it  incurs at least an  $\epsilon$ welfare loss with probability approaching 50\%:
\[
\Pr_{P^{n}}\left\{ W(g_{P}^{*};P)-W(\widehat{g}_{\text{sample}};P)\geq\epsilon\right\} \rightarrow50\%.
\]
\end{prop}

\begin{figure}[H]
\begin{centering}
\caption{\label{fig:sample-analog-illustration}Illustration for Proposition
\ref{prop: EWM no ptwise mistake}}
\subfloat[infeasible]{\begin{centering}
\includegraphics[scale=0.25]{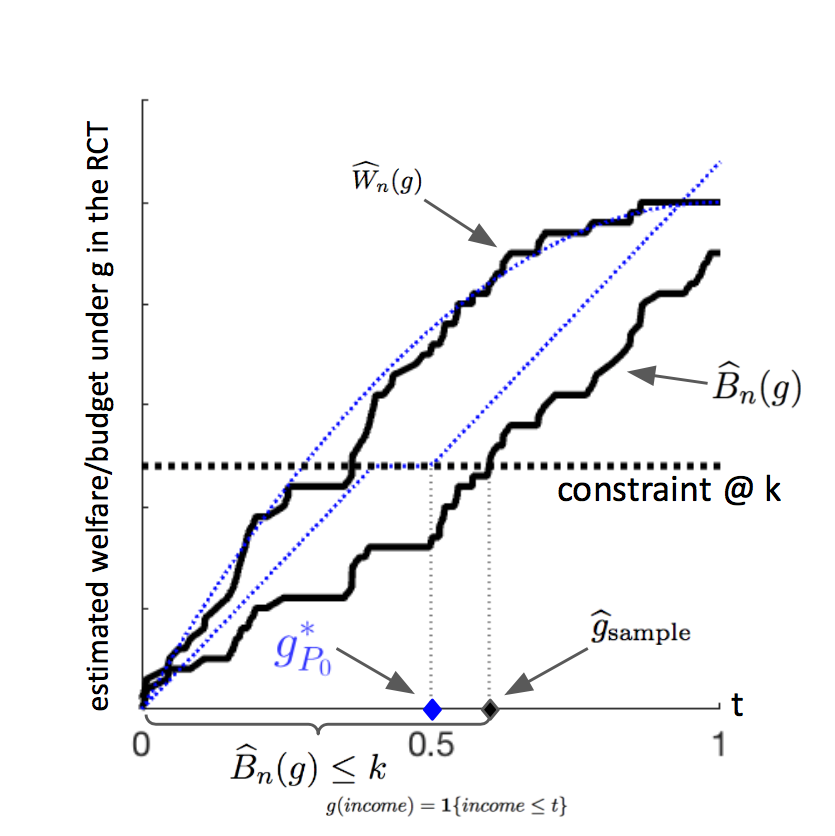}
\par\end{centering}
}\subfloat[suboptimal]{\begin{centering}
\includegraphics[scale=0.25]{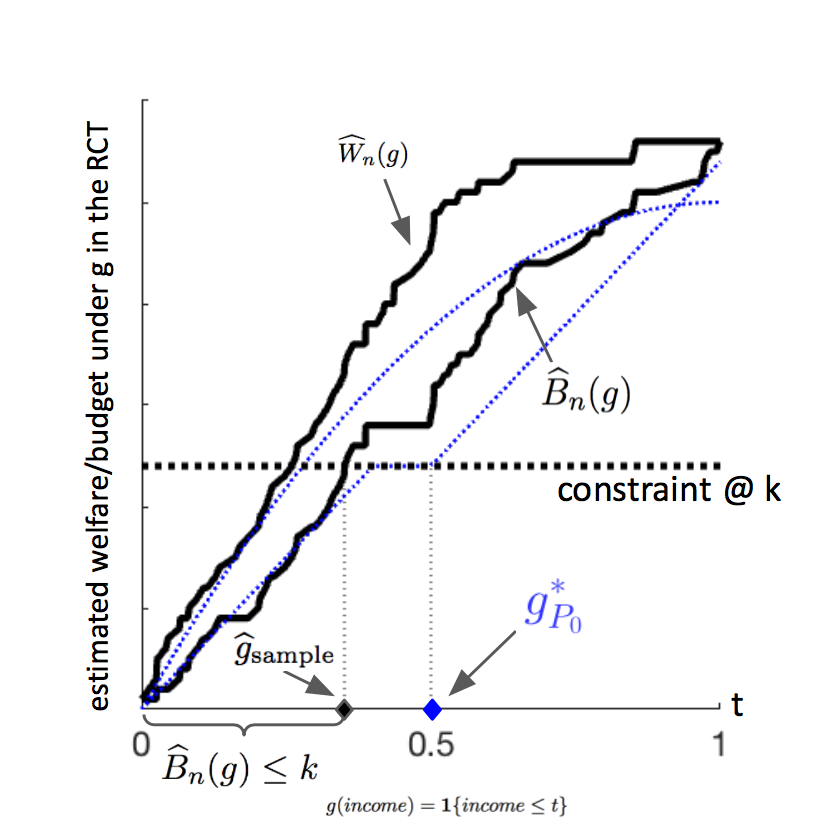}
\par\end{centering}
}
\par\end{centering}
\emph{Notes}: This figure plots the budget function $B(g;P)$ (blue
dotted line) and its sample-analogs $\widehat{B}_{n}(g)$ (black wriggly
line) based on two different observed samples in panel (a) and (b)
respectively. The $x$-axis indexes a one-dimensional policy class
$\mathcal{G}=\{g:g(x)=\mathbf{1}\{x\leq t\}\}$ for a one-dimensional
continuous characteristic $X_{i}$ with support on $[0,1]$. The black
dotted line marks the budget threshold $k$. The constrained optimal
threshold $g_{P}^{\ast}$ is $t=0.5$. The sample-analog rule $\widehat{g}_{\text{sample}}$
selects an infeasible threshold in panel (a) and selects a suboptimal
threshold in panel (b).
\end{figure}

Figure \ref{fig:sample-analog-illustration} illustrates the setup
of Proposition \ref{prop: EWM no ptwise mistake}, where the sampling
uncertainty can be particularly problematic for the sample-analog
rule. Since policymakers know the benefit is positive for everyone,
$\widehat{g}_{\text{sample}}$ takes a simple form of the highest
threshold where the sample-analog constraint is satisfied exactly.
The driving force behind the failure of $\widehat{g}_{\text{sample}}$
as described in Proposition \ref{prop: EWM no ptwise mistake} is
that due to sampling uncertainty, whether a policy satisfies the
sample-analog constraint is an imperfect measure of whether it satisfies
the constraint in the population. 

Since the sample-analog rule $\widehat{g}_{\text{sample}}$ restricts
attention to policies that satisfy the sample-analog constraint, there
is no guarantee the selected policy is actually feasible. This
is very likely to happen when there is welfare gain in exceeding the
budget constraint as in the setup of Proposition \ref{prop: EWM no ptwise mistake}
where $W(g;P)$ is strictly increasing in $g$. Therefore, the sample-analog
rule $\widehat{g}_{\text{sample}}$ is not asymptotically feasible
under $P$. As illustrated in Figure \ref{fig:sample-analog-illustration},
after observing a sample depicted in panel (a), the sample-analog
rule picks an infeasible threshold because the sample-analog constraint
is still satisfied there. However, as argued later in Corollary~\ref{prop:analog violations}, the probability of large budget violations vanishes as the sample size increases.

The more problematic case is illustrated in Figure \ref{fig:sample-analog-illustration} panel (b), where the sample-analog rule picks a suboptimal threshold because the sample-analog constraint is violated at the constrained optimum $g_{P}^{\ast}$. 
In the setup of Proposition \ref{prop: EWM no ptwise mistake}, when the sample-analog rule $\widehat{g}_{\text{sample}}$ misses $g_{P}^{\ast}$, it is guaranteed to select a suboptimal policy and therefore it is  welfare-inefficient under $P$ even asymptotically. This result relies on the assumption that the welfare function strictly increases when the budget function remains constant in the neighborhood of the budget constraint. This aligns with Assumption  \ref{assu:gain when exactly binding}, which, as discussed in Section~\ref{subsec:Mistakes-from-any}, reflect real-world scenarios where some welfare programs have zero or negative net cost to the government. I demonstrate that the same issue extends to broader contexts by employing a simulation calibrated to a real-world DGP from the OHIE in Section~\ref{sec:Simulation-study}. If instead  the budget function is strictly increasing and violates Assumption~\ref{assu:gain when exactly binding},  I show the sample-analog rule is asymptotically welfare-efficient in Proposition~\ref{prop: EWM ptwise correct} of Appendix~\ref{appx:addl theory}.

\subsection{Estimates for benefit and cost\label{subsec:Estimates-for-benefit}}

In the remainder of the paper, I present constructive results that modify the sample-analog rule and propose a new rule. To lay the groundwork, this section outlines the benefit and cost estimates.  The appropriate expressions for these estimates depend on the type
of observed data. Below I state the estimates formed based an RCT
that randomly assigns the eligibility, which is the leading case of
\citet{kitagawa_who_2018}. The observed data $\{A_{i}^{\ast}\}_{i=1}^{n}$
consists of i.i.d. observations $A_{i}^{\ast}=(Y_{i},Z_{i},D_{i},X_{i})\in\mathcal{A}^{\ast}$.
The distribution of $A_{i}^{\ast}$ is induced by the distribution
of $(Y_1,Y_0,C,X)$ as in the population, as well as the sampling design
of the RCT. Here $D_{i}$ is an indicator for being in the eligibility
arm of the RCT, $Y_{i}$ is the observed outcome and $Z_{i}$ is the
observed cost of providing eligibility to an individual participating
in the RCT. The observed cost is mechanically zero if an individual
is not randomized into the eligibility arm. The estimates for $(\tau,C)$
are 
\begin{equation}
\tau_{i}^{\ast}=\alpha(X_{i},D_{i})\cdot Y_{i},\ C_{i}^{\ast}=\frac{D_{i}}{p(X_{i})}\cdot Z_{i},\label{eq:IPW}
\end{equation}
where $\alpha(X_{i},D_{i})=\frac{D_{i}}{p(X_{i})}-\frac{1-D_{i}}{1-p(X_{i})}$
and $p(X_{i})$ is the propensity score, the probability of receiving
eligibility conditional on the observed characteristics. Since the
sampling design of an RCT is known, the propensity score is a known
function of the observed characteristics.

\begin{assumption}
\label{cond:known ps Donsker} \emph{Estimation quality. }The recentered
empirical processes $\widehat{W}_{n}(\cdot)$ and $\widehat{B}_{n}(\cdot)$
defined in (\ref{eq:sample analogs}) converge to mean-zero Gaussian
processes $G_{P}^{W}$ and $G_{P}^{B}$ uniformly over $g\in\mathcal{G}$,
with covariance functions $\Sigma_{P}^{W}(\cdot,\cdot)$ and $\Sigma_{P}^{B}(\cdot,\cdot)$
respectively:
\begin{align*}
\left\{ \sqrt{n}\cdot\left(\frac{1}{n}\sum_{i}\tau_{i}^{\ast}\cdot g(X_{i})-W(g;P)\right)\right\} _{g\in\mathcal{G}} & \rightarrow_{d}G_{P}^{W}\\
\left\{ \sqrt{n}\cdot\left(\frac{1}{n}\sum_{i}C_{i}^{\ast}\cdot g(X_{i})-B(g;P)\right)\right\} _{g\in\mathcal{G}} & \rightarrow_{d}G_{P}^{B}
\end{align*}
Moreover, the convergence holds uniformly over $P\in\mathcal{P}$.
The covariance functions are uniformly bounded, with diagonal entries
bounded away from zero uniformly over $g\in\mathcal{G}$. There is
a uniformly consistent estimator $\widehat{\Sigma}^{B}(\cdot,\cdot)$
of the covariance function $\Sigma_{P}^{B}(\cdot,\cdot)$.
\end{assumption}
Appendix \ref{subsec:Results-with-estimated} gives primitive
assumptions under which Assumption \ref{cond:known ps Donsker} is
guaranteed to hold for $\widehat{W}_{n}(\cdot)$ and $\widehat{B}_{n}(\cdot)$
constructed using an RCT such as in (\ref{eq:IPW}) or an observational
study, assuming unconfoundedness
and strong overlap. As standard in the literature, I need to restrict the complexity
of the policy class $\mathcal{G}$. The policy class of income thresholds considered in this paper is in fact a rather simple class with VC-dimension $d+1$ where $d$ is the number of different thresholds as in (\ref{eq:thresholds}). 

\subsection{\label{sec:Controlling-the-Mistake}Modification that ensures feasibility}
Let $\hat{\mathcal{G}}=\{g\in\mathcal{G}:\widehat{B}_{n}(g)\leq k\}$
denote the set of policies that $\widehat{g}_{\text{sample}}$ can
choose from, which contains policies that do not violate the sample-analog
of the budget constraint. Unsurprisingly, for any finite sample, $\hat{\mathcal{G}}$ can always contain infeasible policies, sometimes of sizable budget violations. The probability
that $\hat{\mathcal{G}}$ contains a policy that violates
the population budget constraint by a fixed amount $c$ is as follows:
\begin{equation}
\Pr_{P^{n}}\left\{ \exists g:g\in\hat{\mathcal{G}}\text{ and }B(g;P)>k+c\right\} \label{eq:sample-analog-violation}
\end{equation}

The corollary stated below shows the chance that a large amount of
budget violation of $c>0$ occurs is smaller when there is less variability
in $\widehat{B}_{n}(g)$. The chance also vanishes to zero as the
sample size gets larger. %
\begin{cor}
\label{prop:analog violations}Under Assumption \ref{cond:known ps Donsker},
an upper bound of the probability (\ref{eq:sample-analog-violation}) can be approximated by the CDF of $\underset{g\in\mathcal{G}}{\inf}\frac{G_{P}^{B}(g)}{\Sigma_{P}^{B}(g,g)^{1/2}}$
evaluated at $\frac{-\sqrt{n}\cdot c}{\max_{g\in\mathcal{G}}\Sigma^{B}(g,g)^{1/2}}$. %
\end{cor}

A simple modification to the sample-analog rule can reduce the probability of selecting infeasible policies. Instead of $\hat{\mathcal{G}}$, let $\gmistake$ choose policies from a subset $\hat{\mathcal{G}}_{\alpha}$ defined in Theorem~\ref{thm: uniform size} and maximize $\widehat{W}_n(g)$ as before. Then as a direct consequence of Theorem \ref{thm: uniform size}, with
probability at least $1-\alpha$ this modified rule is guaranteed
to not mistakenly choose infeasible eligibility policies. In practice, $\alpha$ may be set at the conventional level, e.g. 5\%. Then effectively this modification first forms a uniform upper confidence band for the costs of all the policies with asymptotic coverage at least 95\% and $\hat{\mathcal G}_{\alpha}$ collects only the policies for which the upper bound on cost is below the threshold.
\begin{thm}
\emph{\label{thm: uniform size}}Suppose Assumption \ref{cond:known ps Donsker}
holds for the class of data distributions $\mathcal{P}$. Collect
eligibility policies in 
\begin{equation}
\hat{\mathcal{G}}_{\alpha}=\left\{ g:g\in\mathcal{G}\text{ and }\frac{\sqrt{n}\left(\widehat{B}_{n}(g)-k\right)}{\widehat{\Sigma}^{B}(g,g)^{1/2}}\leq c_{\alpha}\right\} ,\label{eq:test inversion}
\end{equation}
where $c_{\alpha}$ is the $\alpha$-quantile from $\underset{g\in\mathcal{G}}{\inf}\frac{G_{P}^{B}(g)}{\Sigma_{P}^{B}(g,g)^{1/2}}$
for $G_{P}^{B}$ the Gaussian process defined in Assumption \ref{cond:known ps Donsker},
and $\widehat{\Sigma}^{B}(\cdot,\cdot)$ the consistent estimator
for its covariance function. Then 
\begin{equation*}
\limsup_{n\rightarrow\infty}\sup_{P\in\mathcal{P}}\Pr_{P^{n}}\{\hat{\mathcal{G}}_{\alpha}\cap\mathcal{G}^{+}\neq\emptyset\}<\alpha,\label{eq:asym size}
\end{equation*}
where $\mathcal{G}^{+}=\{g:B(g;P)>k\}$ is the set of infeasible eligibility
policies.
\end{thm}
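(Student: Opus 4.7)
The plan is to control the probability that any infeasible criterion passes the tightened constraint by decomposing the standardized test statistic and bounding the resulting event by one involving only the centered empirical process. For each $g\in\mathcal{G}$ and $P\in\mathcal{P}$, write
\[
\frac{\sqrt{n}\bigl(\widehat{B}_{n}(g)-k\bigr)}{\widehat{\Sigma}^{B}(g,g)^{1/2}}=Z_{n}(g)+\Delta_{n}(g;P),
\]
where $Z_{n}(g):=\sqrt{n}\bigl(\widehat{B}_{n}(g)-B(g;P)\bigr)/\widehat{\Sigma}^{B}(g,g)^{1/2}$ is the centered and standardized process and $\Delta_{n}(g;P):=\sqrt{n}\bigl(B(g;P)-k\bigr)/\widehat{\Sigma}^{B}(g,g)^{1/2}$ is a deterministic shift that is strictly positive on the infeasible set $\mathcal{G}^{+}$.

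Positivity of the shift on $\mathcal{G}^{+}$ yields the event-inclusion chain
\[
\bigl\{\hat{\mathcal{G}}_{\alpha}\cap\mathcal{G}^{+}\neq\emptyset\bigr\}\subseteq\bigl\{\exists g\in\mathcal{G}^{+}:Z_{n}(g)<c_{\alpha}\bigr\}\subseteq\bigl\{\inf_{g\in\mathcal{G}}Z_{n}(g)\leq c_{\alpha}\bigr\}.
\]
Enlarging the index set from $\mathcal{G}^{+}$ to $\mathcal{G}$ in the second inclusion is essential for the uniform argument, since under adversarial $P\in\mathcal{P}$ the set $\mathcal{G}^{+}$ can be arbitrarily large and close to $\mathcal{G}$. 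It therefore suffices to bound $Pr_{P^{n}}\{\inf_{g\in\mathcal{G}}Z_{n}(g)\leq c_{\alpha}\}$ uniformly in $P$.

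For the third step I invoke Assumption \ref{cond:known ps Donsker}: the centered empirical process $\sqrt{n}(\widehat{B}_{n}-B(\cdot;P))$ converges weakly in $\ell^{\infty}(\mathcal{G})$ to the Gaussian process $G_{P}^{B}$ uniformly in $P$, and $\widehat{\Sigma}^{B}$ is uniformly consistent with diagonal entries bounded away from zero. A uniform continuous-mapping argument applied to the functional $f\mapsto\inf_{g}f(g)$ (combined with the sign symmetry of mean-zero Gaussian processes to recover the absolute-value formulation in the definition of $c_{\alpha}$) delivers $\inf_{g}Z_{n}(g)\to_{d}\inf_{g}G_{P}^{B}(g)/\Sigma_{P}^{B}(g,g)^{1/2}$ uniformly in $P\in\mathcal{P}$. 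By the choice of $c_{\alpha}$ as the $\alpha$-quantile of the relevant limit functional, the limiting probability is at most $\alpha$; the \emph{strict} inequality in (\ref{eq:asym size}) is inherited from the first inclusion in Step 2, since on $\mathcal{G}^{+}$ the original event requires $Z_{n}(g)\leq c_{\alpha}-\Delta_{n}(g;P)$ with $\Delta_{n}(g;P)>0$, not merely $Z_{n}(g)\leq c_{\alpha}$.

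The hardest part will be making the uniformity honest. The infeasibility gap $\Delta_{n}(g;P)$ vanishes for $g$ approaching the boundary of $\mathcal{G}^{+}$, so the argument cannot rely on any uniform lower bound on the shift; only strict positivity is available, which is precisely why the second inclusion must enlarge the index set to all of $\mathcal{G}$. In addition, $c_{\alpha}$ depends on the unknown limiting covariance $\Sigma_{P}^{B}$, so in any practical implementation it must be computed from the consistent plug-in $\widehat{\Sigma}^{B}(\cdot,\cdot)$, and the uniform consistency supplied by Assumption \ref{cond:known ps Donsker} is what guarantees that this data-driven critical value still delivers the correct asymptotic size uniformly over $P\in\mathcal{P}$.
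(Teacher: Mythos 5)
Your proof follows essentially the same route as the paper's: you decompose the standardized statistic into the centered process plus a shift that is positive on $\mathcal{G}^{+}$, enlarge the index set from $\mathcal{G}^{+}$ to all of $\mathcal{G}$, and invoke the uniform Donsker convergence of Assumption \ref{cond:known ps Donsker} together with the definition of $c_{\alpha}$ --- exactly the chain of inequalities in the paper's proof. The one caveat is that your argument for the \emph{strict} inequality in (\ref{eq:asym size}) from the vanishing shift $\Delta_{n}(g;P)$ is not rigorous (as you yourself note, the gap is not bounded away from zero), but the paper's own proof likewise only delivers $\leq\alpha$, so this is not a gap relative to the paper.
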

Note that in \eqref{eq:test inversion} the sample-analog constraint is tightened by $c_{\alpha}\cdot\frac{\widehat{\Sigma}^{B}(g,g)^{1/2}}{\sqrt{n}}$
where $c_{\alpha}$ is negative, which means the class $\hat{\mathcal{G}}_{\alpha}$
only includes eligibility policies where the constraint is slack in
the sample. The sample-analog constraint is tightened proportionally
to the standard error to reflect that $\widehat{B}_{n}(g)$ might
be particularly noisy for some $g$. The tightening therefore shrinks inversely proportional to
the (square root of) sample size because intuitively larger sample
size reduces the sampling uncertainty.

\subsection{Modification that ensures uniformity in subclasses of DGPs}\label{subsec:mistake-uniformity}
Although $\gmistake$ achieves  asymptotic feasibility uniformly over a wide class of DGPs, it may lead to welfare inefficiencies under certain DGPs. Below I show that if   one lets $\alpha_n\rightarrow0$ as sample size increases, at a rate such that $c_{\alpha_{n}}=o(n^{1/2})$, then the modified sample-analog rule $\widehat{g}_{\alpha_{n}}$ is both asymptotically welfare-efficient
and  asymptotically feasible uniformly over two subclasses of distributions. 
\begin{cor}
\label{cor:optimal mistake control}Consider a family of distributions $\mathcal{P}$ such that for all $P \in \mathcal{P}$, 
Assumption~\ref{cond:known ps Donsker} holds and the budget constraint is uniformly slack under the constrained optimal policy. 
That is, there exists $\delta > 0$ such that 
\[
B(g_P^{\ast}; P) \leq k - \delta \quad \text{for all } P \in \mathcal{P},
\]
so that Assumption~\ref{assu:exactly binding optimal} is violated for every $P \in \mathcal{P}$. Then the modified sample-analog rule\begin{equation}
\widehat{g}_{\alpha_{n}}\in\arg\max_{g\in\hat{\mathcal{G}}_{\alpha_{n}}}\widehat{W}_{n}(g)\label{eq:EWM-size}
\end{equation}
is  asymptotically welfare-efficient and asymptotically feasible uniformly under $\mathcal{P}$ for $\alpha_{n}\rightarrow0$ at a rate such that $c_{\alpha_{n}}=o(n^{1/2})$.
\end{cor}

\begin{cor}
\label{cor:positive cost}Let the policy class $\mathcal{G}$ be based on continuously distributed
covariates $X$ and parameterized by a finite-dimensional threshold
parameter $\theta$. Consider a family of distributions $\mathcal{P}$ such that for all $P \in \mathcal{P}$, 
Assumption~\ref{cond:known ps Donsker} holds. 
Moreover, for every $P \in \mathcal{P}$, both the conditional benefit $\E_{P}[\tau \mid X]$ and the conditional cost $\E_{P}[C \mid X]$ are strictly positive. 
In particular, there exists  $\underline{c} > 0$ such that 
\[
\E_{P}[C \mid X] \geq \underline{c} \quad \text{for all } P \in \mathcal{P},
\]
so that Assumption~\ref{assu:gain when exactly binding} is violated for every $P \in \mathcal{P}$.
  Then the modified sample-analog rule  $\widehat{g}_{\alpha_{n}}$, as defined in~\eqref{eq:EWM-size}, 
is  asymptotically welfare-efficient and asymptotically feasible uniformly under
$\mathcal{P}$.
\end{cor}

Note that the above results do not contradict the impossibility result in Section~\ref{subsec:Mistakes-from-any}, which shows lack of uniformity over a broader class of distributions, including those satisfying Assumptions \ref{assu:exactly binding optimal} and \ref{assu:gain when exactly binding}. Specifically, Corollary~\ref{cor:optimal mistake control} establishes uniformity within a subclass of distributions that violate  Assumption~\ref{assu:exactly binding optimal}, while Corollary~\ref{cor:positive cost} establishes uniformity within a subclass of distributions that violate  Assumption~\ref{assu:gain when exactly binding}.
\section{\label{sec:Trade-off}Trade-off rule}
Up to this point, the discussion has assumed that any budget violation is undesirable. If policymakers are willing to borrow, potentially at some penalty, to exceed the budget constraint in order to achieve higher welfare, then alternative rules might be preferable.    Section \ref{subsec:Form-trade-off}
first formalizes this setting as a trade-off problem. Section
\ref{subsec:Estimator-tradeoff} derives a statistical rule that implements
such trade-off in the sample, and   is shown to be uniformly asymptotically
welfare-efficient. 

\subsection{Form of the trade-off\label{subsec:Form-trade-off}}

Exceeding the budget can have negative economic consequences, such as increased borrowing costs or reduced funding for other programs. However, it may not be severe enough to entirely outweigh the perceived welfare gains. Consider a new objective function in which the policymaker, while operating within the budget constraint, seeks to allocate resources efficiently to maximize welfare, but once the budget is exceeded, must trade off the welfare gains from additional spending against the penalty associated with the overrun. Let $\overline{\lambda}>0$ denote this penalty. Using the notation $(x)_{+}$ to represent the positive part of $x\in\mathbb{R}$, the objective function can be written as\footnote{This objective function can also be motivated as policymakers might be willing to trade off violations of the constraint against gains  in welfare only to a certain extent, bounding  the marginal gain of relaxing the constraint $\lambda\in[0,\overline{\lambda}]:$  \begin{equation*} \max_{g\in\mathcal{G}}\min_{\lambda\in[0,\overline{\lambda}]}\ W_{}(g;P)-\frac{\lambda}{r}\cdot(B(g;P)-k).\label{eq:dual-upper bound}
 \end{equation*}
 }
\begin{equation}
\max_{g\in\mathcal{G}}V(g;P)\text{ where }V(g;P)\coloneqq W_{}(g;P)-\frac{\overline{\lambda}}{r}\cdot(B(g;P)-k)_{+}.\label{eq:hinge objective}
\end{equation}
Here $r>0$ denotes the rate at which monetary units are converted into welfare units, since welfare is often not measured in monetary terms whereas the budget is. 

This objective function~\eqref{eq:hinge objective} is a non-smooth but piecewise linear optimization
problem. Denote its solution to be:
\begin{equation}
\widetilde{g}_{P}\in\arg\max_{g\in\mathcal{G}}\ V(g;P),\label{eq:hinge population}
\end{equation} 
and note that the solution $\widetilde{g}_{P}$ can relax the budget constraint and achieves weakly higher
welfare than the constrained optimal policy $g_{P}^{\ast}$ for
any data distribution $P$. The next lemma formalizes this observation.
\begin{lem}
\label{lem:hinge higher value}For any $\overline{\lambda}\in[0,\infty)$,
the solution to the trade-off problem (\ref{eq:hinge population})
achieves weakly higher welfare than the constrained optimum: $W(\widetilde{g}_{P};P)\geq V(\widetilde{g}_{P};P) \geq W(g_{P}^{\ast};P)$
for any distribution $P$. Moreover, the violation to the budget constraint
is upper bounded by $\frac{r\cdot (W(\widetilde{g}_{P};P)-W(g_{P}^{\ast};P))}{\overline{\lambda}}$.
\end{lem}

\subsection{Welfare efficiency of the trade-off rule\label{subsec:Estimator-tradeoff}}

Given a new objective function (\ref{eq:hinge population}) that trades
off the gain and the cost from violating the constraint, the goal
is to derive a statistical rule that is likely to select eligibility
policies that maximize the new objective function $V(g;P)$. Consider the trade-off
statistical rule defined as 
\begin{equation}
\widehat{g}_{\text{tradeoff}}\in\arg\max_{g\in\mathcal{G}}\ \widehat{V}_{n}(g)\text{ where }\widehat{V}_{n}(g)\coloneqq \widehat{W}_{n}(g)-\frac{\overline{\lambda}}{r}\cdot(\widehat{B}_{n}(g)-k)_{+},\label{eq:hinge}
\end{equation}
where the subscript ``tradeoff'' highlights that this statistical
rule is able to relax the constraint by trading off the gain and the
cost from violating the constraint. Since $\widehat{g}_{\text{tradeoff}}$ solves a sample-analog version of (\ref{eq:hinge population}),  I show it consistently achieves the maximal value of $V(g;P)$ under weak conditions in Lemma~\ref{lem:hinge consistency}. Theorem~\ref{thm:hinge tradeoff} further verifies that it consistently achieves welfare that is weakly higher than $W(g_{P}^{\ast};P)$. I leave to future research to verify whether the trade-off rule is minimax rate optimal.
\begin{lem}
    \label{lem:hinge consistency}
    Suppose Assumption \ref{cond:known ps Donsker}
holds for the class of data distributions $\mathcal{P}$. Then the
trade-off rule $\widehat{g}_{\text{tradeoff}}$ defined in (\ref{eq:hinge})
consistently achieves the maximized value of $V(g;P)$, namely $V(\widetilde{g}_{P};P).$
\end{lem}
\begin{thm}
\label{thm:hinge tradeoff}For the class of data distributions $\mathcal{P}$ satisfying Assumption \ref{cond:known ps Donsker}, the
trade-off rule $\widehat{g}_{\text{tradeoff}}$  
consistently achieves welfare that is weakly higher than $W(g_{P}^{\ast};P)$ for all $P\in\mathcal P$ and is therefore uniformly asymptotically welfare-efficient under $\mathcal{P}$.
Moreover, with probability approaching one, the violation to the budget
constraint is upper bounded by $\frac{r\cdot (W(\widehat{g}_{\text{tradeoff}};P)-W(g_{P}^{\ast};P))}{\overline{\lambda}}$
uniformly over $\mathcal{P}$.
\end{thm}
To gain intuition for the above results, note that Lemma~\ref{lem:hinge consistency} shows the trade-off
rule $\widehat{g}_{\text{tradeoff}}$ uniformly consistently achieves $V(\widetilde{g}_{P};P),$ which by Lemma \ref{lem:hinge higher value} is weakly higher than the welfare achieved by 
the constrained optimal policy $g_{P}^{\ast}$ for any data distribution
$P$.  Therefore, the trade-off rule $\widehat{g}_{\text{tradeoff}}$
is asymptotically welfare-efficient uniformly over $\mathcal{P}$. At the same time, larger   $\bar{\lambda}$ implies smaller violation to the budget constraint, relative to the welfare gain $W(\widehat{g}_{\text{tradeoff}};P)-W(g_{P}^{\ast};P)$.

\subsection{Medicaid expansion: empirical illustration\label{sec:Empirical-example}}

Example \ref{exa:Welfare-program-eligibility}
of Section \ref{sec:Motivations-and-setup} explains a more flexible Medicaid expansion policy that would allow the income thresholds to vary
with the number of children. In this example, while the budget constraint is set equal to the cost of the current policy, policymakers would be willing to exceed the budget constraint, potentially at some penalty, in order to achieve higher welfare. This subsection  uses this example, together with data from  the Oregon Medicaid Health Insurance Experiment (OHIE), to illustrate the trade-off rule and compare it with the sample-analog rule  and its modification. %

\subsubsection{Data\label{subsec:Data}}

I use the experimental data from the OHIE, where Medicaid eligibility
($D_{i}$) was randomized in 2007 among Oregon residents who were
low-income adults, but previously ineligible for Medicaid, and who
expressed interest in participating in the experiment. \citet{finkelstein_oregon_2012}
include a detailed description of the experiment and an assessment
of the average effects of Medicaid on health and health care utilization.
I include a cursory explanation here for completeness.%

{} The original OHIE sample consists of 74,922 individuals (representing
66,385 households). Of these, 26,423 individuals responded to the
initial mail survey, which collects information on income as percentage
of the federal poverty level and number of children, which are the
characteristics of interest for targeting ($X_{i}$).\footnote{More accurately, I follow \citet{sacarny_out_2020} to approximate
number of children by the number of family members under age 19 living
in house as reported on the initial mail survey. I exclude individuals
who did not respond to the initial survey from my sample, which differs
from the sample analyzed in \citet{finkelstein_oregon_2012} as I
focus on individuals who responded both to the initial and the main
surveys from the OHIE. Due to this difference, the expansion policies
selected using my sample do not directly carry their properties to
the population underlying the original OHIE sample, as the distributions
of $X$ differ. } After one year, the main survey collects data related to health ($Y_{i}$),
health care utilization ($H_{i}$) and actual enrollment in Medicaid
($M_{i}$), which allows me to construct estimates for the benefit
and cost of Medicaid eligibility $(\tau,C)$. Therefore I further
exclude individuals who did not respond to the main survey from my
sample. %

For health ($Y_{i}$), I  follow the binary measurement in \citet{finkelstein_oregon_2012}
based on self-reported health, where an answer of ``poor/fair''
is coded as $Y_{i}=0$ and ``excellent/very good/good'' is coded
as $Y_{i}=1$. For health care utilization ($H_{i}$), the study collected
measures of utilization of prescription drugs, outpatient visits,
ER visits, and inpatient hospital visits. \citet{finkelstein_oregon_2012}
annualize these utilization measures to turn these into spending estimates,
weighting each type by its average cost (expenditures inflated with the CPI-U to 2007 dollars) among low-income
publicly insured non-elderly adults in the Medical Expenditure Survey
(MEPS). Note that health and health care utilization are not measured
at the same scale, which requires rescaling when I  consider the trade-off
between the two. I  address this issue in Section \ref{subsec:Estimating-a-more}.
Lastly, since the enrollment in Medicaid still requires an application,
not everyone eligible in the OHIE eventually enrolled in Medicaid,
which implies $M_{i}\leq D_{i}$.

Given the setup of the OHIE, Medicaid eligibility ($D_{i}$) is random
conditional on household size (number of adults in the household)
entered on the lottery sign-up form and survey wave. While the original
experimental setup would ensure randomization given household size,
the OHIE had to adjust randomization for later waves of survey respondents
(see the Appendix of \citet{finkelstein_oregon_2012} for more details).
Denote the confounders (household size and survey wave) with $V_{i}$,
and define the propensity score as $p(V_{i})=\Pr\{D_{i}=1\mid V_{i}\}$.
If the propensity score is known, then the construction of the estimates
follows directly from the formula (\ref{eq:IPW}). However, the adjustment
for later survey waves means I need to estimate the propensity score,
and I adapt the formula (\ref{eq:IPW}) following \citet{athey_policy_2021}
to account for the estimated propensity score.

Specifically, define the conditional expectation function (CEF) of
a random variable $U_{i}$ as $\gamma^{U}=\E[U_{i}\mid V_{i},D_{i}${]}.
Since $V_{i}$ in my case is discrete, I use a fully saturated model
to estimate the propensity score $\widehat{p}(V_{i})$ and the CEF
$\widehat{\gamma}^{U}(V_{i},D_{i})$. I then form the estimated Horvitz-Thompson
weight with the estimated propensity score as $\widehat{\alpha}(V_{i},D_{i})=\frac{D_{i}}{\widehat{p}(V_{i})}-\frac{1-D_{i}}{1-\widehat{p}(V_{i})}$.
For health benefit due to Medicaid eligibility, define the estimate
$\tau_{i}^{\ast}=\widehat{\gamma}^{Y}(V_{i},1)-\widehat{\gamma}^{Y}(V_{i},0)+\widehat{\alpha}(V_{i},D_{i})\cdot\left(Y_{i}-\widehat{\gamma}^{Y}(V_{i},D_{i})\right)$. For the cost due to Medicaid eligibility, define the estimate $C_{i}^{\ast}=\widehat{\gamma}^{Z}(V_{i},1)+\frac{D_{i}}{\widehat{p}(W_{i})}\cdot\left(Z_{i}-\widehat{\gamma}^{Z}(V_{i},D_{i})\right)$ where $Z_{i}=M_{i} \cdot H_{i}$. Since an eligible individual only incurs cost to Medicaid if enrolled, I  need to account for imperfect take-up in forming $C_{i}^{\ast}$.

\begin{table}[H]
\begin{centering}
\caption{\label{tab:Sample-characteristics}Summary statistics of the OHIE
sample by number of children}
\begin{tabular}{cccc}
\toprule 
Number of children & Sample size & Sample mean of $\tau_{i}^{\ast}$ & Sample mean of $C_{i}^{\ast}$\tabularnewline
\midrule
\midrule
0 & 5,758 & 3.16\% & \$1,974\\
  &  & (0.01) & (110)\\
\midrule
1 & 1,736 & 10.34\% & \$1,615\\
  &  & (0.02) & (195)\\
\midrule
$\geq2$ & 2,641 & 1.55\% & \$1,451\\
  &  & (0.02) & (129)\\
\midrule
Full sample & 10,135 & 3.97\% & \$1,776\\
  &  & (0.01) & (79)\\
\bottomrule
\end{tabular}
\par\end{centering}
\emph{Notes:} This table presents summary statistics on the sample
of individuals who responded to both the initial and the main surveys
from the Oregon Health Insurance Experiment (the OHIE sample). The
first three rows represent individuals living with different number
of children (family members under age 19), and the last row is the
aggregate. The estimate for benefit $\tau_{i}^{\ast}$ is an estimate
for the increase in the probability of an individual reporting ``excellent/very
good/good'' on self-reported health (as opposed to ``poor/fair'')
after receiving Medicaid eligibility. The estimate for cost $C_{i}^{\ast}$
is an estimate for individual's health care expenditure that needs
to be reimbursed by Medicaid. Standard errors are shown in parentheses below.
\end{table}
Table~\ref{tab:Sample-characteristics} presents the summary statistics.  While Appendix
\ref{subsec:Results-with-estimated} argues the estimation errors in $\tau_{i}^{\ast}$
and $C_{i}^{\ast}$ are asymptotically negligible, in finite samples, the cost estimates are highly variable, resulting in noisy estimate $\widehat{B}_n(g)$. 

To formalize the budget constraint requiring that the average cost of any proposed policy does not exceed that of the status quo 2014 Medicaid expansion, I calibrate the per capita cost under the 2014 policy. Following  \cite{finkelstein_oregon_2012}, who cite \citet{wallace2008effective}, Medicaid spending among individuals comparable to the Oregon Health Insurance Experiment (OHIE) participants was approximately \$3,000 per enrollee in Oregon in 2004, which corresponds to about \$3,300 in 2007 dollars.  Under the 2014 policy,  I adjust for imperfect take-up by defining the per capita cost as \begin{equation}
    k = \E_{P}[M(1)g_{2014}(X)]\cdot\$3,300 = \$1,377 \label{eq:def k}
\end{equation} where $M(1)$ denotes enrollment status if offered Medicaid eligibility and $g_{2014}(x)=\mathbf{1}\{\text{income}\leq 138\%\}$ represents the status quo 2014 expansion policy of providing eligibility to all adults with income up to 138\%. The enrollment rate under the status quo 2014 policy $\E_{P}[M(1)g_{2014}(X)]$ is based on  point estimate from the OHIE.\footnote{The theoretical results in this paper are developed for a fixed $k$. Addressing the estimation error in 
$k$ is left to future work.}

\subsubsection{Budget-constrained Medicaid expansion\label{subsec:Estimating-a-more}}

Figure \ref{fig:Conditional-eligibility-income} summarizes the selected
expansion policies, which are income thresholds specific to the number
of children. The sample-analog rule $\widehat{g}_{\text{sample}}$ chooses to restrict Medicaid eligibility, especially lowering the income threshold for childless individuals far below the current level, and the estimated welfare is 3.79\% increase in reporting good subjective health. The budget estimate for the selected policy is \$1,311,  slightly below the threshold of $k=\$1,377$, as $\widehat{g}_{\text{sample}}$ imposes the  sample-analog version of the budget constraint. However, due to the large variation in the cost estimates as illustrated in Table~\ref{tab:Sample-characteristics}, meeting the sample budget constraint still involves uncertainty about whether the selected policy meets the budget constraint in the population as argued in Proposition~\ref{prop: EWM no ptwise mistake}. After taking into account of estimation uncertainty, the modified sample-analog rule $\widehat{g}_{\alpha=5\%}$ is much more conservative than the sample-analog rule $\widehat{g}_{\text{sample}}$. 
 The budget estimate of the selected policy \$1,174, with a standard error of 66, making it statistically significantly below the budget constraint at the conventional 5\% level. The welfare estimate of the selected policy is lower at 3.43\%. One can reduce the conservativeness by increasing $\alpha$ and therefore lowering the statistical guarantee that the selected policy meets the budget constraint in the population. I examine the results under higher values of $\alpha$ in Appendix~\ref{sec:robustness}.

To construct the trade-off rule $\widehat{g}_{\text{tradeoff}}$ as
proposed in Section \ref{subsec:Form-trade-off}, I need to specify both the penalty parameter, $\overline{\lambda}$ and the conversion rate, $r$. For illustration, I assume that exceeding the budget incurs a full repayment of the overrun, implying $\overline{\lambda}=1.$
In my empirical illustration, the budget constraint is in terms of
monetary value. The objective function, however, is measured based
on self-reported health, which does not directly translate to a monetary
value. \citet{finkelstein_value_2019} converts self-reported
health into value of a statistical life year (VSLY) based on existing
estimates. Specifically, a conservative measure for the increase in
quality-adjusted life year (QALY) when self-reported health increases
from ``poor/fair'' to ``excellent/very good/good'' is roughly
0.6. The \textquotedblleft consensus\textquotedblright{} estimate
for the VSLY for one unit of QALY from \citet{cutler_your_2004} is
\$100,000 for the general US population. I therefore follow \citet{finkelstein_value_2019} and set $r = (0.6\cdot\$100,000) = \$60,000.$
The trade-off objective function is therefore
\begin{equation}
\max_{g\in\mathcal{G}}V(g;P)\text{ where }V(g;P)\coloneqq W_{}(g;P)-\frac{1}{\$60,000}\cdot(B(g;P)-k)_{+}.\label{eq:OHIE tradeoff}
\end{equation}

The trade-off rule $\widehat{g}_{\text{tradeoff}}$ based on~\eqref{eq:OHIE tradeoff} chooses
to assign Medicaid eligibility to more individuals, and to raise the
income thresholds above the current level for those with children. Therefore the welfare estimate for the selected policy is higher at 3.86\% and
budget estimate for the selected policy is \$1,402, slightly above the budget constraint. The higher level occurs
because on average the benefit estimates are positive, 
and the trade-off rule finds that the additional health benefit from violating
the budget constraint exceeds the cost of doing so, especially those with children. While a traditional cost-benefit analysis would also recommend prioritizing those with children based on Table~\ref{tab:Sample-characteristics},  the trade-off rule offers a more interpretable recommendation by directly selecting income thresholds. 

 However,  setting $\overline{\lambda}=1$ assumes that the penalty is limited to repaying the budget overrun, potentially understating the true penalty for policymakers, as it does not account for additional economic costs associated with exceeding the budget, such as reduced funding for other programs or lower quality of Medicaid services. Accordingly, in Appendix~\ref{sec:robustness}, I explore how the results change when $\overline{\lambda}$ is increased, which is equivalent to making budget overruns more costly to the policymaker. For example, with $\overline{\lambda}=1.68$, the welfare and budget estimates for the selected policy  essentially drop to those of $\widehat{g}_{\text{sample}}$.%
 
\begin{figure}[H]
\begin{centering}
\caption{\label{fig:Conditional-eligibility-income}More flexible Medicaid
expansion policies}
\subfloat[ policy selected by the sample-analog rule $\widehat{g}_{\text{sample}}$]{\begin{centering}
\par\end{centering}
\centering{}\includegraphics[scale=0.25]{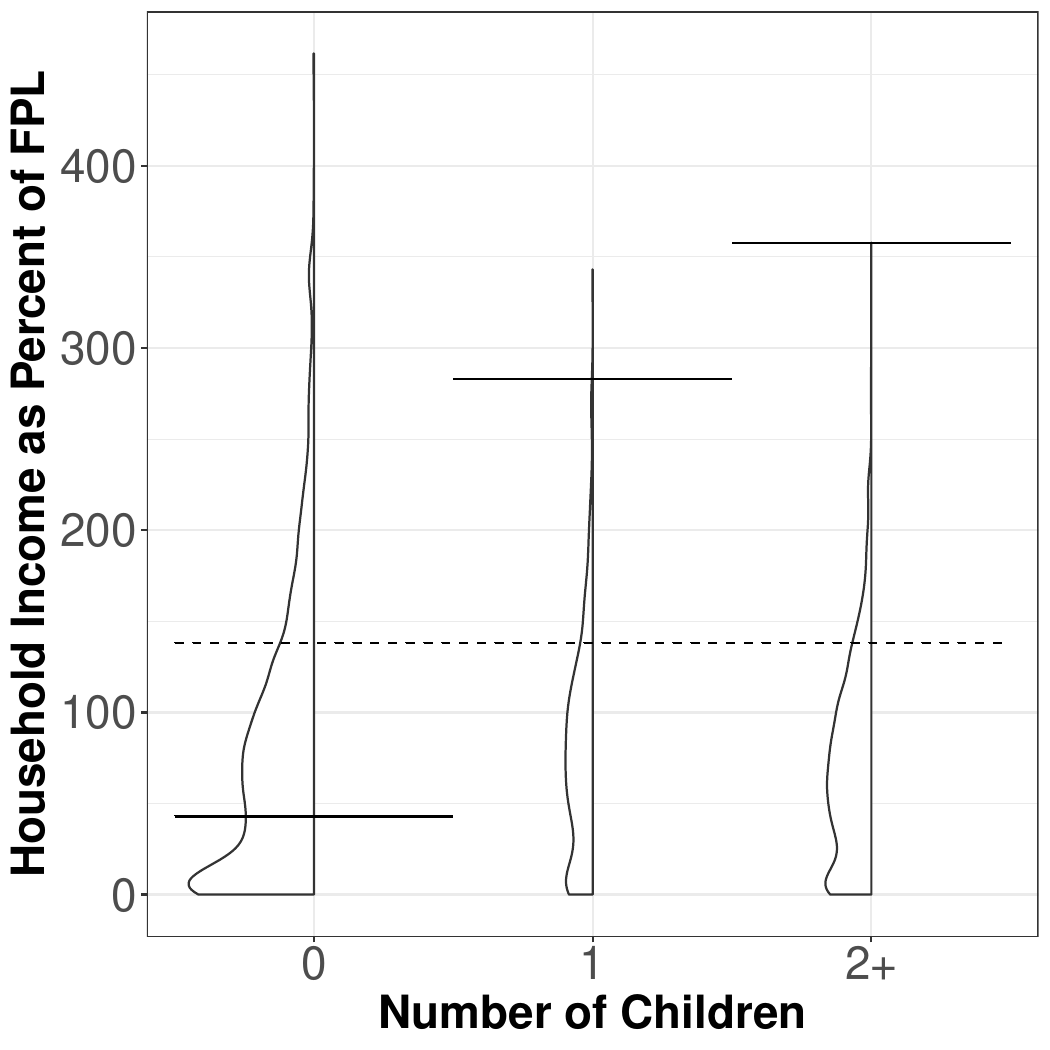}}
\hfill{}\subfloat[ policy selected by the modified rule $\widehat{g}_{\alpha=5\%}$]{
\centering{}\includegraphics[scale=0.25]{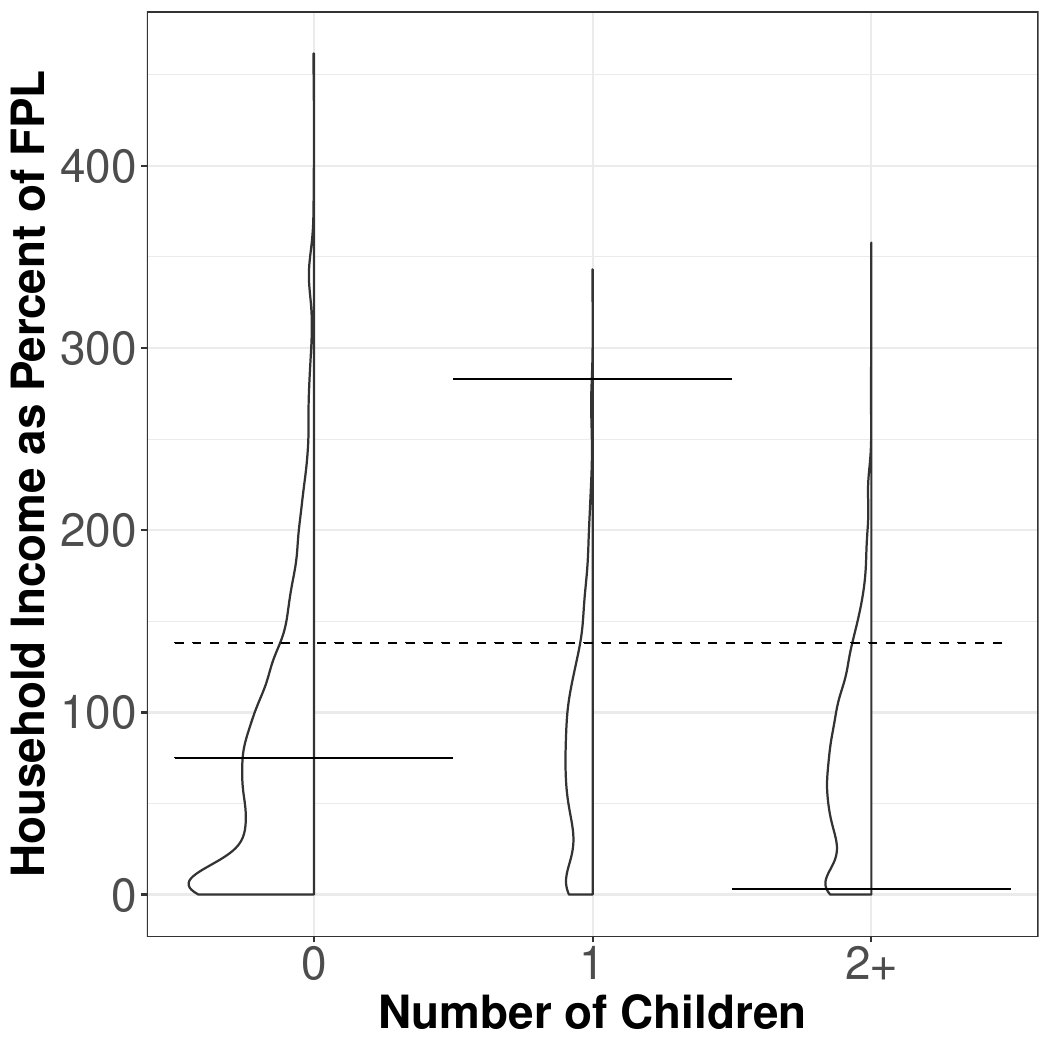}}
\hfill{}\subfloat[ policy selected by the trade-off rule $\widehat{g}_{\text{tradeoff}}$]{\begin{centering}
\par\end{centering}
\centering{}\includegraphics[scale=0.25]{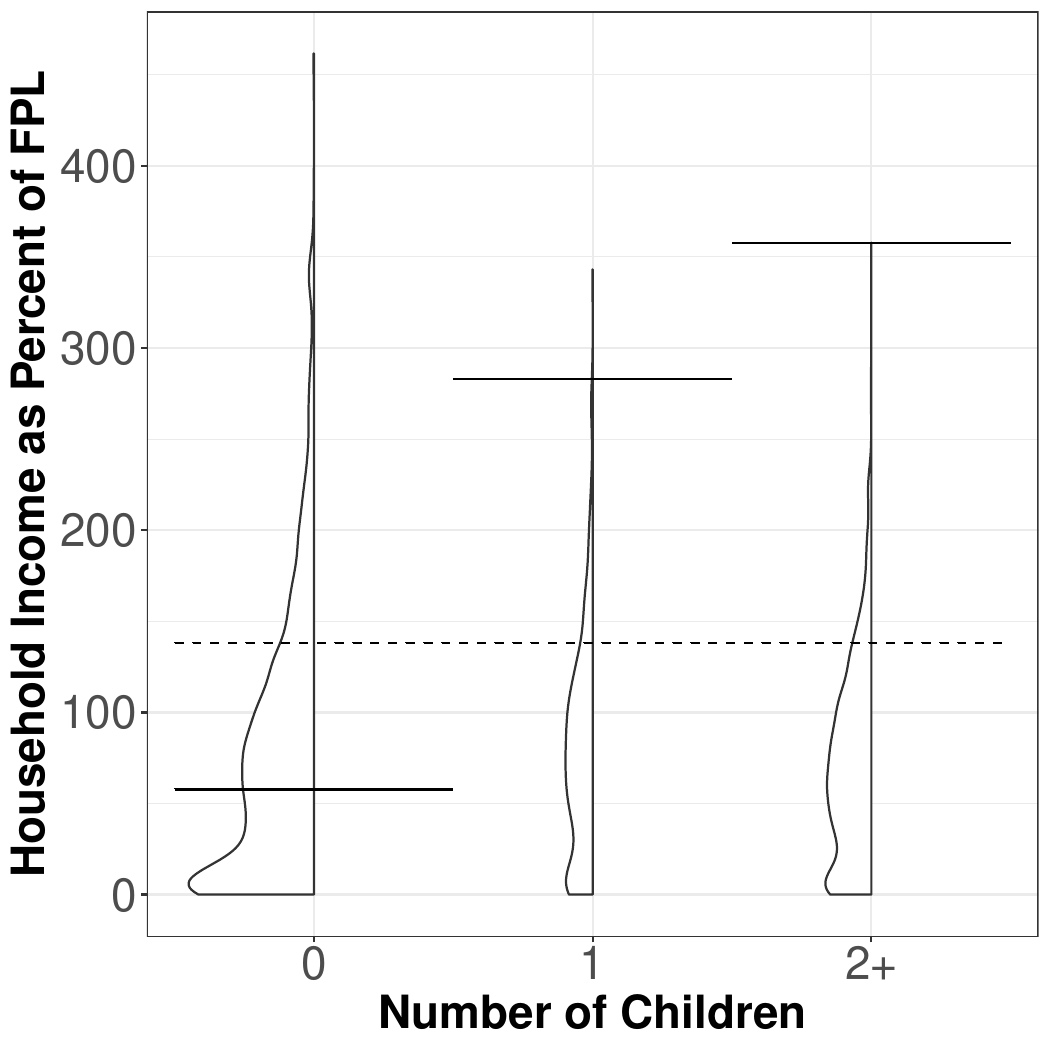}}
\par\end{centering}
\raggedright{}\emph{Notes: }This figure plots the more flexible Medicaid
expansion policies selected by statistical rules  
based on results from the OHIE. The horizontal dashed line marks the
income thresholds under the current expansion policy, which is
138\% regardless of the number of children in a household. The horizontal
solid lines mark the more flexible policy selected by various statistical
rules, i.e. income thresholds that can vary with number of children.
For each number of children, I also plot the underlying income distribution
to visualize individuals below the thresholds. Panel (a) plots the
 policy selected by the sample-analog rule $\widehat{g}_{\text{sample}}$. Panel (b) plots the
 policy selected by the modified sample-analog rule $\widehat{g}_{\alpha=5\%}$.
Panel (c) plots the policy selected by the trade-off rule $\widehat{g}_{\text{tradeoff}}$.
\end{figure}

In this example, the policies chosen by different rules differ substantially.  A natural question is how policymakers should choose the
rules.   In this Medicaid example, it is reasonable to assume that policymakers can borrow, potentially at some penalty, to exceed the budget constraint in order to achieve higher welfare. Therefore the trade-off rule $\widehat{g}_{\text{tradeoff}}$ is theoretically attractive.  It selects policies that achieve at least the maximum feasible
welfare, while accounting for the penalty $\overline{\lambda}$ from violating the budget constraint.  Because the amount of budget violations depend on the choice of $\overline{\lambda}$ as shown in Theorem~\ref{thm:hinge tradeoff},  one can
assess this trade-off for their particular settings by varying $\overline{\lambda}$. However, in other context where policymakers are financially conservative and even minor violation to the budget constraint is unacceptable, then  the modified sample-analog rule $\widehat{g}_{\alpha}$, with a pre-specified significance level $\alpha$, is theoretically attractive. It offers a statistical guarantee that the budget constraint will be met in the population with high confidence. However, the selected eligibility policy may appear very conservative, as illustrated in the Medicaid expansion example above.

\section{Monte Carlo Simulations\label{sec:Simulation-study}}

To ensure the practical relevance of the simulation, I calibrate to the distribution of the data from the OHIE, which is also used for empirical illustration in Section \ref{sec:Empirical-example}, based on Example \ref{exa:Welfare-program-eligibility}
of Section \ref{sec:Motivations-and-setup}. 

For the purpose of this simulation
study, the OHIE represents the population $P$, and I 
take the estimates $(\tau_{i}^{\ast},C_{i}^{\ast})$ constructed in Section \ref{subsec:Data} as the true benefit and cost $(\tau,C)$. 
Under this simulation design, I can solve for the constrained
optimal policy as 
\[
g_{P}^{\ast}\in\arg\max_{g\in\mathcal{\widetilde{G}},\ B(g;P)\leq k}W_{}(g;P)
\]
where $\left(W_{}(g;P),B(g;P)\right)$ are the sample analogs in the
OHIE sample. The policy class $\mathcal{\widetilde{G}}$ includes income thresholds
that can vary with the number of children as described in Equation~\eqref{eq:coarse grid} of Appendix~\ref{appx:computation}, which is slightly coarsened than the class used in Section~\ref{subsec:Estimating-a-more}. The maximum feasible welfare is given by $W_{}(g_{P}^{\ast};P)=3.76\%$, an increase
of 3.76\% in reporting good subjective health. The cost associated
with the constrained optimal policy is $B(g_{P}^{\ast};P)=\$1,340$,
slightly below the constraint $k=\$1,377$ as defined in Equation~\eqref{eq:def k}.

\subsection{Simulation results}

Table \ref{tab:Asymptotic-properties-of} compares the performance
of various statistical rules $\widehat{g}$ through 500 Monte Carlo
iterations. At each iteration, I randomly draw observations from the
OHIE sample to form a random sample. I simulate with the same sample
size as the original sample to hold the amount of sampling uncertainty
constant. Given the random sample, I collect eligibility policies
chosen by each of the following statistical rules:
\begin{itemize}
\item sample-analog rule $\widehat{g}_{\text{sample}}$,
\item modified sample-analog rule $\gmistake$   described in Theorem \ref{thm: uniform size} with $\alpha$ set to 5\%,
\item trade-off rule $\widehat{g}_{\text{tradeoff}}$ with  
$\bar{\lambda}=1$ and $r=\$ 60,000$, same as the empirical illustration as explained in Equation~\eqref{eq:OHIE tradeoff}.
\end{itemize}
I evaluate the welfare function and the budget function $\left(W_{}(g;P),B(g;P)\right)$
for a given policy in the original OHIE sample. Averages over 500
iterations provide simulation evidence on the  properties
of the above statistical rules, as shown in Table \ref{tab:Asymptotic-properties-of}.
\begin{table}[H]
\begin{centering}
\caption{\label{tab:Asymptotic-properties-of}Simulation results}
\begin{tabular}{cccc}
\toprule 
Statistical rule & sample-analog & modified & trade-off\tabularnewline
 & $\widehat{g}_{\text{sample}}$ & $\widehat{g}_{\alpha=5\%}$ & $\widehat{g}_{\text{tradeoff}}$\tabularnewline
\midrule
\midrule 
Prob. of selecting infeasible policies & 10.2\% & 0.2\% & 45.2\% \\
\midrule
Prob. of selecting suboptimal policies & 94.0\% & 99.4\% & 65.6\% \\
\midrule
Average welfare loss & 0.06 & 0.17 & 0.01 \\
\midrule
Average budget overrun & -\$70 & -\$278 & \$62 \\
\bottomrule
\end{tabular}
\par\end{centering}
\emph{Notes:} This table reports  properties of statistical
rules $\widehat{g}$, as averaged over 500 simulations. Row 1 reports
the probability that the rule selects an eligibility policy that
violates the budget constraint, i.e. $\Pr_{P^{n}}\{B(\widehat{g};P)>0\}$.
Row 2 reports the probability that the rule achieves strictly less
welfare than the constrained optimal policy  $g_{P}^{\ast}$, i.e.
$\Pr_{P^{n}}\{W_{}(\widehat{g};P)<W_{}(g_{P}^{\ast};P)\}$. Row 3 reports
the average welfare loss of the rule relative to the maximum feasible
welfare, i.e. $\frac{E_{P^{n}}\left[W(g_{P}^{\ast};P)-W(\widehat{g};P)\right]}{W(g_{P}^{\ast};P)}$.
Row 4 reports the average budget overrun of the policies selected by the rule,
i.e. $E_{P^{n}}\left[B(\widehat{g};P)\right]-k$. 
\end{table}

Row 1 of Table \ref{tab:Asymptotic-properties-of} illustrates that
it is possible for all three statistical rule $\widehat{g}$ to select
infeasible policies. A lower probability of selecting infeasible policies
suggests the rule is closer to achieving asymptotic feasibility. 
In the distribution calibrated to the OHIE sample, the original sample-analog rule $\widehat{g}_{\text{sample}}$ might not be asymptotically feasible as it can select infeasible eligibility policies in 10.2\% of the draws. 
In contrast, Theorem \ref{thm: uniform size} guarantees
that a simple modification $\widehat{g}_{\alpha=5\%}$
selects infeasible eligibility policies in less than 5\% of the draws,
regardless of the distribution. Simulation confirms such guarantee
as the mistakes only happen 0.2\% of the time. 

Row 2 of Table \ref{tab:Asymptotic-properties-of} illustrates that
it is possible for all three statistical rule $\widehat{g}$ to achieve
weakly higher welfare than the constrained optimal policy $g_{P}^{\ast}$.
This can happen when  $\widehat{g}$ selects an infeasible policy.
A lower probability of selecting suboptimal policies suggests the
rule is closer to achieving asymptotic welfare-efficiency. Theorem
\ref{thm:hinge tradeoff} implies that the trade-off rule $\widehat{g}_{\text{tradeoff}}$
is uniformly asymptotically welfare efficient while there is no such
guarantee for the sample-analog rule $\widehat{g}_{\text{sample}}$. 

In the distribution calibrated to the OHIE, the trade-off rule $\widehat{g}_{\text{tradeoff}}$
on average achieves higher welfare than the sample-analog rule $\widehat{g}_{\text{sample}}$.
As shown in row 3 of Table \ref{tab:Asymptotic-properties-of}, the
welfare loss of $\widehat{g}_{\text{tradeoff}}$ is 1\% of the maximum
feasible welfare $W_{}(g_{P}^{\ast};P)$, compared to 6\% for $\widehat{g}_{\text{sample}}$.
However, its improvement can be at the cost of violating the budget
constraint more often than $\widehat{g}_{\text{sample}}$, at a rate
of 45.2\%. Though as shown in row 4 of Table \ref{tab:Asymptotic-properties-of}, on average the violation is limited, which confirms Theorem
\ref{thm:hinge tradeoff}.

\section{Conclusion}

In this paper, I focus on properties of statistical rules when the
cost of implementing any given   policy needs to be estimated.
The existing EWM rule selects an eligibility policy that maximizes
a sample analog of the social welfare function, and only accounts
for constraints that can be verified with certainty in the population.
However, in some cases, the cost of providing eligibility to any given individual
might be unknown ex-ante due to imperfect take-up and heterogeneity.
Therefore, in addition to asymptotic welfare-efficiency that has been
studied by the EWM literature, I introduce a new desirable property
of statistical rules in the setting of unknown cost, namely asymptotic
feasibility, which requires the selected policy to satisfy a budget constraint when the sample size is large enough. Unlike the setting of known cost, I prove an impossibility result that no statistical rule
can be uniformly asymptotically welfare-efficient and feasible. The direct extension
to the existing EWM approach is no longer asymptotically welfare efficient
nor asymptotically feasible for certain real-world relevant data distributions. As an alternative, I propose the trade-off rule that guarantees asymptotic welfare efficiency while ensuring any budget violations are bounded above by welfare gains. I illustrate the theoretical results using experimental data from the OHIE. 
A promising avenue
for future research is to verify whether the  trade-off rule maintains minimax rate optimality in the constrained setting, just as how the EWM rule is in the unconstrained setting.

\begin{singlespace}
\bibliographystyle{aer}
\bibliography{jacknife}

\end{singlespace}

\newpage
\appendix

\section{Proofs of theorems}

\begin{proof}
\textbf{Proof of Theorem \ref{thm: no uniform estimator}.} The population
is a probability space $(\Omega,\mathcal{A},P)$, which induces the
sampling distribution $P^{n}$ that governs the observed sample. A
statistical rule $\widehat{g}$ is a mapping $\widehat{g}(\cdot):\Omega\rightarrow\mathcal{G}$
that selects a policy from the policy class $\mathcal{G}$ based on
the observed sample. Note that the selected policy $\widehat{g}(\omega)$
is still deterministic because the policy class $\mathcal{G}$ is
restricted to be deterministic policies. When no confusion arises, I drop the reference
to event $\omega$ for notational simplicity.

Suppose $\widehat{g}$ is asymptotically welfare-efficient and asymptotically
feasible under $P_{0}$. We want to prove there is non-vanishing chance
that $\widehat{g}$ selects policies that are infeasible some other
distributions:
\begin{align*}
\limsup_{n\rightarrow\infty}\sup_{P\in\mathcal{P}}\Pr_{P^{n}}\left\{ \omega:B(\widehat{g}(\omega);P)>k\right\}  & >0.\label{eq:ptwise mistake ph-1}
\end{align*}

Firstly, asymptotical welfare-efficiency of $\widehat{g}$ under $P_{0}$ implies for any $\epsilon>0$
we have
\begin{equation*}
\limsup_{n\rightarrow\infty}\Pr_{P_{0}^{n}}\left\{ \omega:W(g_{p_{0}}^{\ast}P_{0})-W(\widehat{g}(\omega);P_{0})>\epsilon\right\} =0.\label{eq:ptwise no regret p0 estimator}
\end{equation*}
Asymptotical feasibility under $P_{0}$ implies $\Pr_{P_{0}^{n}}\{\omega:B(\widehat{g}(\omega);P_{0})\leq k\}\rightarrow1.$ 

Consider the event $\omega'$ where $\left|W(\widehat{g}(\omega');P)-W(g_{P}^{\ast};P)\right|<\epsilon$
and $\widehat{g}(\omega')$ is feasible. Asymptotic welfare-efficiency
and asymptotic feasibility  imply $\Pr_{P_{0}^{n}}\left\{ \omega'\right\} \rightarrow1.$
To see this, note the probability of such event has an asymptotic
lower bound of one 
\begin{align*}
 & \Pr_{P_{0}^{n}}\left\{ \omega':W(g_{p_{0}}^{\ast}P_{0})-W(\widehat{g}(\omega');P_{0})\leq\epsilon\text{ and }B(\widehat{g}(\omega');P_{0})\leq k\right\} \\
\geq & \Pr_{P_{0}^{n}}\left\{ \omega:W(g_{p_{0}}^{\ast}P_{0})-W(\widehat{g}(\omega);P_{0})\leq\epsilon\right\} +\Pr_{P_{0}^{n}}\left\{ \omega:B(\widehat{g}(\omega);P_{0})\leq k\right\} -1
\end{align*}
where the first two terms converge to one as $n\rightarrow\infty$
respectively under asymptotic welfare-efficiency and asymptotic feasibility.

Now consider $P_0$ that satisfies Assumptions \ref{assu:exactly binding optimal} and \ref{assu:gain when exactly binding}. For $\epsilon$ in Assumption \ref{assu:gain when exactly binding},
we have $B(\widehat{g};P_{0})=B(g_{P_{0}}^{\ast};P_{0})=k$ under
the event $\omega'$, since the constraint is exactly satisfied at
$g_{P_{0}}^{\ast}$ under Assumption \ref{assu:exactly binding optimal}.
By Law of Total Probability, we have 
\[
\Pr_{P_{0}^{n}}\left\{ \omega:B(\widehat{g}(\omega);P_{0})=k\right\} \geq \Pr_{P_{0}^{n}}\left\{ \omega'\right\} \cdot \Pr_{P_{0}^{n}}\left\{ B(\widehat{g};P_{0})=k\mid\omega'\right\} 
\]
Then the above argument shows $\Pr_{P_{0}^{n}}\left\{ \omega:B(\widehat{g}(\omega);P_{0})=k\right\} \rightarrow1$
as $n\rightarrow\infty$.

Following the notation in Assumption \ref{cond:conv constraint},
denote the set of policies where the constraints bind exactly under
the limit distribution $P_{0}$ by
\begin{equation*}
\mathcal{G}_{0}=\{g\in\mathcal{G}:B(g;P_{0})=k\}.\label{eq:exactly binding DGP}
\end{equation*}

Under Assumption \ref{cond:conv constraint}, the sequence $P_{h_{n}}^{n}$
is contiguous with respect to the sequence $P_{0}^{n}$ , which means
$P_{0}^{n}(A_{n})\rightarrow0$ implies $P_{h_{n}}^{n}(A_{n})\rightarrow0$
for every sequence of measurable sets $A_{n}$ on $\mathcal{A}^{n}$.
Then $\Pr_{P_{0}^{n}}\left\{ A_{n}:B(\widehat{g}(A_{n});P_{0})=k\right\} \rightarrow1$
implies there exists an $N(u)$ such that for all $n\geq N(u)$, we
have $\Pr_{P_{h_{n}}^{n}}\left\{ A_{n}:B(\widehat{g}(A_{n});P_{0})=k\right\} \geq1-u.$
That is, with high probability, the statistical rule $\widehat{g}$
selects policies from $\mathcal{G}_{0}$ based on the observed sample
distributed according to $P_{h_{n}}^{n}$. Recall Assumption \ref{cond:conv constraint}
implies for all $g\in\mathcal{G}_{0}$, for any sample size $n$,
we have $B(g;P_{h_{n}})-k>c/\sqrt{n}.$ Thus this statistical rule
cannot uniformly satisfy the constraint since with sample size $n\geq N(u)$,
we have

\begin{equation*}
\sup_{P\in\mathcal{P}}\Pr_{P^{n}}\left\{ B(\widehat{g};P)>k\right\} \geq \Pr_{P_{h_{n}}^{n}}\left\{ B(\widehat{g};P_{h_{n}})>k\right\} \geq1-u
\end{equation*}
\end{proof}
\begin{proof}
\textbf{Proof of Proposition \ref{prop: EWM no ptwise mistake}.}

By assumption, the probability an individual takes up the
treatment for $X\in[\underline{t},\overline{t}]$ is zero but between
zero and one otherwise. Then the budget function $B(t;P)\coloneqq \E_{P}[C\cdot\mathbf{1}\{X\leq t\}]$ is flat in the interval $[\underline{t},\overline{t}]$ but strictly increasing otherwise. Since $\tau>0$ almost surely, the welfare function $W(t;P)\coloneqq \E_{P}[\tau\cdot\mathbf{1}\{X\leq t\}]$ is strictly increasing in $t$ and the constrained optimal policy satisfies Assumption~\ref{assu:exactly binding optimal} and is binding. Let $\inf_{B(g;P)<k}W(g_{P}^{\ast};P)-W(g;P)=\E_{P}[\tau\cdot\mathbf{1}\{\underline{t}\leq X\leq \overline{t}\}] =\epsilon$ be the smallest amount of welfare loss from missing the binding solution $g_{P}^{\ast}$. Then $\epsilon>0$ and satisfies Assumption~\ref{assu:gain when exactly binding}.

The population problem is
\[
\max_{t}\ W(t;P)\text{ subject to }B(t;P)\leq k,
\]
where $B(t;P)=k$ for $t\in[\underline{t},\overline{t}]$ by assumption.
The constrained optimal threshold is therefore the highest threshold
where the constraint is satisfied exactly i.e. $t^{\ast}=\overline{t}.$
This also implies $C\cdot\mathbf{1}\{X\leq t\}\sim Bernoulli(k)$
for $t\in[\underline{t},\overline{t}]$. 

The sample-analog rule solves the following sample problem 
\[
\max_{t}\frac{1}{n}\sum_{i}\tau_{i}\cdot\mathbf{1}\{X_{i}\leq t\}\text{ subject to }\widehat{B}_{n}(t)\leq k
\]
 for $\widehat{B}_{n}(t)\coloneqq\frac{1}{n}\sum_{i}C_{i}\cdot\mathbf{1}\{X_{i}\leq t\}$.
Given that $\tau>0$ almost surely, the sample-analog rule equivalently
solves $\max_{t}\widehat{B}_{n}(t)\text{ subject to }\widehat{B}_{n}(t)\leq k$.
However, the solution is not unique because $\widehat{B}_{n}(t)$
is a step function. To be conservative, let the sample-analog rule
be the smallest possible threshold to maximize $\widehat{B}_{n}(t)$:
\[
\widehat{t}=\min\left\{ \arg\max_{t}\{\widehat{B}_{n}(t)\text{ subject to }\widehat{B}_{n}(t)\leq k\}\right\} .
\]
Note that we can also write $\widehat{B}_{n}(t)=\frac{1}{n}\sum_{C_{i}=1}\mathbf{1}\{X_{i}\leq t\}$,
which makes it clear that $\widehat{t}$ corresponds to ranking $X_{i}$
among individuals with $C_{i}=1$, and then picking the lowest threshold
such that we assign treatment to the first $\left\lfloor k\cdot n\right\rfloor $
individuals. This also means if in the sample few individuals take
up the treatment such that $\frac{1}{n}\sum_{i}C_{i}\leq k$, we can
have a sample-analog rule that treats everyone up to $\max_{C_{i}=1}X_{i}$.
Taken together both scenarios, we note the sample-analog rule implies
the treated share in the sample is equal to
\[
\widehat{B}_{n}(\widehat{t})\coloneqq\frac{1}{n}\sum_{i}C_{i}\cdot\mathbf{1}\{X_{i}\leq\widehat{t}\}=\min\left\{ \frac{1}{n}\sum_{i}C_{i},\frac{\left\lfloor k\cdot n\right\rfloor }{n}\right\} .
\]
Note that 
\[
\begin{cases}
\widehat{t}>\overline{t}\Leftrightarrow & B(\widehat{t};P)>B(\overline{t};P)\\
\widehat{t}<\underline{t}\Leftrightarrow & B(\widehat{t};P)<B(\underline{t};P)\Leftrightarrow W(\widehat{t};P)<W(\underline{t};P)
\end{cases}
\]
which means whenever $\widehat{t}>\overline{t}$, the sample-analog
rule violates the constraint in the population as $B(\overline{t};P)=k$;
whenever $\widehat{t}<\underline{t}$, the sample-analog rule achieves
strictly less welfare than $t^{\ast}$ in the population because $W(\underline{t};P)$
is strictly less than $W(t^{\ast};P)$. We next derive the limit probability
for these two events. Applying Law of Total Probability, we have
\begin{align}
 & \Pr_{P^{n}}\left\{ \widehat{B}_{n}(\overline{t})<\widehat{B}_{n}(\widehat{t})\right\} \nonumber \\
= & \Pr_{P^{n}}\left\{ \widehat{B}_{n}(\overline{t})<\frac{\left\lfloor k\cdot n\right\rfloor }{n}\text{ and }\frac{\left\lfloor k\cdot n\right\rfloor }{n}<\frac{1}{n}\sum_{i}C_{i}\right\} \nonumber \\
 & +\Pr_{P^{n}}\left\{ \widehat{B}_{n}(\overline{t})<\frac{1}{n}\sum_{i}C_{i}\text{ and }\frac{\left\lfloor k\cdot n\right\rfloor }{n}\geq\frac{1}{n}\sum_{i}C_{i}\right\} \nonumber \\
\geq & \Pr_{P^{n}}\left\{ \widehat{B}_{n}(\overline{t})<\frac{\left\lfloor k\cdot n\right\rfloor }{n}\text{ and }\frac{\left\lfloor k\cdot n\right\rfloor }{n}<\frac{1}{n}\sum_{i}C_{i}\right\} \nonumber \\
\geq & \Pr_{P^{n}}\left\{ \widehat{B}_{n}(\overline{t})<\frac{\left\lfloor k\cdot n\right\rfloor }{n}\right\} +\Pr_{P^{n}}\left\{ \frac{\left\lfloor k\cdot n\right\rfloor }{n}<\frac{1}{n}\sum_{i}C_{i}\right\} -1\label{eq:sample-analog-mistake-expand}
\end{align}
For the first term in (\ref{eq:sample-analog-mistake-expand}), we
have the following lower bound
\begin{align*}
 & \Pr_{P^{n}}\left\{ \frac{1}{n}\sum_{i}C_{i}\cdot\mathbf{1}\{X_{i}\leq\overline{t}\}\leq\frac{k\cdot n-1}{n}\right\} \\
= & \Pr_{P^{n}}\left\{ \sqrt{n}\left(\frac{1}{n}\sum_{i}C_{i}\cdot\mathbf{1}\{X_{i}\leq\overline{t}\}-k\right)\leq-\frac{1}{\sqrt{n}}\right\} \rightarrow0.5
\end{align*}
To see the convergence, we apply the Central Limit Theorem to the
LHS, and note that $-\frac{1}{\sqrt{n}}$ converges to zero. Denote
$p_{C}=\Pr\{C=1\}$. For the second term in (\ref{eq:sample-analog-mistake-expand}),
we have the following lower bound
\begin{align*}
 & \Pr_{P^{n}}\left\{ \frac{1}{n}\sum_{i}C_{i}\geq\frac{k\cdot n}{n}\right\} \\
= & \Pr_{P^{n}}\left\{ \sqrt{n}\left(\frac{1}{n}\sum_{i}C_{i}-p_{C}\right)\geq\sqrt{n}\cdot\left(k-p_{C}\right)\right\} \rightarrow1
\end{align*}
To see the convergence, we apply the Central Limit Theorem to the
LHS, and note that $\sqrt{n}\cdot\left(k-p_{C}\right)$ diverges to
$-\infty$ for $p_{C}>k$. We thus conclude
\[
\lim_{n\rightarrow\infty}\Pr_{P^{n}}\left\{ B(\widehat{t};P)>B(\overline{t};P)\right\} \geq0.5
\]
which proves $\widehat{t}$ is not pointwise asymptotically feasible
under the distribution $P$.

Similar argument shows $\Pr_{P^{n}}\left\{ \widehat{B}_{n}(\underline{t})>\widehat{B}_{n}(\widehat{t})\right\} $
has a limit of at least one half. We thus conclude
\[
\lim_{n\rightarrow\infty}\Pr_{P^{n}}\left\{ B(\widehat{t};P)<B(\underline{t};P)\right\} \geq0.5\Leftrightarrow\lim_{n\rightarrow\infty}\Pr_{P^{n}}\left\{ W(\widehat{t};P)<W(\underline{t};P)\right\} \geq0.5
\]
which proves that $\widehat{t}$ is not pointwise asymptotically welfare-efficient
under the distribution $P$. Since $B(\underline{t};P)=B(\overline{t};P)$,
we actually have
\[
\lim_{n\rightarrow\infty}\Pr_{P^{n}}\left\{ B(\widehat{t};P)>B(\overline{t};P)\right\} =\lim_{n\rightarrow\infty}\Pr_{P^{n}}\left\{ W(\widehat{t};P)<W(\underline{t};P)\right\} =0.5
\]
\end{proof}

\begin{proof}
\textbf{Proof of Corollary \ref{prop:analog violations}.} The event
in (\ref{eq:sample-analog-violation}) is equivalent to the event
that $\hat{\mathcal{G}}$ includes at least one criteria that violates
the budget constraint by $c$. The derivation for the bound therefore
is based on such an event:
\begin{align*}
 & \Pr_{P^{n}}\left\{ \min_{B(g;P)>k+c}\frac{\sqrt{n}\left(\widehat{B}_{n}(g)-k\right)}{\Sigma^{B}(g,g)^{1/2}}\leq0\right\} \\
= & \Pr_{P^{n}}\left\{ \min_{B(g;P)>k+c}\left\{ \frac{\sqrt{n}\left(\widehat{B}_{n}(g)-B(g;P)\right)}{\Sigma^{B}(g,g)^{1/2}}+\frac{\sqrt{n}\left(B(g;P)-k\right)}{\Sigma^{B}(g,g)^{1/2}}\right\} \leq0\right\} \\
\leq & \Pr_{P^{n}}\left\{ \min_{B(g;P)>k+c}\frac{\sqrt{n}\left(\widehat{B}_{n}(g)-B(g;P)\right)}{\Sigma^{B}(g,g)^{1/2}}\leq-\min_{B(g;P)>k+c}\frac{\sqrt{n}\left(B(g;P)-k\right)}{\Sigma^{B}(g,g)^{1/2}}\right\} \\
\leq & \Pr_{P^{n}}\left\{ \min_{B(g;P)>k+c}\frac{\sqrt{n}\left(\widehat{B}_{n}(g)-B(g;P)\right)}{\Sigma^{B}(g,g)^{1/2}}\leq-\frac{\min_{B(g;P)>k+c}\sqrt{n}\left(B(g;P)-k\right)}{\max_{B(g;P)>k+c}\Sigma^{B}(g,g)^{1/2}}\right\} \\
< & \Pr_{P^{n}}\left\{ \min_{B(g;P)>k+c}\frac{\sqrt{n}\left(\widehat{B}_{n}(g)-B(g;P)\right)}{\Sigma^{B}(g,g)^{1/2}}\leq\frac{-\sqrt{n}\cdot c}{\max_{B(g;P)>k+c}\Sigma^{B}(g,g)^{1/2}}\right\} \\
< & \Pr_{P^{n}}\left\{ \min_{g\in\mathcal{G}}\frac{\sqrt{n}\left(\widehat{B}_{n}(g)-B(g;P)\right)}{\Sigma^{B}(g,g)^{1/2}}\leq\frac{-\sqrt{n}\cdot c}{\max_{g\in\mathcal{G}}\Sigma^{B}(g,g)^{1/2}}\right\} 
\end{align*}
Under Assumption \ref{cond:known ps Donsker}, the empirical process
$\left\{ \sqrt{n}\left(\widehat{B}_{n}(g)-B(g;P)\right)\right\} $
converges to a Gaussian process $G_{P}^{B}$ for $G_{P}^{B}(\cdot)\sim\mathcal{GP}(0,\Sigma_{P}^{B}(\cdot,\cdot))$
and we have a consistent covariance estimate $\widehat{\Sigma}^{B}(\cdot,\cdot)$,
which proves the corollary. %

\end{proof}

\begin{proof}
\textbf{Proof of Theorem \ref{thm: uniform size}.} By construction,
the limit probability for any policy in $\hat{\mathcal{G}}_{\alpha}$
to violate the budget constraint is
\begin{align*}
 & \Pr_{P^{n}}\{\exists g:g\in\hat{\mathcal{G}}_{\alpha}\text{ and }B(g;P)>k\}=\Pr_{P^{n}}\{\min_{B(g;P)>k}\frac{\sqrt{n}\left(\widehat{B}_{n}(g)-k\right)}{\widehat{\Sigma}^{B}(g,g)^{1/2}}\leq c_{\alpha}\}\\
\leq & \Pr_{P^{n}}\{\min_{B(g;P)>k}\frac{\sqrt{n}\left(\widehat{B}_{n}(g)-B(g;P)\right)}{\widehat{\Sigma}^{B}(g,g)^{1/2}}\leq c_{\alpha}\}\\
\leq & \Pr_{P^{n}}\{\min_{g\in\mathcal{G}}\frac{\sqrt{n}\left(\widehat{B}_{n}(g)-B(g;P)\right)}{\widehat{\Sigma}^{B}(g,g)^{1/2}}\leq c_{\alpha}\}
\end{align*}
Under Assumption \ref{cond:known ps Donsker}, uniformly over $P\in\mathcal{P}$,
the empirical process $\left\{ \sqrt{n}\left(\widehat{B}_{n}(g)-B(g;P)\right)\right\} $
converges to a Gaussian process $G_{P}^{B}$ for $G_{P}^{B}(\cdot)\sim\mathcal{GP}(0,\Sigma_{P}^{B}(\cdot,\cdot))$
and we have a consistent covariance estimate $\widehat{\Sigma}^{B}(\cdot,\cdot)$.
Then by the definition of $c_{\alpha}$, we have
\begin{align*}
 & \limsup_{n\rightarrow\infty}\sup_{P\in\mathcal{P}}\Pr_{P^{n}}\{\min_{g\in\mathcal{G}}\frac{\sqrt{n}\left(\widehat{B}_{n}(g)-B(g;P)\right)}{\widehat{\Sigma}^{B}(g,g)^{1/2}}\leq c_{\alpha}\}\\
= & \sup_{P\in\mathcal{P}}\Pr_{P^{n}}\{\inf_{\text{ }g\in\mathcal{G}}\frac{G_{P}^{B}(g)}{\Sigma_{P}^{B}(g,g)^{1/2}}\leq c_{\alpha}\}=\alpha.
\end{align*}
\end{proof}
\begin{proof}
\textbf{Proof of  Corollary~\ref{cor:optimal mistake control}
}Uniformly asymptotic feasibility
follows from Theorem \ref{thm: uniform size}, replacing $\alpha$
with $\alpha_{n}$ in its proof. Specifically, by construction we
have 
\[
\limsup_{n\rightarrow\infty}\sup_{P\in\mathcal{P}}\Pr_{P^{n}}\{B(\hat{g}_{\alpha_n};P)>k\}\leq\limsup_{n\rightarrow\infty}\sup_{P\in\mathcal{P}}\Pr_{P^{n}}\{\exists g:g\in\hat{\mathcal{G}}_{\alpha_{n}}\text{ and }B(g;P)>k\}\leq\alpha_{n}.
\]
For the second part, we decompose
the welfare loss into
\begin{align*}
 & W(g_{P}^{\ast};P)-W(\hat{g}_{\alpha_n};P)\\
= & W(g_{P}^{\ast};P)-\widehat{W}_{n}(\hat{g}_{\alpha_n})+\widehat{W}_{n}(\hat{g}_{\alpha_n})-W(\hat{g}_{\alpha_n};P)\\
\leq & \sup_{g\in\mathcal{G}}2\vert W(g;P)-\widehat{W}_{n}(g)\vert+\widehat{W}_{n}(g_{P}^{\ast})-\widehat{W}_{n}(\hat{g}_{\alpha_n})
\end{align*}
Under the event that $g_{P}^{\ast}\in\hat{\mathcal{G}}_{\alpha_{n}}$
, we are guaranteed that the last term is non-positive. This happens
with probability
\[
\Pr_{P^{n}}\{\frac{\sqrt{n}\left(\widehat{B}_{n}(g_{P}^{\ast})-k\right)}{\widehat{\Sigma}^{B}(g_{P}^{\ast},g_{P}^{\ast})^{1/2}}\leq c_{\alpha_{n}}\}\stackrel{a}{\sim}\Phi\left(c_{\alpha_{n}}+\frac{\sqrt{n}\left(k-B(g_{P}^{\ast};P)\right)}{\Sigma_{P}^{B}(g_{P}^{\ast},g_{P}^{\ast})^{1/2}}\right)\rightarrow1
\]
for $B(g_{P}^{\ast};P)$ strictly below the threshold. Since $c_{\alpha_{n}}$
diverges to $-\infty$ at a rate slower $\sqrt{n}$, the term in the
parenthesis diverges to $\infty$ as $n\rightarrow\infty$. We then
apply Lemma \ref{lem:known ps Glivenko=002013Cantelli} and \ref{lem:ps vanish estimation error}
in the Appendix (proved in Section \ref{subsec:Results-with-estimated}
under primitive conditions that lead to Assumption \ref{cond:known ps Donsker}),
which shows $\sup_{g\in\mathcal{G}}\vert W(g;P)-\widehat{W}_{n}(g)\vert$
converges to zero in probability uniformly. We then conclude $\hat{g}_{\alpha_n}$
is asymptotically welfare-efficient uniformly under  $\mathcal{P}$.
\end{proof}

\begin{proof}
\textbf{Proof of  Corollary~\ref{cor:positive cost}
}Uniformly asymptotic feasibility
follows exactly as in the Proof of Corollary~\ref{cor:optimal mistake control}.  To prove uniformity in asymptotic welfare-efficiency, I slightly abuse notation by indexing the policy class by $\theta$
in the welfare and budget functions below.
Since the policy class includes thresholding on continuously distributed
$X$, and by assumption the conditional benefit and cost are positive, the welfare function $W(\theta;P)$ and the budget function $B(\theta;P)$
are both continuous and increases in $\theta$. That is,
whenever $\theta\prec\theta'$ where $\prec$ denotes element-wise
inequality, we have $W(\theta;P)<W(\theta';P)$ and $B(\theta;P)<B(\theta';P).$ Therefore, at the constrained optimal threshold $\theta_{P}^{\ast}$, the constraint is binding $B(\theta_{P}^{\ast};P)=k.$  By the
continuity and strict monotonicity of $W(\theta;P)$, for any $\epsilon>0$,
there exists $\theta_{P,\epsilon}\prec\theta_{P}^{\ast}$ such that
$W(\theta_{P,\epsilon};P)=W(\theta_{P}^{\ast};P)-\epsilon$. 

Construct a sequence $b_{n}=-2c_{\alpha_{n}}/\sqrt{n}$,
which decreases to zero since $c_{\alpha_{n}}$ diverges to $-\infty$
at a rate slower $\sqrt{n}$. Consider $B(\theta_{P}^{\ast};P)-B(\theta_{P,\epsilon};P)=k-B(\theta_{P,\epsilon};P)$,
which is strictly positive for all $P\in\mathcal{P}$ because by assumption the
conditional cost is bounded away from zero. Therefore there exists
$N$ large enough such that for all $P\in\mathcal{P}$ we have
\begin{equation}
k-B(\theta_{P,\epsilon};P)\geq b_{n}\cdot\Sigma_{P}^{B}(\theta_{P,\epsilon},\theta_{P,\epsilon})^{1/2}>0,\ \forall n\geq N.\label{eq:cost bdd below}
\end{equation}

We decompose the welfare loss into
\begin{align*}
 & W(\theta_{P}^{\ast};P)-W(\hat{\theta}_{\alpha_{n}};P)\\
= & W(\theta_{P}^{\ast};P)-W(\theta_{P,\epsilon};P)+W(\theta_{P,\epsilon};P)-W(\hat{\theta}_{\alpha_{n}};P)\\
= & \epsilon+W(\theta_{P,\epsilon};P)-\widehat{W}_{n}(\theta_{P,\epsilon})+\widehat{W}_{n}(\theta_{P,\epsilon})-\widehat{W}_{n}(\hat{\theta}_{\alpha_{n}})+\widehat{W}_{n}(\hat{\theta}_{\alpha_{n}})-W(\hat{\theta}_{\alpha_{n}};P)\\
\leq & \epsilon+\sup_{g\in\mathcal{G}}2\vert W(g;P)-\widehat{W}_{n}(g)\vert+\widehat{W}_{n}(\theta_{P,\epsilon})-\widehat{W}_{n}(\hat{\theta}_{\alpha_{n}})
\end{align*}
Under the event that $\theta_{P,\epsilon}\in\hat{\mathcal{G}}_{\alpha_{n}}$, we are guaranteed that the last term is non-positive. This happens
with probability
\[
\Pr_{P^{n}}\{\frac{\sqrt{n}\left(\widehat{B}_{n}(\theta_{P,\epsilon})-k\right)}{\widehat{\Sigma}^{B}(\theta_{P,\epsilon},\theta_{P,\epsilon})^{1/2}}\leq c_{\alpha_{n}}\}\stackrel{a}{\sim}\Phi\left(c_{\alpha_{n}}+\frac{\sqrt{n}\left(k-B(\theta_{P,\epsilon};P)\right)}{\Sigma_{P}^{B}(\theta_{P,\epsilon},\theta_{P,\epsilon})^{1/2}}\right).
\]
From (\ref{eq:cost bdd below}), we have
\[
\lim_{n\rightarrow\infty}\inf_{P\in\mathcal{P}}\Phi\left(c_{\alpha_{n}}+\frac{\sqrt{n}\left(k-B(\theta_{P,\epsilon};P)\right)}{\Sigma_{P}^{B}(\theta_{P,\epsilon},\theta_{P,\epsilon})^{1/2}}\right)\geq\lim_{n\rightarrow\infty}\Phi\left(c_{\alpha_{n}}-2c_{\alpha_{n}}\right)=\lim_{n\rightarrow\infty}\Phi\left(-c_{\alpha_{n}}\right)=1
\]
as $c_{\alpha_{n}}$ diverges to $-\infty$ as $n\rightarrow\infty$.
We then apply Lemma \ref{lem:known ps Glivenko=002013Cantelli} and
\ref{lem:ps vanish estimation error} in the Appendix (proved
in Section \ref{subsec:Results-with-estimated} under primitive conditions
that lead to Assumption \ref{cond:known ps Donsker}), which shows
$\sup_{g\in\mathcal{G}}\vert W(g;P)-\widehat{W}_{n}(g)\vert$ converges
to zero in probability uniformly over $P\in\mathcal{P}$. We then
conclude $\hat{g}_{\alpha_{n}}$ is uniformly asymptotically welfare-efficient
under  $\mathcal{P}$.
\end{proof}
\begin{proof}
\textbf{Proof of Lemma \ref{lem:hinge higher value}.} By definition
\begin{align*}
W(g_{P}^{\ast};P)= & \max_{B(g;P)\leq k}W(g;P)=\max_{B(g;P)\leq k}W(g;P)-\frac{\overline{\lambda}}{r}\cdot(B(g;P)-k)_{+}\\
 & \leq\max_{g\in\mathcal{G}}W(g;P)-\frac{\overline{\lambda}}{r}\cdot(B(g;P)-k)_{+}\\
 & =W(\widetilde{g}_{P};P)-\frac{\overline{\lambda}}{r}\cdot(B(\widetilde{g}_{P};P)-k)_{+}\leq W(\widetilde{g}_{P};P)
\end{align*}
and suppose $B(\widetilde{g}_{P};P)>k$, we have the following upper
bound for violations to the budget constraint
\[
B(\widetilde{g}_{P};P)-k\leq\frac{r\cdot (W(\widetilde{g}_{P};P)-W(g_{P}^{\ast};P))}{\overline{\lambda}}.
\]
\end{proof}
\begin{proof}
    \textbf{Proof of Lemma~\ref{lem:hinge consistency}}
    
Recall the new objective function with $V(g;P)=W(g;P)-\frac{\overline{\lambda}}{r}\cdot(B(g;P)-k)_{+}$,
whose maximizer is $\widetilde{g}_{P}$. Recall $\widehat{V}_{n}(g)=\widehat{W}_{n}(g)-\frac{\overline{\lambda}}{r}\cdot(\widehat{B}_{n}(g)-k)_{+}$
is the sample-analog, whose maximizer is $\widehat{g}_{\text{tradeoff}}$.
Apply uniform deviation bound to the difference between the value
under $\widehat{g}_{\text{tradeoff}}$ and $\widetilde{g}_{P}$, we
have
\begin{align*}
 & V(\widetilde{g}_{P};P)-V(\widehat{g}_{\text{tradeoff}};P)\\
= & V(\widetilde{g}_{P};P)-\widehat{V}_{n}(\widehat{g}_{\text{tradeoff}})+\widehat{V}_{n}(\widehat{g}_{\text{tradeoff}})-V(\widehat{g}_{\text{tradeoff}};P)\\
\leq & V(\widetilde{g}_{P};P)-\widehat{V}_{n}(\widetilde{g}_{P})+\widehat{V}_{n}(\widehat{g}_{\text{tradeoff}})-V(\widehat{g}_{\text{tradeoff}};P)\\
\leq & 2\sup_{g}\left|V(g;P)-\widehat{V}_{n}(g)\right|\\
= & 2\sup_{g}\left|W(g;P)-\widehat{W}_{n}(g)-\frac{\overline{\lambda}}{r}\cdot(\max\{B(g;P)-k,0\}-\max\{\widehat{B}_{n}(g)-k,0\})\right|\\
\leq & 2\sup_{g}\left|\widehat{W}_{n}(g)-W(g;P)\right|+2\frac{\overline{\lambda}}{r}\cdot\sup_{g}\left|\max\{\widehat{B}_{n}(g)-k,0\}-\max\{B(g;P)-k,0\}\right|\\
\leq & 2\sup_{g}\left|\widehat{W}_{n}(g)-W(g;P)\right|+2\frac{\overline{\lambda}}{r}\cdot\sup_{g}\left|\widehat{B}_{n}(g)-B(g;P)\right|
\end{align*}
The last line uses the fact that $\left|\max\{a,0\}-\max\{b,0\}\right|\leq\left|a-b\right|$.
Both terms, $\sup_{g}\left|\widehat{W}_{n}(g)-W(g;P)\right|$ and
$\sup_{g}\left|\widehat{B}_{n}(g)-B(g;P)\right|$ converge to zero
in probability under Assumption \ref{cond:known ps Donsker}. Specifically,
Lemma \ref{lem:known ps Glivenko=002013Cantelli} and \ref{lem:ps vanish estimation error}
in the Appendix (proved in Section \ref{subsec:Results-with-estimated}
under primitive conditions that lead to Assumption \ref{cond:known ps Donsker})
imply the uniform convergence in probability
\[
\sup_{P\in\mathcal{P}}\sup_{g\in\mathcal{G}}\left|\widehat{W}_{n}(g)-W(g;P)\right|\rightarrow_{p}0,\ \sup_{P\in\mathcal{P}}\sup_{g\in\mathcal{G}}\left|\widehat{B}_{n}(g)-B(g;P)\right|\rightarrow_{p}0.
\]
At the same time, by definition we have $V(\widetilde{g}_{P};P)-V(\widehat{g}_{\text{tradeoff}};P)\geq0$.
We thus conclude 
\[
\sup_{P\in\mathcal{P}}\left|V(\widehat{g}_{\text{tradeoff}};P)-V(\widetilde{g}_{P};P)\right|\rightarrow_{p}0.
\]
\end{proof}
\begin{proof}
\textbf{Proof of Theorem \ref{thm:hinge tradeoff}. }

By Lemma \ref{lem:hinge higher value}, we have $W(g_{P}^{\ast};P)\leq V(\widetilde{g}_{P};P)\leq W(\widetilde{g}_{P};P)$
  for any $P\in\mathcal{P}$.
Putting these together with Lemma~\ref{lem:hinge consistency}, we have
\begin{align*}
 & \inf_{P\in\mathcal{P}}\left\{ W(\widehat{g}_{\text{tradeoff}};P)-W(g_{P}^{\ast};P)\right\} \\
\geq & \inf_{P\in\mathcal{P}}\left\{ V(\widehat{g}_{\text{tradeoff}};P)-W(g_{P}^{\ast};P)\right\} \\
\geq & \inf_{P\in\mathcal{P}}\left\{ V(\widehat{g}_{\text{tradeoff}};P)-V(\widetilde{g}_{P};P)\right\} +\inf_{P\in\mathcal{P}}\left\{ V(\widetilde{g}_{P};P)-W(g_{P}^{\ast};P)\right\} \\
\geq & \inf_{P\in\mathcal{P}}\left\{ V(\widehat{g}_{\text{tradeoff}};P)-V(\widetilde{g}_{P};P)\right\} \rightarrow_{p}0.
\end{align*}
which proves uniform asymptotic welfare-efficiency.

By the uniform deviation bound used in the proof for Lemma~\ref{lem:hinge consistency}, applied to the difference between the value
under $\widehat{g}_{\text{tradeoff}}$ and $g_{P}^{\ast}$, we have that uniformly
with probability approaching one
\begin{align*}
 & V(g_{P}^{\ast};P)-V(\widehat{g}_{\text{tradeoff}};P)\leq0
\end{align*}
which is equivalent to $(B(\widehat{g}_{\text{tradeoff}};P)-k)_{+}\leq\frac{r\cdot(W(\widehat{g}_{\text{tradeoff}};P)-W(g_{P}^{\ast};P))}{\overline{\lambda}}$.
\end{proof}

\section{Primitive assumptions and auxiliary lemmas\label{sec:Primitive-assumptions-and}}

I first prove the optimal rule that solves the population constrained
optimization problem takes the form of threshold. In Section \ref{subsec:Primitive-for-contiguity},
I first provide primitive assumptions on the class of DGPs. I then
prove in Lemma \ref{thm:DQM contiguity}, which establishes that these
primitive assumptions imply Assumption \ref{cond:conv constraint}.

In Section \ref{subsec:Results-with-estimated}, I verify Assumption
\ref{cond:known ps Donsker} for settings where the observed sample
comes from an RCT or an observational study, and the propensity score
can be estimated efficiently based on parametric regressions.

\subsection{Constrained optimal rule without functional form restriction}

The population problem is to find rules based on $X_{i}$ that solves
\[
\max_{g:\mathcal{X}\rightarrow\{0,1\}}\E[\tau_{i}g(X_{i})]\text{ subject to }\E[C_{i}g(X_{i})]<k
\]

By Law of Iterated Expectation, we can write the constrained optimization
problem as 
\begin{align*}
\max_{g:\mathcal{X}\rightarrow\{0,1\}}\E[\gamma(X_{i})g(X_{i})]\text{ subject to }\E[r(X_{i})g(X_{i})] & <k
\end{align*}
where $\gamma(X_{i})=\E[\tau_{i}\mid X_{i}]$ and $r(X_{i})=\E[C_{i}\mid X_{i}]$. 
\begin{claim}
Let $d\mu=r(x)f(x)dx$ denote the positive measure. The constrained
optimization problem is equivalent to 
\begin{align*}
 & \max_{g:\mathcal{X}\rightarrow\{0,1\}}\int\frac{\gamma(x)}{r(x)}g(x)d\mu\text{ subject to }\int g(x)d\mu=k
\end{align*}
Let $X^{\ast}$ be the support of the solution $g^{\ast}$. It will
take the form of $X^{\ast}=\{x:\frac{\gamma(x)}{r(x)}>c\}$ where
$c$ is chosen so that $\mu(X^{\ast})=k$. 
\end{claim}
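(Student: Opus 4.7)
The plan is to establish (i) the equivalence of the two optimization problems via a change of measure and (ii) the threshold form of the optimizer via a Neyman--Pearson-style exchange argument.

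First, I would apply the Law of Iterated Expectation together with the change of variables $d\mu = r(x) f(x) \, dx$, where $f$ is the density of $X$. Assuming $r(x) > 0$ on the support of $X$, one has $E[\Gamma_i g(X_i)] = E[\gamma(X_i) g(X_i)] = \int \gamma(x) g(x) f(x)\, dx = \int (\gamma(x)/r(x)) g(x)\, d\mu$, and similarly $E[R_i g(X_i)] = \int g(x)\, d\mu$. This yields the reformulated objective and constraint directly. That the constraint binds with equality at the optimum follows because any $x$ with $\gamma(x)/r(x) > 0$ strictly improves the objective whenever added without losing feasibility, so under a mild non-degeneracy assumption (e.g.\ $\mu\{\gamma/r > 0\} \geq k$) the inequality $\int g\, d\mu \leq k$ is tight.

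Second, to prove the thresholding structure, I would take any feasible $g$ with $\int g\, d\mu \leq k$ and compare it to $g^*(x) = \mathbf{1}\{\gamma(x)/r(x) > c\}$ with $c \geq 0$ chosen so that $\mu(X^*) = k$. The key decomposition is
\[
\int \tfrac{\gamma(x)}{r(x)} (g^*(x) - g(x))\, d\mu = \int_{\gamma/r > c} \tfrac{\gamma}{r}(1-g)\, d\mu \;-\; \int_{\gamma/r \leq c} \tfrac{\gamma}{r}\, g\, d\mu.
\]
On $\{\gamma/r > c\}$ the integrand of the first term satisfies $(\gamma/r)(1-g) \geq c(1-g)$ since $1-g \geq 0$; on $\{\gamma/r \leq c\}$ the integrand of the second term satisfies $-(\gamma/r) g \geq -c g$ since $g \geq 0$. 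Summing these pointwise bounds gives $\int (\gamma/r)(g^* - g)\, d\mu \geq c \int (g^* - g)\, d\mu = c\bigl(k - \int g\, d\mu\bigr) \geq 0$, so $g^*$ attains the maximum.

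The main obstacle I anticipate is handling atoms of the distribution of $\gamma(X)/r(X)$ at the threshold $c$, for which no strict-inequality set attains $\mu$-measure exactly $k$. This is resolved in the standard Neyman--Pearson fashion by allowing boundary randomization on $\{\gamma/r = c\}$; under continuity of the distribution of $\gamma/r$ (which holds, e.g., when $X$ is continuously distributed and $\gamma, r$ are sufficiently regular), the map $c \mapsto \mu(\{\gamma/r > c\})$ is continuous and a threshold attaining $\mu(X^*) = k$ exists outright. A minor secondary case is when $\mu(\{\gamma/r > 0\}) < k$: here the constraint is slack, and the unconstrained optimum $g^*(x) = \mathbf{1}\{\gamma(x) > 0\}$, corresponding to $c = 0$, solves the problem.
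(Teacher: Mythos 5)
Your proposal is correct and takes essentially the same route as the paper: both prove optimality of the threshold rule by a Neyman--Pearson-style exchange argument, decomposing $\int(\gamma/r)(g^*-g)\,d\mu$ over the sets where $g^*$ and $g$ disagree, bounding the integrand by $c$ on each piece, and using $\mu(X^*)=k$ to conclude the difference is nonnegative. Your version is marginally more careful than the paper's in that it allows comparators with $\int g\,d\mu<k$ (via $c\geq 0$) and addresses existence of the threshold $c$ when $\gamma/r$ has atoms, neither of which the paper's proof treats explicitly.
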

\begin{proof}
Let $X$ be the support of any $g\neq g^{\ast}$ with $\mu(X)=k$.
Then the objective function associated $g$ is
\begin{align*}
\int_{X^{\ast}}\frac{\gamma}{r}d\mu-\int_{X}\frac{\gamma}{r}d\mu & =\int\frac{\gamma(x)}{r(x)}\mathbf{1}\{x\in X^{\ast}\}d\mu-\int\frac{\gamma(x)}{r(x)}\mathbf{1}\{x\in X\}d\mu\\
 & =\int\frac{\gamma(x)}{r(x)}\left(\mathbf{1}\{x\in X^{\ast}\backslash X\}-\mathbf{1}\{x\in X\backslash X^{\ast}\}\right)d\mu
\end{align*}
By definition of $X^{\ast}$, we have $\frac{\gamma(x)}{r(x)}>c$
for $x\in X^{\ast}\backslash X$ and $\frac{\gamma(x)}{r(x)}<c$ for
$x\in X\backslash X^{\ast}$. Also note that $\mu$ is a positive
measure. Then the above difference is lower bounded by
\[
\int\frac{\gamma(x)}{r(x)}\left(\mathbf{1}\{x\in X^{\ast}\backslash X\}-\mathbf{1}\{x\in X\backslash X^{\ast}\}\right)d\mu\geq c\int\left(\mathbf{1}\{x\in X^{\ast}\backslash X\}-\mathbf{1}\{x\in X\backslash X^{\ast}\}\right)d\mu\geq0
\]
as by construction, we have \[\int\left(\mathbf{1}\{x\in X^{\ast}\backslash X\}-\mathbf{1}\{x\in X\backslash X^{\ast}\}\right)d\mu=\int\left(\mathbf{1}\{x\in X^{\ast}\}-\mathbf{1}\{x\in X\}\right)d\mu=0\]
since $\mu(X^{\ast})=\mu(X)=k$.
\end{proof}

\subsection{Primitive assumptions for contiguity\label{subsec:Primitive-for-contiguity}}
\begin{assumption}
\label{assu:DQM}Assume the class of DGPs $\{P_{\theta}:\theta\in\Theta\}$
has densities $p_{\theta}$ with respect to some measure $\mu$. Assume
$P_{\theta}$ is DQM at $P_{0}$ i.e. $\exists\dot{\ell}_{0}$ s.t.
$\int[\sqrt{p_{h}}-\sqrt{p_{0}}-\frac{1}{2}h'\dot{\ell}_{0}\sqrt{p_{0}}]^{2}d\mu=o(\left\Vert h^{2}\right\Vert )$
for $h\rightarrow0$. 
\end{assumption}
\begin{assumption}
\label{assu:regular}For all policies $g$, $B(g;P_{\theta})$ is
twice continuously differentiable in $\theta$ at $0$, and the derivatives
are bounded from above and away from zero within an open neighborhood
$\mathcal{N}_{\theta}$ of zero uniformly over $g\in\mathcal{G}$. 
\end{assumption}
\begin{lem}
\label{thm:DQM contiguity}Under Assumption \ref{assu:DQM}, the class
$\mathcal{P}$ includes a sequence of data distribution $\{P_{h_{n}}\}$
that is contiguous to $P_{0}$ for every $h_{n}$ satisfying $\sqrt{n}h_{n}\rightarrow h$
e.g. take $h_{n}=h/\sqrt{n}$. This proves the first part of Assumption
\ref{cond:conv constraint}. Suppose further Assumption \ref{assu:regular}
holds, then there exists some $h$ for the second part of Assumption
\ref{cond:conv constraint} to hold,.
\end{lem}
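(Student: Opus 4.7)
} The plan splits naturally into the two parts of the conclusion. For the first part (contiguity), the strategy is to invoke Le Cam's first lemma. Assumption \ref{assu:DQM} is precisely differentiability in quadratic mean of the parametric family $\{p_\theta\}$ at $\theta=0$, with score $\dot\ell_0$ and information $I_0 = E_{P_0}[\dot\ell_0\dot\ell_0']$. A standard consequence (van der Vaart, \emph{Asymptotic Statistics}, Theorem 7.2) is that along any sequence $h_n$ with $\sqrt{n}h_n \to h$, the log-likelihood ratio admits the LAN expansion
\[
\log \prod_{i=1}^{n} \frac{p_{h_n}(A_i)}{p_0(A_i)} \;=\; \frac{1}{\sqrt{n}} \sum_{i=1}^{n} h' \dot\ell_0(A_i) \;-\; \tfrac{1}{2} h' I_0 h \;+\; o_{P_0^n}(1),
\]
which under $P_0^n$ converges in distribution to $\mathcal{N}(-\tfrac12 h'I_0 h,\,h'I_0 h)$. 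Le Cam's first lemma then gives mutual contiguity $P_{h_n}^n \triangleleft\!\triangleright P_0^n$, which is the first part of Assumption \ref{cond:conv constraint}. The qualifier ``the class $\mathcal{P}$ includes such a sequence'' is simply the observation that the parametric family $\{P_\theta\}$ is contained in $\mathcal{P}$, so picking $h_n = h/\sqrt{n}$ gives a valid sequence in $\mathcal{P}$.

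For the second part (quantitative gap in the budget function), the plan is a uniform Taylor expansion. Assumption \ref{assu:regular} guarantees that for every $g$, the map $\theta \mapsto B(g;P_\theta)$ is twice continuously differentiable at $0$ with first and second derivatives uniformly bounded over $g \in \mathcal{G}$ on a neighborhood $\mathcal{N}_\theta$ of $0$. Expanding at $\theta=0$ and using $B(g;P_0)=k$ for $g\in\mathcal{G}_0$,
\[
B(g;P_{h_n}) - k \;=\; h_n' \nabla_\theta B(g;P_0) \;+\; O(\|h_n\|^2),
\]
with remainder bounded uniformly in $g$. Taking $h_n=h/\sqrt{n}$ and multiplying by $\sqrt{n}$,
\[
\sqrt{n}\bigl(B(g;P_{h_n})-k\bigr) \;=\; h' \nabla_\theta B(g;P_0) \;+\; O\!\bigl(n^{-1/2}\bigr),
\]
uniformly in $g\in\mathcal{G}_0$. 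Since Assumption \ref{assu:regular} keeps the partial derivatives of $B(g;P_\theta)$ uniformly bounded away from zero (hence of one sign by continuity, uniformly in $g$), choose $h$ as a positive multiple of the corresponding coordinate direction so that $h'\nabla_\theta B(g;P_0) \geq 2C$ uniformly over $g\in\mathcal{G}_0$ for some $C>0$. Then for all $n$ large enough, $\sqrt{n}(B(g;P_{h_n})-k) > C$ uniformly in $g\in\mathcal{G}_0$, as required.

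The hard part is not the computations themselves but the \emph{uniformity in $g$}: one must produce a single direction $h$ such that the leading linear term is positive and bounded away from zero simultaneously over all $g \in \mathcal{G}_0$, and one must also ensure the second-order remainder in the Taylor expansion vanishes uniformly in $g$. Both are delivered by the ``uniformly over $g \in \mathcal{G}$'' clauses in Assumption \ref{assu:regular} (uniform boundedness away from zero of the first derivative gives the direction; uniform boundedness of the second derivative on $\mathcal{N}_\theta$ gives the remainder control). Beyond this, the contiguity step is a direct citation of Le Cam's first lemma and the budget-gap step is a one-term Taylor expansion, so no further subtleties are expected.
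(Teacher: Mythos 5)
Your proposal is correct and follows essentially the same route as the paper: Theorem 7.2 of van der Vaart plus Le Cam's first lemma for mutual contiguity, and a Taylor expansion of $\theta\mapsto B(g;P_{\theta})$ with the uniform derivative bounds of Assumption \ref{assu:regular} for the budget gap. Your handling of the second-order remainder is in fact slightly more careful than the paper's (which asserts the inequality for every $n$, implicitly requiring a sign condition on the remainder), since you absorb it as a uniform $O(n^{-1/2})$ term and conclude for all sufficiently large $n$, which is all the asymptotic statement in Assumption \ref{cond:conv constraint} requires.
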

\begin{proof}
\textbf{Proof of Lemma \ref{thm:DQM contiguity}.} By Theorem 7.2
of \citet{vaart_asymptotic_1998}, the log likelihood ratio process
converges under $P_{0}$ (denoted with $\overset{p_{0}}{\rightsquigarrow}$)
to a normal experiment 
\begin{align}
\log\prod_{i=1}^{n}\frac{p_{h_{n}}}{p_{0}}(A_{i}) & =\frac{1}{\sqrt{n}}\sum_{i=1}^{n}h'\dot{\ell}_{0}(A_{i})-\frac{1}{2}h'I_{0}h+o_{P_{0}}(1)\nonumber\\
 & \overset{p_{0}}{\rightsquigarrow}\mathcal{N}\left(-\frac{1}{2}h'I_{0}h,h'I_{0}h\right)\label{eq:Gaussian exp}
\end{align}
where $\dot{\ell}_{0}$ is the score and $I_{P_{0}}=E_{P_{0}}[\dot{\ell}_{0}(A_{i})\dot{\ell}_{0}(A_{i})']$
exists. The convergence in distribution of the log likelihood ratio
to a normal with mean equal to $-\frac{1}{2}$ of its variance in
(\ref{eq:Gaussian exp}) implies mutual contiguity $P_{0}^{n}\vartriangleleft\vartriangleright P_{h_{n}}^{n}$
by Le Cam's first lemma (see Example 6.5 of \citet{vaart_asymptotic_1998}).
This proves the first part of the lemma.

By Taylor's theorem with remainder we have for each policy $g$
\[
B(g;P_{h/\sqrt{n}})-B(g;P_{0})=\frac{h'}{\sqrt{n}}\frac{\partial B(g;P_{0})}{\partial\theta}+\frac{1}{2}\frac{1}{n}h'\frac{\partial^{2}B(g;P_{\tilde{\theta}_{n}})}{\partial\theta\partial\theta'}h
\]
 where $\tilde{\theta}_{n}$ is a sequence of values with $\tilde{\theta}_{n}\in[0,h/\sqrt{n}]$
that can depend on $g$. Take $h$ so that the first term is positive
for policies with $B(g;P_{0})=k$. Such $h$ exists because we assume
$\frac{\partial B(g;P_{0})}{\partial\theta}$ is bounded away from
zero. For $g\in\mathcal{G}_{0}$ where $\mathcal{G}_{0}=\left\{ g:B(g;P_{0})=k\right\} $,
the constraints are violated under $P_{h_{n}}$ and furthermore (multiplying
by $\sqrt{n}$) 
\begin{align*}
\sqrt{n}\cdot(B(g;P_{h_{n}})-k) & >h'\frac{\partial B(g;P_{0})}{\partial\theta}>0
\end{align*}
for every $n$. This proves the second part of the lemma for $c=\inf_{g\in\mathcal{G}_{0}}\left|h'\frac{\partial B(g;P_{0})}{\partial\theta}\right|$
. 
\end{proof}

\subsection{Primitive assumptions and proofs for estimation quality \label{subsec:Results-with-estimated}}

In this section, I verify that $\widehat{W}_{n}(g)$ and $\widehat{B}_{n}(g)$
satisfy Assumption \ref{cond:known ps Donsker} under primitive assumptions
on the policy class and the OHIE. 

\begin{assumption}
\label{assu:VC class}VC-class: The policy class $\mathcal{G}$ has
a finite VC-dimension $v<\infty$.
\end{assumption}
To introduce the assumptions on the OHIE, I first recall the definitions
of components in $\widehat{W}_{n}(g)$ and $\widehat{B}_{n}(g)$:
\[
\widehat{W}_{n}(g)\coloneqq\frac{1}{n}\sum_{i}\tau_{i}^{\ast}\cdot g(X_{i}),\ \widehat{B}_{n}(g)\coloneqq\frac{1}{n}\sum_{i}C_{i}^{\ast}\cdot g(X_{i})
\]
 for the doubly-robust scores
\begin{align*}
\tau_{i}^{\ast} & =\widehat{\gamma}^{Y}(V_{i},1)-\widehat{\gamma}^{Y}(V_{i},0)+\widehat{\alpha}(V_{i},D_{i})\cdot\left(Y_{i}-\widehat{\gamma}^{Y}(V_{i},D_{i})\right)\\
C_{i}^{\ast} & =\widehat{\gamma}^{Z}(V_{i},1)+\frac{D_{i}}{\widehat{p}(V_{i})}\cdot\left(Z_{i}-\widehat{\gamma}^{Z}(V_{i},D_{i})\right)
\end{align*}
and observed characteristics $X_{i}$. Here $V_{i}$ collects the
confounders in OHIE, namely household size (number of adults entered
on the lottery sign-up form) and survey wave. Note that while $X_{i}$
can overlap with $V_{i}$, the policy $g(X_{i})$ needs not vary by
$V_{i}$. In the OHIE example, the policy is based on number of children
and income. However, conditional on household size and survey wave,
income and number of children is independent of the lottery outcome
in OHIE.

Recall that $\widehat{W}_{n}(g)$ and $\widehat{B}_{n}(g)$ are supposed
to approximate net benefit $\tau$ and net excess cost $C$ of Medicaid
eligibility. I provide more precise definitions for $\tau$ and
$C$ as the primitive assumptions are stated in terms of their components.

Let $Y(1)$ be the (potential) subjective health when one is given
Medicaid eligibility, and $Y(0)$ be the (potential) subjective health
when one is not given Medicaid eligibility. Recall the definition
for 
\[
\tau=Y(1)-Y(0)
\]
We only observe the actual subjective health $Y_{i}$. 

Let $M(1)$ be the (potential) enrollment in Medicaid when one is
given Medicaid eligibility, and $M(0)$ be the (potential) enrollment
in Medicaid when one is not given Medicaid eligibility. 

Let $H(1)$ be the (potential) health care utilization when one is
given Medicaid eligibility, and $H(0)$ be the (potential) health care utilization when one is not given Medicaid eligibility.  Even when given eligibility, one might
not enroll and thus incur zero cost to the government. So the per capita cost of Medicaid eligibility policy
$g(X)$ is 
\[
\E_{P}[C\cdot g(X)]\text{ where }C=M(1) \cdot H(1) .
\]

We only observe the actual health care utilization ($H_{i}=D_{i}H_{i}(1)+(1-D_{i})H_{i}(0)$)
and the actual Medicaid enrollment ($M_{i}$) in OHIE. We calculate
$Z_{i}=M_{i}\cdot H_{i}$. 
\begin{assumption}
\label{assu:known ps}Suppose for all $P\in\mathcal{P}$, the following
statements hold for the OHIE:

Independent characteristics: $\Pr\{D_{i}=1\mid V_{i},X_{i}\}=\Pr\{D_{i}=1\mid V_{i}\}$

Unconfoundedness: $(Y(1),Y(0),H(1),M(1))\perp D_{i}\vert V_{i}$.

Bounded attributes: the support of variables $X_{i}$, $Y_{i}$ and
$Z_{i}$ are bounded.

Strict overlap: There exist $\kappa\in(0,1/2)$ such that the propensity
score satisfies $p(v)\in[\kappa,1-\kappa]$ for all $v\in\mathcal{V}$.
\end{assumption}

\subsubsection{Uniform convergence of $\widehat{W}_{n}(\cdot)$ and $\widehat{B}_{n}(\cdot)$}

We want to show the recentered empirical processes $\widehat{W}_{n}(\cdot)$
and $\widehat{B}_{n}(\cdot)$ converge to mean-zero Gaussian processes
$G_{P}^{W}$ and $G_{P}^{B}$ with covariance functions $\Sigma_{P}^{W}(\cdot,\cdot)$
and $\Sigma_{P}^{B}(\cdot,\cdot)$ respectively uniformly over $P\in\mathcal{P}$.
The covariance functions are uniformly bounded, with diagonal entries
bounded away from zero uniformly over $g\in\mathcal{G}$. Take $\widehat{W}_{n}(\cdot)$
for example, the recentered empirical processes is

\[
\sqrt{n}\left(\frac{1}{n}\sum_{i}\tau_{i}^{\ast}\cdot g(X_{i})-\E_{P}[\tau\cdot g(X_{i})]\right)
\]
and can be expressed as the sum of two terms

\begin{equation*}
\frac{1}{\sqrt{n}}\sum_{i}(\tau_{i}^{\ast}-\widetilde{\tau}_{i})\cdot g(X_{i})+\sqrt{n}\left(\frac{1}{n}\sum_{i}\widetilde{\tau}_{i}\cdot g(X_{i})-\E_{P}[\tau\cdot g(X_{i})]\right).\label{eq:expand deviation}
\end{equation*}
Here $\widetilde{\tau}_{i}$ are the theoretical analogs
\begin{align*}
\widetilde{\tau}_{i} & =\gamma^{Y}(V_{i},1)-\gamma^{Y}(V_{i},0)+\alpha(V_{i},D_{i})\cdot\left(Y_{i}-\gamma^{Y}(V_{i},D_{i})\right)
\end{align*}
which is doubly-robust score with the theoretical propensity score
and the CEF. A similar expansion holds for $\widehat{B}_{n}(\cdot)$
involving the theoretical analog 
\[
\widetilde{C}_{i}=\gamma^{Z}(V_{i},1)+\frac{D_{i}}{p(V_{i})}\cdot\left(Z_{i}-\gamma^{Z}(V_{i},D_{i})\right).
\]

The following lemmas prove the uniform convergence of $\widehat{W}_{n}(\cdot)$
and $\widehat{B}_{n}(\cdot)$. 

The last part of the assumption is that we have a uniformly consistent
estimate for the covariance function. I argue the sample analog
\begin{align*}
\widehat{\Sigma}^{B}(g,g') & =\frac{1}{n}\sum_{i}(\widetilde{C}_{i})^{2}\cdot g(X_{i})\cdot g'(X_{i})-\left(\frac{1}{n}\sum_{i}\widetilde{C}_{i}\cdot g(X_{i})\right)\cdot\left(\frac{1}{n}\sum_{i}\widetilde{C}_{i}\cdot g'(X_{i})\right)
\end{align*}
is a pointwise consistent estimate for $\widehat{\Sigma}^{B}(g,g')$.
To see this, note that the covariance function $\Sigma_{P}^{B}(\cdot,\cdot)$
maps $\mathcal{G}\times\mathcal{G}$ to $\mathbb{R}$. Under Assumption
\ref{assu:known ps} that $\mathcal{G}$ is a VC-class, the product
$\mathcal{G}\times\mathcal{G}$ is also a VC-class (see e.g. Lemma
2.6.17 of \citet{van_der_vaart_weak_1996}). By a similar argument
leading to Lemma \ref{lem:known ps Glivenko=002013Cantelli} and \ref{lem:ps vanish estimation error}
, we conclude the uniform consistency of $\widehat{\Sigma}^{B}(\cdot,\cdot).$
\begin{lem}
\label{lem:vc-subgraph}Let $\mathcal{G}$ be a VC-class of subsets
of $\mathcal{X}$ with VC-dimension $v<\infty$. The following sets
of functions from $\mathcal{A}$ to $\mathbb{R}$
\begin{align*}
\mathcal{F}^{W} & =\{\widetilde{\tau}_{i}\cdot g(X_{i}):g\in\mathcal{G}\}\\
\mathcal{F}^{B} & =\{\widetilde{C}_{i}\cdot g(X_{i}):g\in\mathcal{G}\}
\end{align*}
are VC-subgraph class of functions with VC-dimension less than or
equal to $v$ for all $P\in\mathcal{P}$. For notational simplicity,
we suppress the dependence of $\mathcal{F}$ on $P$.
\end{lem}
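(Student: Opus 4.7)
The plan is to bound the VC-subgraph index of $\mathcal{F}^{W}$ (and symmetrically $\mathcal{F}^{B}$) directly, by reducing any shattering of a collection of points $(a_{1},t_{1}),\dots,(a_{n},t_{n})\in\mathcal{A}\times\mathbb{R}$ by the subgraphs $\{(a',t'):t'<\widetilde{\Gamma}(a')\,g(x')\}$ to a shattering of $\{x_{1},\dots,x_{n}\}\subseteq\mathcal{X}$ by $\mathcal{G}$. Since $\mathcal{G}$ has VC-dimension at most $v$, this will force $n\leq v$, which is exactly the claim.

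The key observation is that for each fixed $a$, the map $g\mapsto\widetilde{\Gamma}(a)\cdot g(x)$ takes only two possible values as $g$ varies over $\mathcal{G}$, namely $0$ and $\widetilde{\Gamma}(a)$. Consequently each candidate point $(a,t)$ falls into exactly one of three mutually exclusive categories: (i) \emph{always in the subgraph}, when $t<\min\{0,\widetilde{\Gamma}(a)\}$; (ii) \emph{never in the subgraph}, when $t\geq\max\{0,\widetilde{\Gamma}(a)\}$; or (iii) \emph{$g$-dependent}, where membership is equivalent to $g(x)=b(a,t)$ for a label $b(a,t)\in\{0,1\}$ determined by the sign of $\widetilde{\Gamma}(a)$.

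Suppose a set of $n$ points is shattered. No point can lie in categories (i) or (ii), because such a point's membership is fixed across all $g$ and at least one of the $2^n$ subset patterns would then be unachievable. So every $(a_{i},t_{i})$ is $g$-dependent with a corresponding label $b_{i}\in\{0,1\}$ such that $(a_{i},t_{i})$ lies in the subgraph of $\widetilde{\Gamma}\cdot g$ iff $g(x_{i})=b_{i}$. Realizing all $2^{n}$ subset patterns then corresponds, bijectively and coordinate by coordinate through the fixed labels $b_{i}$, to realizing every assignment $(g(x_{1}),\dots,g(x_{n}))\in\{0,1\}^{n}$; but this is exactly $\mathcal{G}$ shattering $\{x_{1},\dots,x_{n}\}$, so $n\leq v$.

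The argument for $\mathcal{F}^{B}$ is identical with $\widetilde{R}$ in place of $\widetilde{\Gamma}$. The main (mild) technical step is the case analysis above; beyond that, the reduction is essentially mechanical. Uniformity over $P\in\mathcal{P}$ is automatic, since the resulting bound depends only on the VC-dimension $v$ of the fixed class $\mathcal{G}$, not on the nuisance functions $\widetilde{\Gamma}$ or $\widetilde{R}$, which are the only objects that vary with $P$.
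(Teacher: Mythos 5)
Your proof is correct. The paper itself does not argue this from first principles: its entire proof is a one-line citation to Lemma A.1 of \citet{kitagawa_who_2018}, which is exactly the statement that multiplying a fixed (possibly sign-varying) function by the indicators of a VC class of sets yields a VC-subgraph class of the same dimension. You instead supply a self-contained combinatorial reduction: the three-way classification of candidate points into always-in, never-in, and $g$-dependent is the right case analysis, the observation that shattered points must all be $g$-dependent is correct (a constant-membership point kills either the empty or the full pattern), and the label map $b_i$ induces a bijection of $\{0,1\}^{n}$ so that realizing all $2^{n}$ subgraph patterns is equivalent to $\mathcal{G}$ realizing all $2^{n}$ sign patterns on $(x_{1},\dots,x_{n})$ (which, as a byproduct, forces the $x_i$ to be distinct). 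Your closing remark about uniformity over $P$ is also the right one: the bound depends only on $v$, not on $\widetilde{\Gamma}$ or $\widetilde{R}$. What your route buys is transparency and independence from the cited reference; what the paper's route buys is brevity and delegation of exactly this bookkeeping to a published lemma. Either is acceptable here.
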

\begin{lem}
\label{lem:identification}For all $P$ in the family $\mathcal{P}$
of distributions satisfying Assumption \ref{assu:known ps}, for all
$g\in\mathcal{G}$ we have
\begin{align*}
\E_{P}[\widetilde{\tau}_{i}\cdot g(X_{i})] & =W(g;P)\\
\E_{P}[\widetilde{C}_{i}\cdot g(X_{i})] & =B(g;P)
\end{align*}
\end{lem}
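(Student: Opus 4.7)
The plan is to prove each identity by an iterated-expectations argument. First I would condition on $V_i$, invoke the conditions of Assumption \ref{assu:known ps} inside the inner conditional expectation, and then iterate back to the full expectation. The key algebraic fact I aim to establish is that the ``correction'' piece of each doubly-robust score has mean zero against $g(X_i)$ once one conditions on $V_i$, so only the regression-adjustment part contributes to the expectation, and that part identifies the target by unconfoundedness.

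For $\widetilde{\Gamma}_i$, I would split the expectation as
\[
E_P[\widetilde{\Gamma}_i g(X_i)] = E_P[g(X_i)(\gamma^Y(V_i,1)-\gamma^Y(V_i,0))] + E_P[g(X_i)\alpha(V_i,D_i)(Y_i-\gamma^Y(V_i,D_i))].
\]
Unconfoundedness together with $Y_i = D_i Y_i(1) + (1-D_i)Y_i(0)$ gives $\gamma^Y(V_i, d) = E_P[Y_i(d) \mid V_i]$. Conditioning the second term on $V_i$ and using joint conditional independence of $D_i$ from $(X_i, Y_i(1), Y_i(0))$ given $V_i$ (see below), I would factor the inverse-probability weights against the outcomes, after which the correction piece collapses to $E_P[g(X_i)(Y_i(1)-Y_i(0))] - E_P[g(X_i)(\gamma^Y(V_i,1)-\gamma^Y(V_i,0))]$. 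Adding back the regression-adjustment piece, the two $\gamma^Y$ contributions cancel and the remainder is exactly $W(g;P)$.

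The argument for $\widetilde{R}_i$ is entirely parallel and slightly simpler because the score uses only the $D_i=1$ regression and $Z_i \equiv D_i R_i$ (since $C_i(0) = 0$ by construction and no enrollment without eligibility forces $M_i(0) = 0$). Hence $\gamma^Z(V_i, 1) = E_P[R_i \mid V_i]$ and the correction piece $\tfrac{D_i}{p(V_i)}(Z_i - \gamma^Z(V_i, D_i))$ simplifies at once to $\tfrac{D_i}{p(V_i)}(R_i - \gamma^Z(V_i, 1))$. The same conditional-independence factorization then produces $E_P[g(X_i)R_i] - E_P[g(X_i)\gamma^Z(V_i,1)]$, and combining with the regression-adjustment term yields $E_P[\widetilde{R}_i g(X_i)] = B(g;P)$.

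The main obstacle, and the step I would write with the most care, is the factorization $E_P[D_i\, g(X_i)\, Y_i(1) \mid V_i] = p(V_i)\, E_P[g(X_i)\, Y_i(1) \mid V_i]$ that the computation above requires. This demands joint conditional independence of $D_i$ from $(X_i, Y_i(1))$ given $V_i$, which is not a formal consequence of the two marginal statements $D_i \perp X_i \mid V_i$ and $(Y(1), Y(0), C(1), M(1)) \perp D_i \mid V_i$ in isolation. Under the OHIE protocol, however, $D_i$ is generated by a randomization mechanism that depends only on $V_i$, so the joint independence holds by design, and I would flag this as the precise sense in which Assumption \ref{assu:known ps} is being invoked.
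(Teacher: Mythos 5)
Your proposal is correct and follows essentially the same route as the paper: an iterated-expectations argument in which the correction piece of each doubly-robust score is shown to recenter the regression-adjustment piece onto the target, using unconfoundedness and $Pr\{D_i=1\mid V_i,X_i\}=Pr\{D_i=1\mid V_i\}$ (the paper conditions on $(V_i,X_i)$ jointly rather than on $V_i$ alone, but this is a cosmetic difference). The joint conditional independence of $D_i$ from $(X_i,Y_i(1),Y_i(0))$ given $V_i$ that you flag is indeed also implicitly relied upon in the paper's own computation of $E[\alpha(V_i,D_i)Y_i\mid V_i,X_i]$, so your caveat is apt rather than a gap.
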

\begin{lem}
\label{lem:known ps Glivenko=002013Cantelli} Let $\mathcal{G}$ satisfy
Assumption \ref{assu:known ps} (VC). Let $U_{i}$ be a mean-zero
bounded random vector of fixed dimension i.e. there exists $M<\infty$
such that i.e. $U_{i}\in[-M/2,M/2]$ almost surely under all $P\in\mathcal{P}$.
Then the uniform deviation of the sample average of vanishes
\begin{align*}
\sup_{P\in\mathcal{P}}\sup_{g\in\mathcal{G}}\left|\frac{1}{n}\sum_{i}U_{i}\cdot g(X_{i})\right| & \rightarrow_{a.s.}\mathbf{0}.
\end{align*}
 
\end{lem}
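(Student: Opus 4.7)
The plan is to cast the claim as a uniform Glivenko--Cantelli result for the function class $\mathcal{F}=\{(u,x)\mapsto u\cdot g(x):g\in\mathcal{G}\}$ on $\mathbb{R}\times\mathcal{X}$, evaluated at the i.i.d.\ draws $(U_{i},X_{i})$. Since $|U_{i}|\leq M/2$ almost surely under every $P\in\mathcal{P}$, the constant $F\equiv M/2$ serves as a uniform envelope for $\mathcal{F}$. The lemma will follow once I show that the expected supremum decays at rate $n^{-1/2}$ uniformly in $P$ and that the supremum concentrates exponentially around its mean.

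First, I would verify that $\mathcal{F}$ is VC-subgraph with VC-dimension at most $v$: multiplying the VC indicator class $\mathcal{G}$ by the single coordinate function $u\mapsto u$ preserves the subgraph structure, exactly as in the argument underlying Lemma \ref{lem:vc-subgraph}. The standard polynomial covering number bound for VC-subgraph classes (e.g., Theorem 2.6.7 of \citet{van_der_vaart_weak_1996}) then gives $\sup_{Q}N(\epsilon F,\mathcal{F},L^{2}(Q))\leq(K/\epsilon)^{2v}$ for a universal constant $K$, so the uniform entropy integral of $\mathcal{F}/F$ is finite and bounded by a quantity depending only on $v$. Symmetrization followed by Dudley's maximal inequality yields
\[
E_{P^{n}}\Bigl[\sup_{g\in\mathcal{G}}\Bigl|\frac{1}{n}\sum_{i}U_{i}g(X_{i})-E_{P}[U_{i}g(X_{i})]\Bigr|\Bigr]\leq\frac{C(v)\cdot M}{\sqrt{n}},
\]
with a constant $C(v)$ that does not depend on $P$. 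Invoking the stated mean-zero property, so that $E_{P}[U_{i}g(X_{i})]=0$ for every $g\in\mathcal{G}$, this bounds $E_{P^{n}}\sup_{g}|n^{-1}\sum_{i}U_{i}g(X_{i})|$ at the same rate. Because each summand $U_{i}g(X_{i})$ lies in $[-M/2,M/2]$, McDiarmid's bounded differences inequality gives subgaussian concentration of the supremum about its expectation, and a routine Borel--Cantelli argument upgrades convergence in mean to convergence almost surely.

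The main obstacle is the uniformity over the class of distributions $\mathcal{P}$, and this is precisely where the VC framework pays off: the covering number bound above is distribution-free, since VC-dimension is a purely combinatorial quantity and the envelope $M/2$ is the same for every $P\in\mathcal{P}$. Consequently, both the Dudley bound on $E_{P^{n}}\sup|\cdot|$ and the McDiarmid tail inequality depend only on $v$, $M$, and $n$, and never on $P$. A minor subtlety worth flagging is that the conclusion requires $E_{P}[U_{i}g(X_{i})]=0$ for every $g\in\mathcal{G}$, which is a conditional mean-zero condition rather than merely $E_{P}[U_{i}]=0$; in the applications of this lemma inside the expansion~(\ref{eq:expand deviation}), the relevant $U_{i}$'s arise as conditional-mean-zero residuals of the doubly-robust scores, so this stronger property holds automatically and no additional argument is needed beyond the uniform VC bound above.
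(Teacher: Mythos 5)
Your argument is correct and follows essentially the same route as the paper: both establish that $\{u\cdot g(x):g\in\mathcal{G}\}$ is a VC-subgraph class with uniform envelope $M/2$ and then invoke a distribution-free empirical-process bound yielding an $O(\sqrt{v/n})$ rate (the paper cites Rademacher-complexity results of Wainwright where you use covering numbers, Dudley's entropy integral, and McDiarmid plus Borel--Cantelli, which is equivalent machinery), with uniformity over $\mathcal{P}$ coming for free because the bound depends only on $v$ and $M$. Your closing remark that the stated conclusion actually requires $E_{P}[U_{i}g(X_{i})]=0$ for every $g$ (i.e., $E_{P}[U_{i}\mid X_{i}]=0$) rather than merely $E_{P}[U_{i}]=0$ is a correct and worthwhile observation that the paper's own proof passes over silently.
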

\begin{lem}
\label{lem:known ps Donsker} Let $\mathcal{P}$ be a family of distributions
satisfying Assumption \ref{assu:known ps}. Let $\mathcal{G}$ satisfy
Assumption \ref{assu:known ps} (VC). Then $\mathcal{F}^{W}$ and
$\mathcal{F}^{B}$ are $P$-Donsker for all $P\in\mathcal{P}$. That
is, the empirical process indexed by $g\in\mathcal{G}$
\begin{align*}
\sqrt{n}\cdot(\frac{1}{n}\sum_{i}\widetilde{\tau}_{i}\cdot g(X_{i})-W(g;P))
\end{align*}
converge to a Gaussian process $\mathcal{GP}(0,\Sigma_{P}^{W}(\cdot,\cdot))$
uniformly in $P\in\mathcal{P}$, and the empirical process indexed
by $g\in\mathcal{G}$  
\[
\sqrt{n}\cdot(\frac{1}{n}\sum_{i}\widetilde{C}_{i}\cdot g(X_{i})-B(g;P))
\]
converge to a Gaussian process $\mathcal{GP}(0,\Sigma_{P}^{B}(\cdot,\cdot))$
uniformly in $P\in\mathcal{P}$.
\end{lem}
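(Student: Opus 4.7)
The plan is to verify Lemma \ref{lem:known ps Donsker} by invoking a uniform Donsker theorem (such as Theorem 2.8.3 of \citet{van_der_vaart_weak_1996}) for classes of functions that have a uniformly bounded envelope and satisfy a uniform entropy bound. The two ingredients I will need are (i) a non-random, finite envelope $F$ that dominates every function in $\mathcal{F}^{W}$ (and in $\mathcal{F}^{B}$) uniformly in $P\in\mathcal{P}$, and (ii) a VC-type bound on the uniform covering numbers of the classes. Both are delivered almost for free by the assumptions already in place.

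First I would build the envelope. By the strict overlap part of Assumption \ref{assu:known ps}, $p(V_{i})\in[\kappa,1-\kappa]$, so $|\alpha(V_{i},D_{i})|\leq 1/\kappa$. By the bounded-attributes part, $|Y_{i}|$ and $|Z_{i}|$ are bounded by some constant $M$, and hence the conditional expectations $\gamma^{Y}(V_{i},D_{i})$ and $\gamma^{Z}(V_{i},D_{i})$ are also bounded by $M$. Combining these gives a constant $F<\infty$ (depending only on $\kappa$ and $M$) such that $|\widetilde{\Gamma}_{i}|\leq F$ and $|\widetilde{R}_{i}|\leq F$ almost surely, uniformly over $P\in\mathcal{P}$. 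Since $g(X_{i})\in\{0,1\}$, the same bound $F$ serves as an envelope for $\mathcal{F}^{W}$ and $\mathcal{F}^{B}$. In particular, $\sup_{P\in\mathcal{P}}P F^{2}<\infty$.

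Next, by Lemma \ref{lem:vc-subgraph}, both $\mathcal{F}^{W}$ and $\mathcal{F}^{B}$ are VC-subgraph classes with VC-dimension at most $v$. The standard VC covering-number bound (Theorem 2.6.7 of \citet{van_der_vaart_weak_1996}) then yields, for every probability measure $Q$ and every $\varepsilon\in(0,1)$,
\[
N(\varepsilon\|F\|_{Q,2},\mathcal{F}^{W},L_{2}(Q))\leq K\,v\,(16e)^{v}\,\varepsilon^{-2(v-1)},
\]
and likewise for $\mathcal{F}^{B}$. This bound is independent of $Q$, so the uniform entropy integral $\int_{0}^{1}\sup_{Q}\sqrt{\log N(\varepsilon\|F\|_{Q,2},\mathcal{F},L_{2}(Q))}\,d\varepsilon$ is finite. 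Together with the square-integrable envelope $F$, these are precisely the hypotheses of the uniform Donsker theorem, which gives that $\mathcal{F}^{W}$ and $\mathcal{F}^{B}$ are $P$-Donsker uniformly in $P\in\mathcal{P}$.

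Finally, I identify the limit. Lemma \ref{lem:identification} shows $E_{P}[\widetilde{\Gamma}_{i}g(X_{i})]=W(g;P)$ and $E_{P}[\widetilde{R}_{i}g(X_{i})]=B(g;P)$, so the recentering in the lemma statement is correct and the finite-dimensional limits are mean-zero Gaussian with covariance $\Sigma^{W}_{P}(g,g')=\mathrm{Cov}_{P}(\widetilde{\Gamma}_{i}g(X_{i}),\widetilde{\Gamma}_{i}g'(X_{i}))$, and analogously $\Sigma^{B}_{P}$. Uniform asymptotic tightness follows from the uniform entropy bound, so the empirical processes converge weakly to the advertised Gaussian processes $\mathcal{GP}(0,\Sigma^{W}_{P})$ and $\mathcal{GP}(0,\Sigma^{B}_{P})$ uniformly in $P\in\mathcal{P}$. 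The main technical step here is really just the verification that the classes satisfy the uniform entropy condition; once the bounded envelope is in hand (which uses strict overlap crucially), the rest is a direct application of a standard uniform CLT, so I do not expect a serious obstacle beyond correctly assembling the VC and boundedness ingredients.
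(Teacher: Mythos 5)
Your proposal is correct and follows essentially the same route as the paper: both arguments combine the uniform envelope $M/(2\kappa)$ from bounded attributes plus strict overlap with the VC-subgraph property from Lemma \ref{lem:vc-subgraph}, and then invoke a uniform Donsker theorem (the paper defers to Theorem 1 of \citet{rai_statistical_2019}, you cite the van der Vaart--Wellner uniform CLT directly after spelling out the uniform entropy bound). The only wrinkle the paper makes explicit that you gloss over is that $\mathcal{F}^{W}$ and $\mathcal{F}^{B}$ themselves depend on $P$ through $\gamma^{Y}$, $\gamma^{Z}$ and $p$, but since your envelope and entropy bounds are uniform in $P$ this does not create a gap.
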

\begin{lem}
\label{lem:ps vanish estimation error} Let $\mathcal{P}$ be a family
of distributions satisfying Assumption \ref{assu:known ps}. Let $\mathcal{G}$
satisfy Assumption \ref{assu:known ps} (VC). Then the estimation
errors vanishes
\begin{align*}
\sup_{P\in\mathcal{P}}\sup_{g\in\mathcal{G}}\left|\frac{1}{\sqrt{n}}\sum_{i}(\tau_{i}^{\ast}-\widetilde{\tau}_{i})\cdot g(X_{i})\right| & \rightarrow_{p}0\\
\sup_{P\in\mathcal{P}}\sup_{g\in\mathcal{G}}\left|\frac{1}{\sqrt{n}}\sum_{i}(C_{i}^{\ast}-\widetilde{C}_{i})\cdot g(X_{i})\right| & \rightarrow_{p}0
\end{align*}
\end{lem}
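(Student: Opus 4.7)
The plan is to exploit the doubly-robust structure of $\Gamma_i^*$ together with the discreteness of $V_i$. Since $V_i$ takes only finitely many values and the nuisance functions $p(\cdot)$ and $\gamma^Y(\cdot,\cdot)$ are estimated by fully saturated within-cell sample averages, the strict overlap and bounded-attribute parts of Assumption~\ref{assu:known ps} deliver $\sqrt{n}(\widehat{p}(v)-p(v))=O_p(1)$ and $\sqrt{n}(\widehat{\gamma}^Y(v,d)-\gamma^Y(v,d))=O_p(1)$ for each cell $v\in\mathcal{V}$, uniformly over $P\in\mathcal{P}$. Strict overlap and the delta method then also give $\sqrt{n}(\widehat{\alpha}(v,d)-\alpha(v,d))=O_p(1)$ uniformly in $P$.

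Writing $\Delta\gamma(v,d)=\widehat{\gamma}^Y(v,d)-\gamma^Y(v,d)$ and $\Delta\alpha(v,d)=\widehat{\alpha}(v,d)-\alpha(v,d)$, the identity $\widehat{\alpha}\,\Delta\gamma=\alpha\,\Delta\gamma+\Delta\alpha\,\Delta\gamma$ gives the decomposition $\Gamma_i^*-\widetilde{\Gamma}_i = T_{1i}+T_{2i}+T_{3i}$, where $T_{1i}=\Delta\gamma(V_i,1)[1-D_i/p(V_i)]-\Delta\gamma(V_i,0)[1-(1-D_i)/(1-p(V_i))]$, $T_{2i}=\Delta\alpha(V_i,D_i)(Y_i-\gamma^Y(V_i,D_i))$, and $T_{3i}=-\Delta\alpha(V_i,D_i)\Delta\gamma(V_i,D_i)$. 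The product term is $O_p(n^{-1})$ in magnitude term-by-term, so $|n^{-1/2}\sum_i T_{3i}g(X_i)|=O_p(n^{-1/2})=o_p(1)$ uniformly in $g$.

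For $T_{1i}$ and $T_{2i}$ the key step is to pull the cell-specific pre-factors outside the sum, using discreteness of $V_i$: for instance, $n^{-1/2}\sum_i T_{2i}g(X_i)=\sum_{v,d}\sqrt{n}\,\Delta\alpha(v,d)\cdot n^{-1}\sum_i\mathbf{1}\{V_i=v,D_i=d\}(Y_i-\gamma^Y(v,d))g(X_i)$. The random variable $U_i^{v,d}=\mathbf{1}\{V_i=v,D_i=d\}(Y_i-\gamma^Y(v,d))$ is bounded and has mean zero by the CEF definition, so Lemma~\ref{lem:known ps Glivenko=002013Cantelli} yields $\sup_{g\in\mathcal{G}}|n^{-1}\sum_i U_i^{v,d}g(X_i)|\to_{a.s.}0$ uniformly in $P\in\mathcal{P}$. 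Combined with $\sqrt{n}\,\Delta\alpha(v,d)=O_p(1)$ and the finiteness of $|\mathcal{V}|$, the $T_{2i}$-contribution is $o_p(1)$ uniformly. The $T_{1i}$ analysis is identical; the independent-characteristics assumption gives $E[1-D_i/p(V_i)\mid V_i,X_i]=0$, supplying the required mean-zero bounded weight so that another application of Lemma~\ref{lem:known ps Glivenko=002013Cantelli} concludes this piece.

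The bound for $R_i^*-\widetilde{R}_i$ mirrors this argument, replacing $Y_i$ with $Z_i$ and keeping only the $D_i=1$ branch of the score; $E[Z_i-\gamma^Z(V_i,D_i)\mid V_i,D_i]=0$ supplies the mean-zero property. The main obstacle is establishing the within-cell $\sqrt{n}$-rates of the nuisance estimates \emph{uniformly} in $P\in\mathcal{P}$: this uses strict overlap to keep $\widehat{p}(v)$ and $1-\widehat{p}(v)$ away from $0$ with high probability (so the ratio $1/\widehat{p}(v)$ is well behaved) and to bound cell frequencies below, together with boundedness of $Y_i,Z_i$ to uniformly control variances.
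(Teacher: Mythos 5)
Your proposal is correct and follows essentially the same route as the paper's proof: your $T_{1i},T_{2i},T_{3i}$ are exactly the paper's $D_{2},D_{1},D_{3}$ decomposition, and you likewise factor out the $\sqrt{n}$-consistent nuisance errors and apply Lemma \ref{lem:known ps Glivenko=002013Cantelli} to the remaining mean-zero bounded multipliers. The only cosmetic differences are that you work cell-by-cell in $v$ rather than through the saturated-OLS dictionary notation, and you bound the quadratic cross term by a sup over cells where the paper uses Cauchy--Schwarz on the in-sample $L_{2}$ errors; both yield the same $O_{p}(n^{-1/2})$ conclusion.
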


\subsubsection{Proofs of auxiliary lemmas}

\begin{proof}
\textbf{Proof of Lemma \ref{lem:vc-subgraph}. }This lemma follows
directly from Lemma A.1 of \citet{kitagawa_who_2018}.
\end{proof}
\begin{proof}
\textbf{Proof of Lemma \ref{lem:identification}.} Under Assumption
\ref{assu:known ps}, we prove each $\widetilde{\tau}_{i}$ is an
conditionally unbiased estimate for $\tau$: 
\[
\E_{P}[\widetilde{\tau}_{i}g(X_{i})]=\E_{P}[\E_{P}[\widetilde{\tau}_{i}\mid X_{i}]g(X_{i})]=\E_{P}[\E_{P}[\tau\mid X_{i}]g(X_{i})]=W(g;P).
\]
We focus on $\E_{P}[\widetilde{\tau}_{i}\mid X_{i}]=\E_{P}[\E_{P}[\widetilde{\tau}_{i}\mid V_{i},X_{i}]\mid X_{i}]$.
Dropping the subscript $P$ for simplicity, conditional on $V_{i}$, by unconfoundedness and strict overlap
we have 
\begin{align*}
\E[\widetilde{\tau}_{i}\mid V_{i},X_{i}] & =\E[Y_{i}\mid V_{i},X_{i},D_{i}=1]-\E[Y_{i}\mid V_{i},X_{i},D_{i}=0]\\
 & =\E[Y_{i}(1)-Y_{i}(0)\mid V_{i},X_{i}]=\E[\tau_{i}\mid V_{i},X_{i}]
\end{align*}
Specifically 
\begin{align*}
\E[\widetilde{\tau}_{i}\mid V_{i},X_{i}] & =\E\left[\gamma^{Y}(V_{i},1)-\gamma^{Y}(V_{i},0)\mid V_{i},X_{i}\right]+\E\left[\alpha(V_{i},D_{i})\cdot\left(Y_{i}-\gamma^{Y}(V_{i},D_{i})\right)\mid V_{i},X_{i}\right]\\
 & =\E\left[Y_{i}\mid V_{i},D_{i}=1\right]-\E\left[Y_{i}\mid V_{i},D_{i}=0\right]+\\
 & \E\left[Y_{i}(1)-Y_{i}(0)\mid V_{i},X_{i}\right]-(\E\left[Y_{i}\mid V_{i},D_{i}=1\right]-\E\left[Y_{i}\mid V_{i},D_{i}=0\right])\\
 & =\E[\tau\mid V_{i},X_{i}]
\end{align*}
Specifically, we expand $\E[\alpha(V_{i},D_{i})Y_{i}\mid V_{i},X_{i}]$
\begin{align*}
 & =\E\left[\frac{Y_{i}}{p(V_{i})}\mid D_{i}=1,V_{i},X_{i}\right]\cdot \Pr\{D_{i}=1\mid V_{i},X_{i}\}-\E\left[\frac{Y_{i}}{1-p(V_{i})}\mid D_{i}=0,X_{i}\right]\cdot \Pr\{D_{i}=0\mid V_{i},X_{i}\}\\
 & =\E\left[D_{i}Y_{i}(1)+(1-D_{i})Y_{i}(0)\mid D_{i}=1,V_{i},X_{i}\right]-\E\left[D_{i}Y_{i}(1)+(1-D_{i})Y_{i}(0)\mid D_{i}=0,V_{i},X_{i}\right]\\
 & =\E\left[Y_{i}(1)\mid V_{i},X_{i}\right]-\E\left[Y_{i}(0)\mid V_{i},X_{i}\right]
\end{align*}
where the second line holds by independent characteristic such that
\[
\Pr\{D_{i}=1\mid V_{i},X_{i}\}=\Pr\{D_{i}=1\mid V_{i}\}\eqqcolon p(V_{i})
\]
and the last line follows from unconfoundedness. Similarly, we show
\[
\E\left[\alpha(V_{i},D_{i})\cdot\gamma^{Y}(V_{i},D_{i})\mid V_{i},X_{i}\right]=\E\left[Y_{i}\mid V_{i},D_{i}=1\right]-\E\left[Y_{i}\mid V_{i},D_{i}=0\right]
\]

Similar argument holds for $\E_{P}[\widetilde{C}_{i}\cdot g(X_{i})]$,
with the only modification:
\begin{align*}
 & \E[\frac{D_{i}}{p(V_{i})}H_{i}\mid V_{i},X_{i}]=\E\left[\frac{H_{i}}{p(V_{i})}\mid D_{i}=1,V_{i},X_{i}\right]\cdot \Pr\{D_{i}=1\mid V_{i},X_{i}\}\\
 & =\E[H_{i}(1)\mid V_{i},X_{i}]
\end{align*}
\end{proof}

\begin{proof}
\textbf{Proof of Lemma \ref{lem:known ps Glivenko=002013Cantelli}.}
Denote the following set of functions from $\mathcal{U}$ to $\mathbb{R}$
\begin{align*}
\mathcal{F}^{U} & =\{U_{i}\cdot g(X_{i}):g\in\mathcal{G}\}
\end{align*}
and it has uniform envelope $\bar{F}=M/2$ since $U_{i}$ is bounded.
This envelop function is bounded uniformly over $\mathcal{P}$. Also,
by Assumption \ref{assu:known ps} (VC) and Lemma \ref{lem:vc-subgraph},
$\mathcal{F}^{U}$ is VC-subgraph class of functions with VC-dimension
at most $v$. By Lemma 4.14 and Proposition 4.18 of \citet{wainwright_high-dimensional_2019},
we conclude that $\mathcal{F}^{U}$ has Rademacher complexity  $2\sqrt{M^{2}\frac{v}{n}}$.
Then by Proposition 4.12 of \citet{wainwright_high-dimensional_2019}
we conclude that $\mathcal{F}^{U}$ are $P$-Glivenko--Cantelli for
each $P\in\mathcal{P}$, with an $O(\sqrt{\frac{v}{n}})$ rate of
convergence. Note that this argument does not use any constants that
depend on $P$ but only $M$ and $v$, so we can actually get uniform
convergence over $\mathcal{P}$.
\end{proof}
\begin{proof}
\textbf{Proof of Lemma \ref{lem:known ps Donsker}.} Note that Assumption
\ref{assu:known ps}  imply that $\mathcal{F}^{W}$ and $\mathcal{F}^{B}$
have uniform envelope $\bar{F}=M/(2\kappa)$. $\mathcal{F}^{W}$ and
$\mathcal{F}^{B}$ thus have square integrable envelop functions uniformly
over $\mathcal{P}$. Also, by Assumption \ref{assu:known ps} (VC)
and Lemma \ref{lem:vc-subgraph}, $\mathcal{F}^{W}$ and $\mathcal{F}^{B}$
are VC-subgraph class of functions with VC-dimension at most $v$.
Even though both $\mathcal{F}^{W}$ and $\mathcal{F}^{B}$ depend
on $P$, a similar argument for Theorem 1 in \citet{rai_statistical_2019}
show that $\mathcal{F}^{W}$ and $\mathcal{F}^{B}$ are $P$-Donsker
uniformly in $P\in\mathcal{P}$. 
\end{proof}
\begin{proof}
\textbf{Proof of Lemma \ref{lem:ps vanish estimation error}. }We
focus on the deviation in $\tau_{i}^{\ast}-\widetilde{\tau}_{i}$.
The deviation in $C_{i}^{\ast}-\widetilde{C}_{i}$ can be proven to
vanish in a similar manner. Denote $\Delta\gamma^{Y}(V_{i})=\gamma^{Y}(V_{i},1)-\gamma^{Y}(V_{i},0)$.
For any fixed policy $g$, we expand the deviation $\frac{1}{\sqrt{n}}\sum_{i}(\tau_{i}^{\ast}-\widetilde{\tau}_{i})g(X_{i})$
into three terms
\begin{align}
 & \frac{1}{\sqrt{n}}\sum_{i}g(X_{i})\left(\widehat{\alpha}(V_{i},D_{i})-\alpha(V_{i},D_{i})\right)\cdot\left(Y_{i}-\gamma^{Y}(V_{i},D_{i})\right)\label{eq:estimation error}\\
 & +\frac{1}{\sqrt{n}}\sum_{i}g(X_{i})\left(\Delta\widehat{\gamma}^{Y}(V_{i})-\Delta\gamma^{Y}(V_{i})-\alpha(V_{i},D_{i})\cdot\left(\widehat{\gamma}^{Y}(V_{i},D_{i})-\gamma^{Y}(V_{i},D_{i})\right)\right)\nonumber \\
 & -\frac{1}{\sqrt{n}}\sum_{i}g(X_{i})\left(\widehat{\gamma}^{Y}(V_{i},D_{i})-\gamma^{Y}(V_{i},D_{i})\right)\cdot\left(\widehat{\alpha}(V_{i},D_{i})-\alpha(V_{i},D_{i})\right)\cdot\nonumber 
\end{align}
Denote these three summands by $D_{1}(g)$, $D_{2}(g)$ and $D_{3}(g)$.
We will bound all three summands separately. Recall we use the full
sample to estimate the propensity score and the CEF with a saturated
model. The purpose of the above expansion is to separately bound the
estimation error from the estimated CEF and propensity score, and
the deviation from taking sample averages. For cross-fitted estimators
for the propensity score and the CEF, a similar bound can be found
in \citet{athey_policy_2021}.

\paragraph{Uniform consistency of the estimated CEF and propensity score}

Denote with $b(V_{i},D_{i})$ the dictionary that spans $(V_{i},D_{i})$,
and $b(V_{i},)$ the dictionary that spans $V_{i}$. The saturated
models are therefore parameterized as $\gamma^{Z}(V_{i},D_{i})=\gamma'b(V_{i},D_{i})$
and $p(V_{i})=\beta'b(V_{i})$. Under standard argument, the OLS estimators
$\widehat{\gamma}$ and $\widehat{\beta}$ are asymptotically normal
uniformly over $P\in\mathcal{P}$:
\[
\sqrt{n}\cdot\left(\widehat{\gamma}-\gamma\right)=O_{P}(1),\ \sqrt{n}\cdot\left(\widehat{\beta}-\beta\right)=O_{P}(1).
\]
Furthermore, the in-sample $L_{2}$ errors from the estimated CEF
and propensity score vanish. Consider
\[
\frac{1}{n}\sum_{i}\left(\widehat{\gamma}^{Z}(V_{i},D_{i})-\gamma^{Z}(V_{i},D_{i})\right)^{2}=\frac{1}{n}\sum_{i}\left(\left(\widehat{\gamma}-\gamma\right)'b(V_{i},D_{i})\right)^{2}=\left(\widehat{\gamma}-\gamma\right)'\widehat{M}\left(\widehat{\gamma}-\gamma\right)
\]
where $\widehat{M}=\frac{1}{n}\sum_{i}b(V_{i},D_{i})b(V_{i},D_{i})'.$
It converges in probability to a fixed matrix $M=\E[b(V_{i},D_{i})b(V_{i},D_{i})']$.
So the in-sample $L_{2}$ error from the estimated CEF vanishes at
the rate of $n^{-1/2}$.

Similarly, consider expanding $\frac{1}{n}\sum_{i}\left(\widehat{\alpha}(V_{i},D_{i})-\alpha(V_{i},D_{i})\right)^{2}$
as 
\[
\frac{1}{n}\sum_{i}\left(\frac{1}{\widehat{\beta}'b(V_{i})}-\frac{1}{\beta'b(V_{i})}\right)^{2}D_{i}^{2}+\left(\frac{1}{1-\widehat{\beta}'b(V_{i})}-\frac{1}{1-\beta'b(V_{i})}\right)^{2}\left(1-D_{i}\right)^{2}
\]
With a first-order Taylor approximation, for each term in the summand,
the dominating term would be
\begin{align*}
\frac{1}{n}\sum_{i}\left(\left(\widehat{\beta}-\beta\right)'\frac{-b(V_{i})}{(\beta'b(V_{i}))^{2}}\right)^{2}D_{i}^{2} & =\left(\widehat{\beta}-\beta\right)'\left(\frac{1}{n}\sum_{i}\frac{b(V_{i})b(V_{i})'}{(\beta'b(V_{i}))^{2}}D_{i}^{2}\right)\left(\widehat{\beta}-\beta\right)
\end{align*}
where the middle term converges to a fixed matrix as implied by $\beta'b(V_{i})$
being bounded away from zero and one. So the in-sample $L_{2}$ error
from the estimated propensity score also vanishes at the rate of $n^{-1/2}$.

\paragraph{Bounding the deviation}

We now bound each term in (\ref{eq:estimation error}). Plugging in
the first-order Taylor approximation with a remainder term to the
estimated propensity score, we have 
\begin{align*}
D_{1}(g)= & \frac{1}{\sqrt{n}}\sum_{i}g(X_{i})\left(Y_{i}-\gamma^{Y}(V_{i},D_{i})\right)\cdot\\
 & \left(\left(\frac{1}{\widehat{\beta}'b(V_{i})}-\frac{1}{\beta'b(V_{i})}\right)D_{i}+\left(\frac{1}{1-\widehat{\beta}'b(V_{i})}-\frac{1}{1-\beta'b(V_{i})}\right)\left(1-D_{i}\right)\right)\\
= & \frac{1}{\sqrt{n}}\sum_{i}g(X_{i})\left(\widehat{\beta}-\beta\right)'\frac{-b(V_{i})}{(\beta'b(V_{i}))^{2}}\cdot D_{i}\cdot\left(Y_{i}-\gamma^{Y}(V_{i},D_{i})\right)+\\
 & \frac{1}{\sqrt{n}}\sum_{i}g(X_{i})\left(\widehat{\beta}-\beta\right)'\frac{b(V_{i})b(V_{i})'}{(\widetilde{\beta}'V_{i})^{3}}\left(\widehat{\beta}-\beta\right)\cdot D_{i}\cdot\left(Y_{i}-\gamma^{Y}(V_{i},D_{i})\right)\\
= & \underbrace{\sqrt{n}\left(\widehat{\beta}-\beta\right)}_{O_{P}(1)}'\underbrace{\frac{1}{n}\sum_{i}g(X_{i})\frac{-b(V_{i})}{(\beta'b(V_{i}))^{2}}\cdot D_{i}\cdot\left(Y_{i}-\gamma^{Y}(V_{i},D_{i})\right)}_{o_{P}(1)}+o_{P}(1)
\end{align*}
where $\widetilde{\beta}$ is a sequence between $\widehat{\beta}$
and $\beta$. This remainder term therefore converges to zero. Uniform
convergence of the sample average follows from Lemma \ref{lem:known ps Glivenko=002013Cantelli}:
the random vector $\frac{-b(V_{i})}{(\beta'b(V_{i}))^{2}}\cdot D_{i}\cdot\left(Y_{i}-\gamma^{Y}(V_{i},D_{i})\right)$
is mean-zero, has fixed dimension, and bounded.

We can decompose $D_{2}(g)$ into the product of two terms 
\[
D_{2}(g)=\underbrace{\sqrt{n}\left(\widehat{\gamma}-\gamma\right)}_{O_{p}(1)}'\frac{1}{n}\sum_{i}g(X_{i})\left(\Delta b(V_{i},D_{i})-\alpha(V_{i},D_{i})\cdot b(V_{i},D_{i})\right)
\]
Uniform convergence of the sample average again follows from Lemma
\ref{lem:known ps Glivenko=002013Cantelli}. We thus conclude $D_{1}(g)$
and $D_{2}(g)$ vanish uniformly over $g\in\mathcal{G}$ and over
$P\in\mathcal{P}$.

For $D_{3}(g)$, we apply the Cauchy-Schwarz inequality to note that
\[
D_{3}(g)\leq\sqrt{n}\cdot\sqrt{\frac{1}{n}\sum_{i}\left(\widehat{\gamma}^{Y}(V_{i},D_{i})-\gamma^{Y}(V_{i},D_{i})\right)^{2}}\cdot\sqrt{\frac{1}{n}\sum_{i}\left(\widehat{\alpha}(V_{i},D_{i})-\alpha(V_{i},D_{i})\right)^{2}}
\]
The terms in the square root are the in-sample $L_{2}$ errors from
the estimated CEF and propensity score, which vanish at the rate of
$n^{-1/2}$ uniformly over $P\in\mathcal{P}$ as shown in the paragraph
above. We thus conclude $D_{3}(g)$ vanishes uniformly over $g\in\mathcal{G}$
and over $P\in\mathcal{P}$.
\end{proof}

\section{Additional Theoretical Results and Discussion}\label{appx:addl theory}
\begin{prop}
\label{prop: EWM ptwise correct} Consider the same setting as in Proposition~\ref{prop: EWM no ptwise mistake}, except that  we have $\Pr_{P}\{C=1\mid X\}\in(0,1)$ almost surely such that the budget function strictly increases for $t\in[\underline{t},\overline{t}]$. Similarly, suppose  the budget constraint is at $k = \E_{P} [C \cdot \mathbf{1}\{X \leq \underline{t}\}].$  Then the sample-analog
rule $\widehat{g}_{\text{sample}}$  is asymptotically welfare efficient.
\end{prop}
\begin{proof}\textbf{Proof of Proposition~\ref{prop: EWM ptwise correct}.}
Since $\Pr_{P}\{C=1|X\}>0$ almost surely, we have $B(t;P)=\E_{P}[C\cdot\mathbf{1}\{X\leq t\}]=\Pr_{P}\{X\leq t\}\cdot \Pr_{P}\{C=1|X\leq t\}$ is strictly increasing in $t$.  The constrained optimal threshold is therefore the highest threshold where the constraint is satisfied exactly i.e. $t^{\ast}=\underline{t}.$

 Since $X$ is one-dimensional and continuously distributed, the welfare
function $W(t;P)$ is continuous in $t$. Since $W(t;P)$ is also
strictly increasing in $t$, for all $\epsilon>0$, there exists $t_{\epsilon}<t^{\ast}$
such that

\[
W(t_{\epsilon};P)=W(t^{\ast};P)-\epsilon.
\]

Since $C_{i}\cdot\mathbf{1}\{X_{i}\leq t\}$
is bounded for all $t$, the uniform law of large numbers applies and there exists a sequence $b_{n}\rightarrow0$
such that
\[
\lim_{n\rightarrow\infty}\Pr_{P^{n}}\left\{ \sup_{t}\left|\widehat{B}_{n}(t)-B(t;P)\right|\leq b_{n}\right\} =1.
\]
Since $t_{\epsilon}<t^{\ast}$ and
$B(t;P)$ continuous and strictly increasing, for $n$ large enough
(and $b_{n}$ small enough), we must have 

\[
B(t_{\epsilon};P)\leq B(t^{\ast};P)-b_{n}=k-b_{n}<k=B(t^{\ast};P).
\]
Taken together, we have
\[
\lim_{n\rightarrow\infty}\Pr_{P^{n}}\left\{ \widehat{B}_{n}(t_{\epsilon})\leq k-b_{n}+b_{n}\right\} \geq\lim_{n\rightarrow\infty}\Pr_{P^{n}}\left\{ \widehat{B}_{n}(t_{\epsilon})\leq B(t_{\epsilon};P)+b_{n}\right\} =1.
\]
As in the setting of Proposition~\ref{prop: EWM no ptwise mistake}, the policymakers know the welfare
function is strictly increasing, and the sample-analog rule  solves   $\max_{t}\widehat{B}_{n}(t)\text{ subject to }\widehat{B}_{n}(t)\leq k$.
However, the solution is not unique because $\widehat{B}_{n}(t)$
is a step function. For the proof, let the sample-analog rule
be the largest possible threshold to maximize $\widehat{B}_{n}(t)$:
\[
\widehat{t}=\max\left\{ \arg\max_{t}\{\widehat{B}_{n}(t)\text{ subject to }\widehat{B}_{n}(t)\leq k\}\right\} .
\] This implies that
$\lim_{n\rightarrow\infty}\Pr_{P^{n}}\left\{ t_{\epsilon}\leq\hat{t}\right\} =1$ and since $W(t;P)$ is strictly increasing in $t$, we have
\[
\lim_{n\rightarrow\infty}\Pr_{P^{n}}\left\{ W(\hat{t};P)\geq W(t^{\ast};P)-\epsilon\right\} =\lim_{n\rightarrow\infty}\Pr_{P^{n}}\left\{ W(\hat{t};P)\geq W(t_{\epsilon};P)\right\} =1
\]
which proves the asymptotic welfare efficiency.
\end{proof}
\section{Additional Empirical Results and Discussion}
\subsection{Robustness checks}\label{sec:robustness}
In this section, I present additional empirical results as I vary the significance level $\alpha$ and the penalty level $\bar\lambda$
 , relative to the benchmark values considered in Section~\ref{subsec:Estimating-a-more}.

I explore how the modified rule $\widehat{g}_{\alpha}$ changes as $\alpha$ varies from 5\% (as in the main text) to 50\% in Figure~\ref{fig:alpha 
 frontier} below. Each point on the curve represents the policy chosen by $\widehat{g}_{\alpha}$ for a corresponding statistical significance level $\alpha$. The vertical axis shows the  estimated welfare   and the horizontal axis shows the estimated budget of the selected policy. As $\alpha$ increases, the statistical test becomes less conservative and the upper confidence band becomes wider.  So the feasible region expands and the selected policy achieves higher estimated  welfare.
\begin{figure}[h]
    \centering
    \includegraphics[width=1\linewidth]{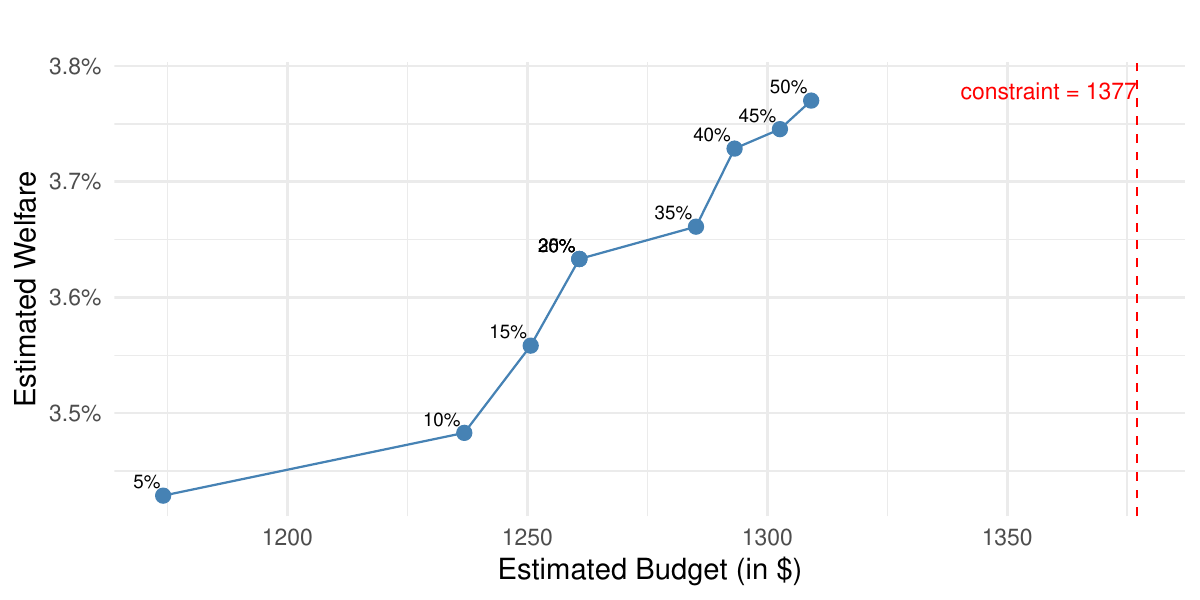}
    \caption{Frontier of $\widehat{g}_{\alpha}$ across statistical significance levels $\alpha$}
    \label{fig:alpha frontier}
\end{figure}

I also explore how $\widehat{g}_{\text{tradeoff}}$ changes for a range of values for $\overline{\lambda}$ as illustrated in the Figure~\ref{fig:frontier} below. Each point on the curve represents the policy chosen by $\widehat{g}_{\text{tradeoff}}$ for a corresponding  $\overline{\lambda}$. The vertical axis shows the  estimated welfare and the horizontal axis shows the estimated budget achieved by $\widehat{g}_{\text{tradeoff}}$. For example, if each dollar of budget overrun required a repayment of 1.68 times the amount, rather than just 1 as in the main text,  $\widehat{g}_{\text{tradeoff}}$ selects a policy that entails no estimated budget violation, aligning with $\widehat{g}_{\text{sample}}$.  Because the optimization is discrete in the sample, intermediate values of $\overline{\lambda}$ often result in the same selected policy, leading to gaps in the curve.
  
\begin{figure}[h]
    \centering
    \includegraphics[width=1\linewidth]{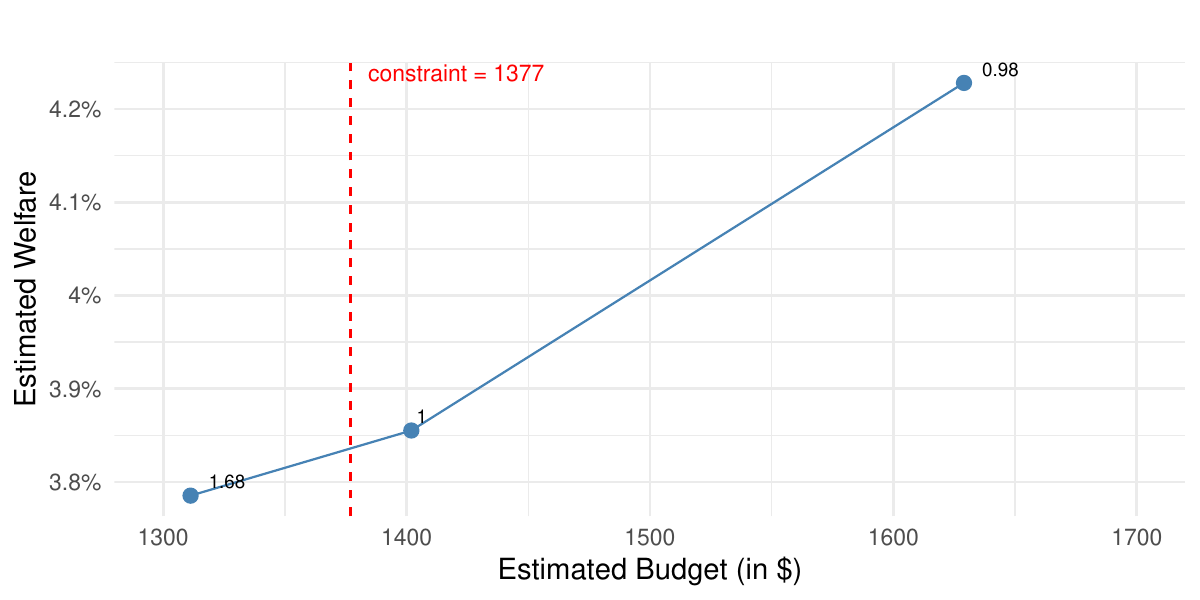}
    \caption{Frontier of $\widehat{g}_{\text{tradeoff}}$ across different penalty levels   $\overline{\lambda}$}
    \label{fig:frontier}
\end{figure}
\subsection{Computation}\label{appx:computation}
The policy class considered in this paper $g(x)=\mathbf1\{\beta'x\geq 0\}$ consists of linear eligibility score.  Therefore I follow \cite[Online Appendix C]{kitagawa_who_2018} to set up the problems of sample-analog rule (\ref{eq:EWM}) and trade-off rule (\ref{eq:hinge}) as a Mixed Integer Linear Programming (MILP), which is more efficient than solving them through a search over $x$. 
Due to the estimated constraint, I add  linear constraints to the version of \cite{kitagawa_who_2018}  and instead solve for the trade-off rule in two parts:
\begin{align}
\text{constrained} &  & \text{unconstrained}\label{eq:MILP discrete rewrite-bind}\\
\min_{\beta\in\mathbf{B},z\in\mathbb{R}^{n}}-\sum_{i=1}^{n}\tau_i^\ast z_i &  & \min_{\beta\in\mathbf{B},z\in\mathbb{R}^{n}}-\sum_{i=1}^{n}\tau_i^\ast z_i+\overline\lambda (\sum_{i=1}^{n}C_{i}^\ast z_i-n\cdot k)\nonumber \\
\text{ subject to }\frac{X_{i}'\beta}{\bar{C}_{i}}\leq z_{i}\leq1+\frac{X_{i}'\beta}{\bar{C}_{i}}, &  &\text{ subject to }\frac{X_{i}'\beta}{\bar{C}_{i}}\leq z_{i}\leq1+\frac{X_{i}'\beta}{\bar{C}_{i}},\nonumber \\
z_{i}\in\{0,1\}, &  & z_{i}\in\{0,1\},\nonumber \\
\sum_{i=1}^{n}C_{i}^\ast z_i <n\cdot k. &  & \sum_{i=1}^{n}C_{i}^\ast z_i \geq n\cdot k.\nonumber 
\end{align}
where constants $\bar{C}_{i}= \sup_{\beta\in B} |X_{i}'\beta|$. The additional binary parameters $(z_1,\dots,z_n)$  replace the policy functions $g(X_i)=\mathbf1\{\beta'X_i\geq 0\}$.  The sample-analog rule $\widehat{g}_{\text{sample}}$ is the constrained solution on the left hand side. The trade-off rule $\widehat{g}_{\text{tradeoff}}$ is the better solution across constrained and unconstrained policies in (\ref{eq:MILP discrete rewrite-bind}).

To modify the sample-analog rule to control the probability of selecting infeasible policies by $\gmistake$ as proposed in Section \ref{sec:Controlling-the-Mistake}, I introduce quadratic constraints to (\ref{eq:MILP discrete rewrite-bind}).
First, for a conventional level
of $\alpha$, constructing the tightened constraint $\hat{\mathcal{G}}_{\alpha}$ requires
an estimate for the critical value $c_{\alpha}$, the $\alpha\%$-quantile
from $\underset{g\in\mathcal{G}}{\inf}\ \frac{G_{P}^{B}(g)}{\Sigma_{P}^{B}(g,g)^{1/2}}$,
the infimum of the Gaussian process $G_{P}^{B}(\cdot)\sim\mathcal{GP}(0,\Sigma_{P}^{B}(\cdot,\cdot))$.
In practice, I construct a grid on $\mathcal{G}$ as
\begin{equation}
\mathcal{\widetilde{G}}=\left\{ g(x)=\mathbf{1}\{\text{income}\cdot1\{\text{numchild}=j\}\leq y_{j}\}:j\in\{0,1,\geq2\},y_{j}\in\{0,50,100,\dots,500\}\right\} \label{eq:coarse grid}
\end{equation}
for characteristics $x=(\text{income},\ \text{numchild})$. This grid thus consists
of income thresholds every 50\% of the federal poverty level, and
the thresholds can vary with number of children. I then approximate
the infimum over infinite-dimensional $\mathcal{G}$ by the minimum
over $\mathcal{\widetilde{G}}$ with estimated covariance i.e. $\underset{g\in\widetilde{\mathcal{G}}}{\min}\ \frac{\widetilde{h}(g)}{\widehat{\Sigma}^{B}(g,g)^{1/2}}$.
Here $\widetilde{h}(\cdot)\sim\mathcal{\mathcal{N}}(0,\widehat{\Sigma}^{B})$
is a Gaussian vector indexed by $g\in\widetilde{\mathcal{G}}$, with
$\widehat{\Sigma}^{B}$ is sub-matrix of the covariance estimate $\widehat{\Sigma}^{B}(\cdot,\cdot)$
for $g\in\widetilde{\mathcal{G}}$. Based on 10,000 simulation draws
I estimate $c_{\alpha}$ to be -2.47 for $\alpha=5\%$. The validity of this approximation
is given by the uniform consistency of the covariance estimator under
Assumption \ref{cond:known ps Donsker}.

Second, to select policies in $\hat{\mathcal{G}}_{\alpha}$ is equivalent to invert a test: $\frac{\sqrt{n}\left(\widehat{B}_{n}(g)-k\right)}{\widehat{\Sigma}^{B}(g,g)^{1/2}}\leq c_{\alpha}$. This test inversion can be written as two constraint (note that $c_{\alpha}<0$). The first constraint
is linear: $\sum_{i=1}^{n}C_i^\ast z_i \leq n\cdot k$ and the second is $n\cdot\left(\widehat{B}_{n}(g)-k\right)^{2}>c_{\alpha}^{2}\cdot\widehat{\Sigma}^{B}(g,g)$
which translates to a quadratic constraint
\begin{align*}
     &  n\left(\frac{\sum_{i=1}^{n}C_i^\ast z_i}{n}-k\right)^{2}>c_{\alpha}^{2}\left(\left(\frac{1}{n}\sum_{i=1}^{n}(C_i^\ast)^{2} z_i\right)-\left(\frac{\sum_{i=1}^{n}C_i^\ast z_i}{n}\right)^{2}\right)\\
   \Rightarrow  & \frac{c_{\alpha}^{2}}{n}\sum_{i=1}^{n}(C_i^\ast)^{2} z_i+2k\sum_{i=1}^{n}C_i^\ast z_i-\left(\frac{1}{n}+\frac{c_{\alpha}^{2}}{n^{2}}\right)(\sum_{i=1}^{n}C_i^\ast z_i)^2<nk^{2}
\end{align*}
 
\end{document}